\documentclass[a4paper,10pt,onecolumn]{article}
\usepackage[numbers]{natbib}

\usepackage[utf8x]{inputenc}
\usepackage[margin=2.5cm]{geometry}
\usepackage{amsthm}
\usepackage{amsmath}
\usepackage{amssymb}
\usepackage{bm}
\usepackage{amsfonts}

\usepackage{txfonts}

\usepackage{graphicx}

\newcommand{\E}{{\mathbf{E}}}
\newcommand{\R}{{\mathbb{R}}}
\newcommand{\Rd}{{\mathbb{R}^d}}

\newcommand{\I}{\mathbf{1}}
\newcommand{\bessel}{\mathcal{J}}
\newcommand{\Sd}{S^{d-1}}
\newcommand{\cov}{{\mathrm{\bf Cov}}}
\newcommand{\var}{{\mathrm{\bf Var}}}
\newcommand{\rel}{{\mathrm{Rel}}}
\newcommand{\x}{{\mathbf{x}}}
\newcommand{\y}{{\mathbf{y}}}

\newcommand{\two}{^{(2)}}
\newcommand{\rtwo}{\rho^{(2)}}

\newcommand{\kap}[1]{\rtwo(#1)-\lambda^2}
\newcommand{\Fou}{\mathcal{F}}

\renewcommand{\k}{\freq}
\newcommand{\freq}{{k}}
\newcommand{\angfreq}{{\omega}}

\newcommand{\ntapers}{{M}}
\newcommand{\itapers}{{{m}}}
\newcommand{\itapersalt}{{{m'}}}
\newcommand{\subtapers}{{\tilde m}}
\newcommand{\nsubtapers}{{\tilde M}}

\newcommand{\bside}{l}
\newcommand{\besorder}{{\nu}}

\usepackage{xcolor}

\newtheorem{Definition}{Definition}[section]
\newtheorem{Theorem}{Theorem}[section]
\newtheorem{Lemma}{Lemma}[section]
\newtheorem{Corollary}{Corollary}[section]
\newtheorem{Proposition}{Proposition}[section]

\begin{document}
%
\title{What is the Fourier Transform of a Spatial Point Process?}
%
%
%

\author{Tuomas Rajala,
        Sofia~C.~Olhede, Jake P. Grainger
        and~David J. Murrell,
\thanks{T. Rajala is with Natural Resources Institute Finland, 00790 Helsinki, Finland.}
\thanks{S. C. Olhede and J. P. Grainger are with the Institute of Mathematics, EPFL, Lausanne, Switzerland.}%
\thanks{D. J. Murrell is with the Research Department of Genetics, Evolution and Environment,
Centre for Biodiversity and Environment Research, University College London, UK.}
}

\maketitle

\begin{abstract}
This paper determines how to define a discretely implemented Fourier transform when analysing an observed spatial point process. To develop this transform we answer four questions; first what is the natural definition of a Fourier transform, and what are its spectral moments, second we calculate fourth order moments of the Fourier transform using Campbell's theorem. Third we determine how to implement tapering, an important component for spectral analysis of other stochastic processes. Fourth we answer the question of how to produce an isotropic representation of the Fourier transform of the process. This determines the basic spectral properties of an observed spatial point process.
\end{abstract}

\paragraph{Keywords:} Spectral density function, Spatial point processes, Debiased periodogram, Tapering.

\section{Introduction}\label{sec:intro}
Spatial point processes are an important form of observational data structure, for example in forest ecology~\cite{waagepetersen2016}, communications networks~\cite{novlan2013analytical,li2015statistical},   epidemiology~\cite{Gatrell1996}, social science~\cite{Tita2010}, pharmacology~\cite{habel2017athree} and medicine~\cite{anderson2018hierarchical} amongst many other application fields. Understanding the properties of a point process can be approached from many different perspectives~\cite{Diggle2013}, and the aim of this paper is to determine how to extract the frequency (or scale/wavenumber) behaviour of an observed point process, as well as connect that to its spectral representation.

The most common assumption used when analysing spatial processes is that of spatial homogeneity (or stationarity). This means that if you shift all of the observations by a fixed spatial shift, the distribution of those observations does not change from the distribution of the original sample; and if the observational window is changed, then understanding the full set of observations remains tractable. The consequence
of this probabilistic invariance in distribution is the spectral representation of a stochastic process. The existence of the spectral representation of a stochastic process means the Fourier transform fully characterises the second order properties of that stochastic process. The Fourier transform also characterises the second order properties of a point process, see e.g.~\cite{Daley2003}. Yet unlike random fields and time series, spectral analysis of point processes is still in its infancy, see also~\cite{Bartlett1963,Bartlett1964,mugglestone1996practical}, and critically, the digital processing of a point process remains fully outstanding. 
{Recent interest in Fourier features in machine learning based approaches for point patterns such as ~\cite{ilhan2020modeling,john2018large} show the potential of using Fourier based information as features for estimation and detection. The work in this manuscript both establishes what Fourier features to calculate for homogeneous processes from a sampling perspective, and their large but finite sampling area properties, just like~\cite{guillaumin2022debiased} determined the large but finite properties of Fourier representations for random fields. We note that machine learning techniques~\cite{ton2018spatial,lazaro2010sparse} utilised Fourier features in learning algorithms for parametric models before precise and general understanding was established.}

Thus despite inspirational work by the aforementioned authors, several key aspects of discrete spectral analysis lie unresolved when applied to observed point processes. In particular, 1)
given the Fast Fourier Transform is unavailable (unlike the case for regularly sampled time series or random fields), how do we efficiently implement the calculation of whichever discrete Fourier transform we chose to define? 2) How does that discrete transform relate to the underlying spectral measure of the process? 3) Can the discrete transform be improved by linear operations such as tapering as is the case for other stochastic processes? 4) If yes, how do we then select a taper? 5) How do we define a radial or isotropic transformation to simplify the representation of the spatial point process? There are many, many more questions to answer before bringing the spectral analysis of spatial point processes to the sophistication of the analysis of random fields, but these currently unresolved questions are positioned as the first
hurdles to overcome in the path of whomever wishes to develop more sophisticated spectral analysis techniques. Once the fundamental properties of the spectral representation of a stationary point process have been developed and understood, results could be extended to inhomogeneous point processes, but that is not the aim of this work.

{
To put this in the context of information theory; understanding how to compute the Fourier transform from a spatially compact signal had already sparked a lengthy debate from the first introduction of tapering~\cite{thomson1982spectrum}, and selection of optimal tapers using localization operators~\cite{slepian1983some,slepian1964prolate,connes2022UV}. While some aspects of tapering in the context of point patterns is reflected by topics in mature (multi)dimensional tapering and spectral estimation~\cite{abreu2017mse,karnik2019fast,karnik2022thomson,guillaumin2022debiased}, understanding how to adapt these ideas to spatial point processes remains fully outstanding, and must be answered using our understanding of irregularly sampled processes~\cite{chou1992spectral}. Understanding and characterising point patterns is naturally a problem of general interest, e.g.~\cite{clark2020local,clark2022cramer}.

Why then do we want to understand the spectral information of a point pattern? The spectral information of any stochastic process is the same as the spatial information as there is a bijection between the two, but the former is directly showing the scale of variation of the process. To illustrate, let us simulate a complex point pattern and compare its spectral content with its usual spatial representation. Log-Gaussian Cox processes~\cite{moller1998log}, popularly used in applications, are marginally stationary processes whose patterns are more variable than models that do not depend on latent random variables. The pattern (a single realization), and its estimated spectrum (the subject of this paper) are shown in Figure~\ref{fig:fig-12}. 
We have chosen a process which is isotropic, a choice we have made so that its usual spatial representation and spectrum is isotropic also, and thus easier to interpret.
In the left hand subplot we see the pattern itself, in the middle subplot we show a common spatial summary, the pair correlation function (pcf) comparing both the estimate and truth, and in the right hand subplot we show the spectrum on a decibel scale ($10\log_{10}$), both estimated and true. While the raw pattern (left) looks mainly unremarkable, naturally its periodicity is present in the pcf (middle), but is more obvious in the spectrum (right). We can even estimate the periodicity as having frequency 1. This is immediately recoverable directly from its estimated spectrum.
}

\begin{figure}
    \centering
    \includegraphics[width=\linewidth]{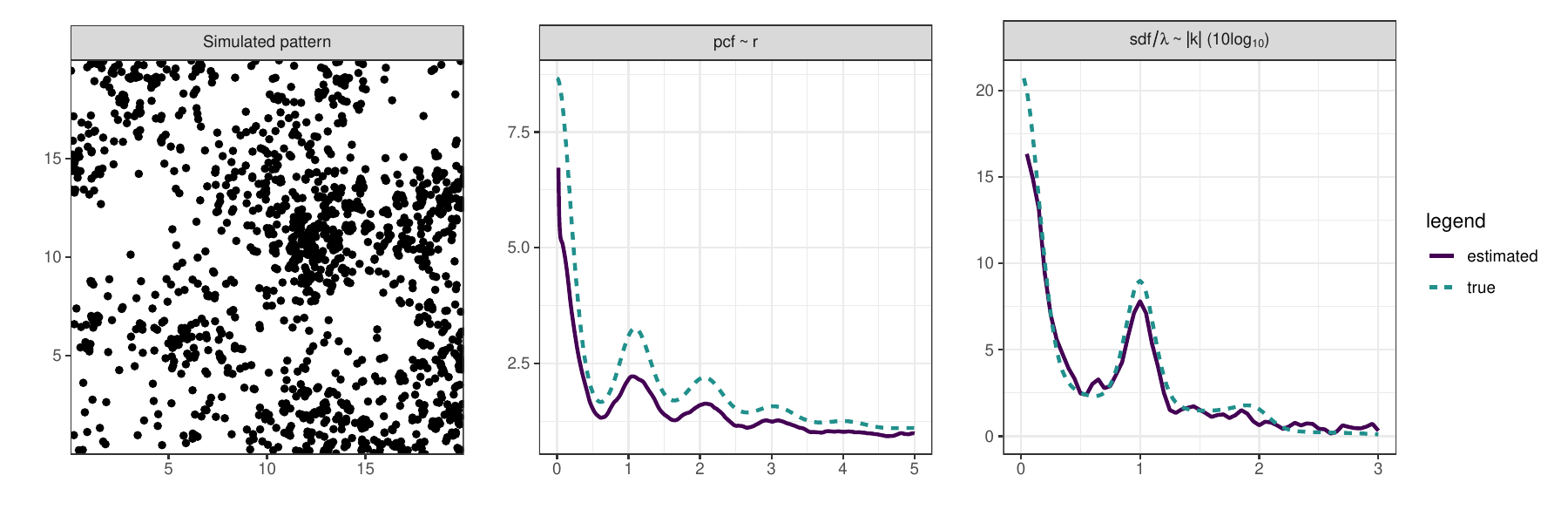}
    \caption{A realisation of a log-Gaussian Cox process (left), the estimated against true product density (middle), and estimated against true spectral density function on a decibel scale (right).}
    \label{fig:fig-12}
\end{figure}


{
To get to the point where we can estimate 
the spectrum of a point pattern, we need to 
answer the five questions posed earlier.
We will therefore start by determining how to compute Fourier transforms of point processes, discussing the question of how to form direct spectral estimators~\cite{Percival1993} in this setting.  This will be answered in terms of} the mean and (co)-variance of the different discrete Fourier transforms that we define. The first surprising result is that the natural direct spectral estimator is biased unless it is mean corrected (this bias was noted by~\cite{diggle1987nonparametric}, but only corrected in an ad-hoc fashion). 
We show how to perfectly eliminate this bias. 

Second, to understand the variance and covariance of a direct spectral estimator, we need to calculate a fourth order moment of our choice for the discrete spectral transform. This is complicated since Campbell's formula is required to derive its form for a point process. This approach can be contrasted with using Isserliss' theorem to determine higher order moments for a Gaussian Process~\cite{isserlis1918ona}. Calculating covariances between direct spectral estimators gives us a way to determine what grid of wavenumbers the spectral estimator is uncorrelated at, {and thus where to evaluate it}.

Third, tapering~\cite{thomson1982spectrum} is required to define the direct spectral estimator and avoid leakage; but how to taper a point pattern is an open question. Most tapers are defined for regularly spaced stochastic processes; but there are some continuous tapers. We choose to use the continuous tapers of Riedel and Sidorenko~\cite{Riedel1995} to construct the direct spectral estimator, and use multitapers to reduce the variance of such estimators. {
We will determine how to implement both. }

Fourth, many spatial models are radially symmetric, or isotropic. This prompts us to describe how to construct an isotropic representation of the spectral content of a point processes. There are two possible approaches, namely using the Bessel function~\cite{diggle1987nonparametric} to do the transformation; or isotropizing a general spectral representation~\cite{Durran2017}. 
We describe how to define the appropriate tapering in this instance, inspired by other isotropic harmonic decompositions~\cite{jacovitti2000multiresolution}.

Calculating the Fourier transform of a point process is useful beyond the point pattern itself. If we sample a stochastic process with a point process then the spectrum of a point pattern is equally important to that of the stochastic process to determining the spectrum of the observations, e.g.~\cite{LiiMasry1994}. For this reason; the study of the frequency information of an observed process is of interest in its own right, and beyond, and we will discuss the implications that arise for point processes. 

In Section~\ref{sec:not} we define the basic concepts required by the second order representation of a point pattern. In Section~\ref{sec:dirspectral} we discuss tapering of the DFT and the direct moments of the DFT. 
{
In Section~\ref{sec:expbilin}  we determine how to form a spectral density estimator from our understanding of how to form the DFT and its mean and variance. The next step is to study the  covariance structure of such spectral density estimators, see
Section~\ref{sec:varbilinear}. We then use that understanding to define how to do linear smoothing in Section~\ref{sec:linear}.
For multidimensional spectral representations, the full anisotropic structure can be hard to interpret; we therefore propose 1d isotropic or isotropised summaries to characterise such processes, in Section
\ref{sec:isotropy}. We then present a simulation study in Section~\ref{sec:simulations}, and demonstrate the potential of the proposed methodology in an example from forest ecology in Section~\ref{sec:BCI}. We conclude with a discussion in Section~\ref{sec:conclusions}.}


\section{Notation and Definitions}\label{sec:not}
{
In this Section we shall give the basic notation necessary for the spectral analysis of a spatially homogeneous point process.}
We assume that we observe a spatially homogeneous point process $X$ with intensity $\lambda=\rho^{(1)}>0$ and with $\rho^{(2)}(z)$ defined as the (second order) product density of $X$. Assuming $X$ is spatially homogeneous means  that $\rho\two(z)$ only has one argument ($z\in {\mathbb{R}}^d$, namely the spatial shift) rather than depending on two local spatial variables (say $x$ and $y$ rather than just $z=x-y$). We assume that $X$ is a simple point process on ${\mathbb{R}}^d$, meaning no duplicate points are allowed. We take $d=2,3,\dots$, and so exclude the case of $d=1$, which is relatively well-studied, see for example the references in \cite{daley1971weakly,jowett1972prediction,vere1974elementary}. For a (Borel) set $A$ in ${\mathbb{R}}^d$, we define $|A|$ for the volume of $A$ and $N(A) = |X\cap A|$ for the number of points of $X\subset {\mathbb{R}}^d$ in $A$. For set $A$ and $z\in \mathbb{R}^d$, denote the $z$ shifted set by $A_z=\{x+z: x\in A\}$. We denote by $B$ any subset of $ {\mathbb{R}}^d$
where points are observed (or equivalently our observation domain). For any complex number $z$ we use the superscript $z^\ast$ to denote the conjugate of $z$.

In complete analogue with a random field we shall define the spectral density of $X$ as the Fourier transform of the complete covariance function of $X$. This is the standard approach, and was first proposed by~\cite{Bartlett1964}.
The complete covariance function of a stationary point process is therefore
\begin{equation}
\label{covy}
\gamma(z) \equiv \lambda\delta(z) + \rho\two(z) - \lambda^2,\quad z\in \Rd,
\end{equation}
where $\delta(\cdot)$ is the Dirac delta function.   
The spectral density function (sdf)~\cite{Bartlett1964} of the point process $X$ is then defined as the Fourier transform of the complete covariance function:
\begin{equation}
\label{def:spectrum}
f(\k) \equiv \Fou[\gamma](\k)\ = \int_{{\mathbb{R}}^d}e^{-2\pi i\k\cdot z}\gamma(z)\,dz\  =\  \lambda + \int_{\Rd}e^{-2\pi i\k\cdot z}[\rho^{(2)}(z)-\lambda^2]dz, \qquad \k
\in {\mathbb{R}}^d.
\end{equation}
This representation is in direct analogy with the corresponding spectral decomposition of a random field or a time series. 
The symbol $\Fou$ denotes the Fourier transform, with $i$ as the imaginary unit. We refer to the argument $\k$ as the ``wavenumber'' rather than frequency to acknowledge its multi-dimensionality. This is a natural choice of a Fourier transform in analogy to the analysis of random fields~\cite{stein2012interpolation}. To avoid additional constants in the inverse Fourier transforms,  
we parameterise the Fourier transform with wavenumber instead of the customary angular frequency $\angfreq = 2\pi \k$ used by~\cite{Bartlett1964} and some others.

For a time series or random field there are a number of spectral results, ranging from the spectral representation theorem (I)~\cite[p130]{Percival1993} and to Bochner's (Herglotz) theorem (II)~\cite{Percival1993}. 
Both these results  are not exactly available in the point process setting, but for a time series $X_t$ we can note them:
\begin{align}
X_t&\overset{(I)}{=}\mu+\int_{-\frac{1}{2}}^{\frac{1}{2}} e^{2 \pi i\freq t}dZ(\freq), \quad \var\{dZ(\freq)\}\overset{(II)}{=}f(\freq)d\freq, \label{Xt}
\\
\gamma(\tau)&\overset{(III)}{=}\int_{-\frac{1}{2}}^{\frac{1}{2}}
f(\freq)e^{2\pi i \freq \tau}\,d\freq,\quad f(\freq)\overset{(IV)}{=}\sum_\tau \gamma(\tau) e^{-2\pi i\freq \tau}. \label{gamt}
\end{align}
Equation~\eqref{Xt} decomposes $X_t$ into random contributions associated with each frequency $\freq$. 
The covariance in Eqn~\eqref{gamt} is also decomposed into weighted frequency contribution. Both of these observations are important for the interpretation of the Fourier representation of $X_t$. ``Important'' contributions correspond to $f(\freq)$ being considerably larger relative to other  $f(\freq')$ as in that scenario $\var\{dZ(\freq)\}$ is bigger than $\var\{dZ(\freq')\}$ and therefore $|dZ(\freq)|$ likely to be larger than $|dZ(\freq')|$.
Spectral representations of point processes are also discussed in~\cite[Ch.~8]{Daley2003}. We can yet again decompose the (complete) covariance 
in terms of a spectral representation as in (II), and it takes the form in terms of a $d$-dimensional Fourier transform of
\begin{equation}
f(\freq)=\int e^{-2\pi i \freq\cdot z}\gamma(z)\;dz=\lambda+\int e^{-2\pi i \freq\cdot z} \left\{\rho^{(2)}(z)-\lambda^2\right\}\;dz.
\label{weightav}
\end{equation}
We refer to $f(\freq)$ as the \underline{spectrum} of $X$. We note that the Fourier transform can be inverted to yield the equality of (an analogy of III):
{{\begin{equation}
    \rho^{(2)}(z)=\lambda^2+\int_{{\mathbb{R}}^d} f(\freq) e^{2\pi i \freq\cdot z}\,d\freq.
    \label{eqn:rhobyscale}
\end{equation}}}

We assume the spectrum is the primary object of interest in our study of point patterns, as the covariance $\gamma(z)$ can be fully determined from it, and the covariance characterises the point process. We see directly from~\eqref{weightav}
that $\rho^{(2)}(z)-\lambda^2$ and {\em not} the complete covariance $\gamma(z)$ is playing the role of a time series covariance.
From a nonparametric perspective the spectrum characterises what wavenumbers are more notable (distinct) in the process. As~\eqref{weightav} specifies a constant level $\lambda$ to all wavenumbers, the notable (distinct) wavenumbers are determined from the Fourier transform of $\rho^{(2)}(z)-\lambda^2$.


We have to exercise some caution when interpreting the Fourier transform as a bijective transform. Yes, it should contain the same information about scale, but the meaning of the word ``scale'' will be different from a simple spatial understanding. In time or space the notion is associated with the support of $\gamma(z)$ or $\rho^{(2)}(z)$. Being supported over a wavenumber $\freq$ means variation over scales $1/\|\freq\|$ is present in the correlation function, or to represent the variability in $\gamma(z)$ we need wavenumbers $\|\freq\|$ present in the spectral representation. The function $\rho^{(2)}(z)-\lambda^2$ is often approached in terms of what scales it is non-zero at, but the variation in the function can also be associated with long or short scales. So for a covariance function, we now have two notions of what it means to possess scales $\|\freq\|$; either $\gamma(\mathbf{u}/\|\freq\|)$ is non-zero for unit vector $\mathbf{u}$ or $\gamma(z)$ is variable (changing) and the scale of the change is $\|\freq\|$. Imagine observing a sinusoid with period $k$. The function will hit unity at a regular set of $z$ values, and so it will be supported at those $z$ values, but also its Fourier transform will be supported at $1/k$.

This is clear from Figure~\ref{fig:fig-1} where covariance is present at smaller or medium scales smoothly for the clustered Thomas processes, or repelled for the Mat\'{e}rn hard-core processes. As the Mat\'{e}rn process' complete covariance function is discontinuous, its Fourier transform is supported over all scales (due to the Heisenberg-Gabor uncertainty principle~\cite{cohen1995time}). This can be deduced from~\eqref{eqn:rhobyscale}. To reproduce the discontinuity we need all scales in the Fourier representation.

\begin{figure}[h]
	\centering
	\includegraphics[width=0.9\linewidth]{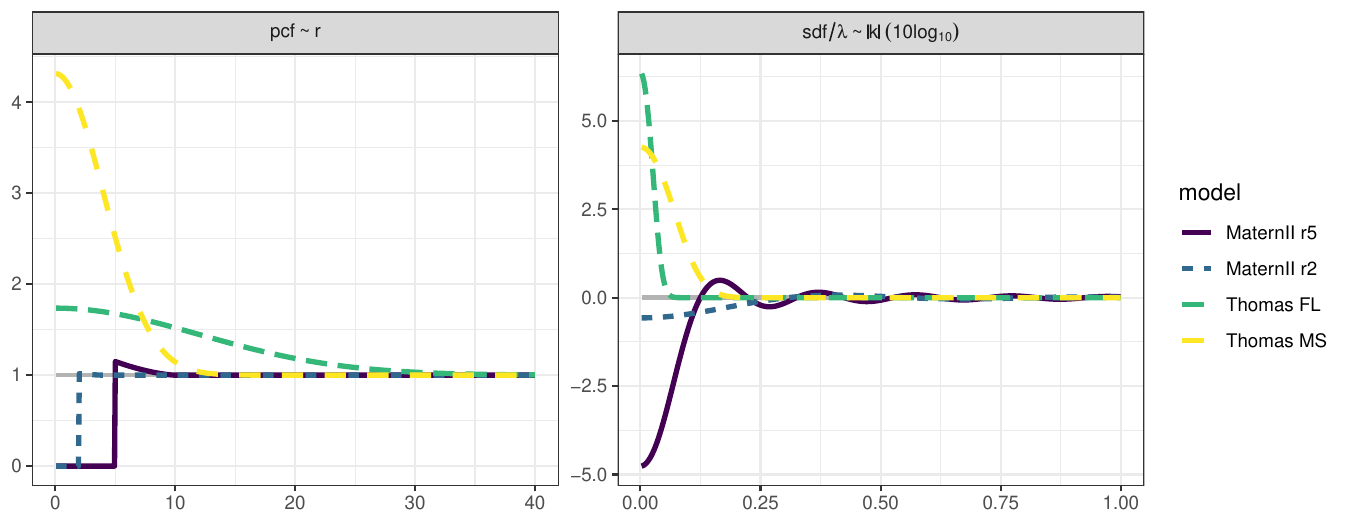}
	\caption{Pair correlation $pcf = \rho^{(2)}/\lambda^2$ (pcf, left) and corresponding scaled spectral density function $f/\lambda$ (sdf/$\lambda$, right) for two stationary and isotropic non-Poisson point process models, two variants each, with known pcfs. For the Thomas processes the sdf is given in the text. For the regular Mat\'ern II process the sdf was numerically approximated using the Hankel transform (cf. Section 7). See details of the processes and the variants in Section~\ref{sec:simulations}.}
	\label{fig:fig-1}
\end{figure}



We see from~\eqref{weightav} that not all wavenumbers are given equal weighting. First all wavenumbers are given an equal weighting $\lambda$ and then the Fourier transform of $\rho^{(2)}(z)-\lambda^2$ determines the wavenumbers that are up--weighted (added) or down--weighted (subtracted) relative to the overall level of $\lambda$. We then observe what wavenumbers are important to the representation of the point process, which gives more information, conveniently decomposed on a scale--by--scale manner.

For a random field or a stochastic process in $d$--dimensions a few cartoon characteristics are important. For a discretely sampled process in $\mathbb{Z}^d$ the simplest random process, white noise, is constant across wavenumbers yielding a spectrum  that takes the form of $\sigma^2$ on $[-\frac{\pi}{\Delta},\frac{\pi}{\Delta}]^d$, where $\Delta>0$ is the sampling period, and zero otherwise. For a point process
the choice of definition of the spectral density does not imply a decay because of the inclusion of the term $\lambda \delta(z)$ in space. However once we remove this term, we expect a decay of the remaining spectrum $f(\freq)-\lambda$ as   $\|\freq\|\rightarrow \infty$. Otherwise
~\eqref{weightav} would contain additional singularities.
Furthermore at $\freq=0$ we retain
\begin{equation}
f(0)=\lambda+\int \left\{\rho^{(2)}(z)-\lambda^2\right\}\;dz.
\label{weightav2}
\end{equation}
The second term in this expansion is not required to be zero, and for the Thomas process for example, it is not zero, 
as it takes the form of~\cite[p~55]{mugglestone1996practical}: 
{
\begin{equation}
f(\freq)=\lambda+\lambda\mu e^{-4\pi^2\|\freq\|^2 \sigma^2}
\end{equation}
where $\mu$ is the per-cluster expected point count and $\sigma$ is the Gaussian dispersal kernel standard deviation.
}
It is clear from this expression that we cannot arrive at a zero contribution due to the exponential.

As we shall be studying the moments of a point process, it is convenient to restate
Campbell's theorem \cite[Sec. 4.3.3]{SKM}, and we shall use this result multiple times. The  theorem applies to any measurable function $h:R^{nd}\mapsto R^+$ with $n=1,2,...$, and states that (assuming product densities of order $n$ $\rho^{(n)}$ are well defined for any given $n$)
\begin{equation}
\label{eqn:Campbell}
\E \sum_{x_1,...,x_n\in X}^{\neq}h(x_1,...,x_n)=\int_{{\mathbb{R}}^{nd}}h(x_1,...,x_n)\rho^{(n)}(x_1,...,x_n)dx_1...dx_n,
\end{equation}
where the summation is over distinct point tuples. Note that if the point pattern $X$ is stationary then $ \rho^{(1)}(x)=\lambda$ is constant for any $x$, and $\rho^{(n)}(x_1,...,x_n)=\rho^{(n)}(x_1-x_n, \ldots,x_{n-1}-x_n,0) $. 
With some abuse of notation, we shall use the same symbol for non-stationary as well as stationary product densities, where the latter function has $n-1$ arguments for an $n$th order product density, e.g.\ we write $\rho^{(n)}(z_1,...,z_{n-1})=\rho^{(n)}(z_1, \ldots,z_{n-1},0)$. 

We define a new parameterisation to capture how the process's $f(\freq)$ differs from that of a Poisson process. We define the $n$th deviation of the $n$th order product density as
\begin{equation}
\label{eqn:excess}
    \widetilde{\rho}^{(n)}(z_1,\dots,z_{n-1})=\frac{{\rho}^{(n)}(z_1,\dots,z_{n-1})-\lambda^n}{\lambda^n},\quad n=2,3,\dots ,
\end{equation}
where the argument of the product density $z_l\in {\mathbb{R}}^d$ for $l=1,\dots, n-1$.
For a Poisson process these $n=2,3,\dots$ deviations from Poissonianity 
will all be identically zero. By understanding deviations from Poisson behaviour, we get greater insight into the underlying process of interest.
For completeness we also define the Fourier transform of the deviations:
\[ \widetilde{f}^{(n)}(\freq)={\cal F}\left\{\widetilde{\rho}^{(n)}(z)\right\},\]
where if $n=2$ we suppress the superscript.
With this definition we find that 
\begin{equation}
\label{eqn:fourierexcess}
    \widetilde{f}(\freq)=\frac{f(\freq)-\lambda}{\lambda^2},\quad w \in{\mathbb{R}}^d,\quad f(\freq)=\lambda+\lambda^2\widetilde{f}(\freq).
\end{equation}
Thus in a sense, $\widetilde{f}(\freq)$ encapsulates the deviation of the function from a constant spectral density of $\lambda$ via the term $\lambda^2\widetilde{f}(\freq)$.

Already \cite[p.~337]{Daley2003} discusses some differences in spectral representation of time series versus that of point processes. We would argue that decay of the spectrum is still reasonable to assume once $\lambda$ has been subtracted (so the decay of $\widetilde{f}(\freq)$ is reasonable to assume for large magnitude wavenumbers).  Having established the theoretical spectral description of point processes, we now turn to their sampling properties.


\section{Direct Spectral Summaries of Point Patterns}\label{sec:dirspectral}
{
In this section, we shall revisit the possible definitions of a spectral estimator for a point pattern. We shall start by defining a direct Fourier transform {and} taper this definition to ameliorate edge effects, resulting in families of spectral estimators. We also discuss other choices of spectral estimators used in the literature already.}
To see the spectral characteristics of $X$ we start from what is known as Bartlett's periodogram estimator~\cite{diggle1987nonparametric} based on observing a point pattern in region $B\subset {\mathbb{R}}^d$,  written as
\begin{align}
\label{bartlett2}
I_0(\k) &:=\hat{\lambda}  + |B|^{-1}\sum_{x, y\in X\cap B}^{\neq} e^{-2\pi i\k\cdot (x-y)} ,\quad \k\in  \Rd,
\end{align}
where we have set $\hat{\lambda} =|B|^{-1}|X\cap B|$. This definition uses all the data (or points) available to us, $\{x:\;x\in X\cap B\}$, but is simply {\em a possible choice} amongst all possible direct spectral estimators.  Normally a direct spectral estimator is one which is \underline{bilinear} in the DFT of the data, see for example the discussion in~\cite[Ch 5--6]{Percival1993}. We recall a bilinear form is a function 
$D(\x,\y)$ satisfying in the first argument $D(\x_1+\x_2,\y)=D(\x_1,\y)+D(\x_2,\y)$, and equivalently in the second argument. 
Bilinear forms have been thoroughly discussed in signal processing for the analysis of time series
~\cite{loughlin1993bilinear}. As a point process consists of locations the best we can hope for in terms of bilinearity will be in terms of \underline{sesquilinearity} of the Fourier transform as the point locations will appear in the argument of the complex exponential. Sesquilinearity of $D(\x,\y)$ simply generalises a bilinear form to the Hermitian symmetry, additionally requiring $D(\x,\y)=D(\y^\ast,\x)$ as well as $D(\x,\y)$ satisfying $D(\x_1+\x_2,\y)=D(\x_1,\y)+D(\x_2,\y)$. 

In time series the bilinear form has been chosen to ensure that spectral estimators are real-valued, and often non-negative though some bilinear estimators are not,  see e.g. Guyon's spectral estimator~\cite{guyon1982parameter}. 
We define the tapered DFT of a point process $X$ for a specified general (square) integrable function $h(x)$ (referred to as a `data taper' by ~\cite{Percival1993}) with Fourier transform $H(\k)$ and unit norm (i.e. $\|h\|_2=1$), to be 
\begin{equation}
J_h(\k):=\sum_{x \in X\cap B} h(x) e^{-2\pi i\k\cdot x},\quad \k\in {\mathbb{R}}^d.
\label{taper1}
\end{equation}
Spectral estimators in time series are bilinear in the observed real-valued  process $X_t$ 
and sesquilinear in its Fourier transform $J_h(\k)$. We shall still require that the form of the estimator is sesquilinear in the DFT of the point process, but the estimators will not be bilinear in $X$, the point pattern, as this will not be possible to achieve. We shall also define
\[\widetilde{J}_h(\freq)=J_h(\freq)-\lambda H(\freq).\]
Because $x$ appears in the argument of the complex exponential, we cannot define an estimator that is bilinear directly in the data. In practice also one has to decide what taper function to use. For 1D point processes tapering has been used~\cite{cohen2014multi}, but also multitapering has been used on interpolated data~\cite{das2020multitaper}. Here, we do not interpolate, do not implement localised analysis, and do not implement analysis in 1D. It is more complex to implement interpolation in higher dimensions, as they are not orderly, unlike a time series time argument.

Based on $J_h(\k)$ the natural spectral estimator becomes its modulus square or
\begin{equation}
I_h(\k) :=\left|J_h(\k)\right|^2=\sum_{x\in X\cap B}h(x)h^*(x) + \sum_{x,y\in X\cap B}^{\neq}h(x)h^*(y)e^{-2\pi i\k\cdot (x-y)}.
\label{Ih}
\end{equation}
If we take $h(z)=h_0(z)=|B|^{-\frac12}\I(z\in B)$ then we recover Bartlett's periodogram in~\eqref{bartlett2} from~\eqref{Ih}. If we do not take the $h_0(z)$ taper then Bartlett's periodogram is not exactly recovered.
The point of using a general function $h(z)$ 
is that abruptly ending the inclusion of points when we leave the region $B$ causes ripples in wavenumber, and therefore a worse estimation of the spectrum, as is also the case of time series when not using a taper~\cite{Percival1993}. We study the whole {\em family of estimators} $I_h(\k)$, where a new member of the family is defined for each choice of $h$. We also note that~\eqref{Ih} is a direct spectral estimator~\cite[p.~207]{Percival1993}. Multidimensional tapers are available to us~\cite{hanssen1997multidimensional,simons2011spatiospectral}, and can be pre-computed. Our choice of tapering corresponds to using continuous tapers evaluated at random locations.
We have deduced an asymptotic distribution for $J_h(\k)$
using~\cite{biscio2019general} but this result cannot be applied to $I_h(\k)$ even if it superficially may seem to be of the correct form.

Note that the sum in~\eqref{Ih} cannot be evaluated in a computationally efficient manner, unlike the DFT, as the locations $\{x\}$ are \underline{not} regularly spaced, which is unfortunate, but unavoidable. Finally~\eqref{Ih} is  bilinear in 
$J_h(\freq)$ but it is not bilinear in $X$, unlike the commensurate expressions for the DFT of time series and random fields.

For completeness we also note the isotropic estimator of
\cite[Eqn.~3.3]{diggle1987nonparametric} in 2D, namely
\begin{align}
\label{diggle}
\bar{I}_0(\|\k\|) &:=\hat{\lambda}+\frac{1}{|B|}\sum_{x,y\in X\cap B}^{\neq}
\bessel_0\left(2\pi\|\k \| \|x-y\|\right),
\end{align}
where $\bessel_0(x)$ denotes the Bessel function of order 0, specified in Section \ref{sec:isotropy}. The $d$-dimensional extension is given later in equation~\eqref{diggleD}.
$\bar{I}_0$ is less clearly bilinear in the data, but if we start from the estimator $I_0$ of~\eqref{bartlett2} and average it over orientations analytically, then we arrive at this form. We therefore with a slight abuse of our terminology also refer to it as a bilinear estimator. 
There is one additional correction made by~\cite{diggle1987nonparametric}. As we shall see, for low wavenumbers there is a bias inherent in~\eqref{diggle}. To address this problem~\cite[Eqn.~3.4]{diggle1987nonparametric} suggests taking for some choice of lower bound $t_0>0,$
\begin{equation}\label{diggle4}
I_D(\|\k\|)=\left\{ \begin{array}{lcr}
\bar{I}_0(t_0) & {\mathrm{if}} & \|\k\|\leq t_0\\
\bar{I}_0(\|\k\|) & {\mathrm{if}} &  \|\k\| > t_0
\end{array}\right. .
\end{equation}
The authors of \cite{diggle1987nonparametric} suggest taking $t_0$ so that $\bar{I}_0(t_0)$ is at a minimum, and also suggest smoothing the estimated 
${I}_D$, which we clarify in Section~\ref{sec:linear}. The authors also discuss iterated methods of bias correction in their Section~5, and correcting biased estimators of the correlation. 

Note also that we have modified the estimator in~\eqref{diggle} to 
divide by 
$|B|$ rather then the observed number of points in the region, $N(B)$, as do for example~\cite{diggle1987nonparametric}, as it is much preferable not to have a random denominator.  We shall discuss the usage of $I_0(\cdot)$ versus $I_D(\cdot)$ further on in Section~\ref{sec:isotropy}, and the implications of when the process is truly isotropic or not. 

As can be surmised from~\eqref{Ih} the expectation of the general estimator $\E\left\{ I_h(\k)\right\}$ is a convolution between the Fourier transform of the observation window $B$ and the true spectrum, and if $h(z)$ does not go to zero nicely at the boundary of $B$ then the periodogram becomes quite complicated in terms of its expectation. 
Inspection of~\eqref{Ih} also raises a second problem, namely that there is a constant contribution 
$|B|^{-1}|X\cap B|$ which does not give any wavenumber specific information, and is correlated between wavenumbers. This term is new to time series and random fields, if not new to the expectation of the periodogram of randomly sampled stochastic processes, see~\cite{LiiMasry1994,chou1992spectral}.



Should we choose to taper isotropically in 2D then we instead use an isotropic taper $h_I(\cdot)$
\begin{equation}
\label{diggle2}
    \bar{I}_{h}(\|\k\|) :=\hat{\lambda}+\sum_{x,y\in X\cap B}^{\neq}
h_I\left(\|x-y\| \right) \bessel_0\left(2\pi \|\k \| \cdot \|x-y\|\right).
\end{equation}
Note that this cannot necessarily be constructed. Say $J_h(\cdot)$ with an isotropic taper would be made with the largest spherical domain and a compact support could be constructed, but what does that imply for $I_h(\cdot)$?
Radial tapers can be determined as described in~\cite{simons2011spatiospectral}, whether in 2D continuous space or discrete space. 
Equation~\eqref{diggle2} can be of course be extended into higher dimensions. In general we would have in dimension $d=1,2,3,4,\dots$
\begin{equation}
\label{diggleD}
    \bar{I}_{h}(\|\k\|) :=\hat{\lambda}+\sum_{x,y\in X\cap B}^{\neq}
h_I\left(\|x-y\| \right)  \bessel_{d/2-1}\left(2\pi\|\k \| \|x-y\|\right) \|x-y\|^{-(d/2-1)}.
\end{equation}

We can also use more than one taper, and will use a sequence of orthogonal functions 
$\{h_\itapers\}$~\cite{Percival1993}, that will be used to get a new estimator. We will assume they are all of unit energy and are all (pairwise) orthogonal. This means we will require $||h_\itapers||_2=1$ and 
\begin{equation}
\int_{{\mathbb{R}}^d} h_\itapers(z) h_\itapersalt(z)\,dz=\delta_{\itapers\itapersalt}.
\end{equation}
We shall study the properties of direct spectral estimators of the form of Eqn~\eqref{Ih}. This will help us to characterise the second order properties of the process $X$.
Most of time series analysis is based on discrete regular sampling for instance, and so most tapers are designed for that scenario. We chose to use continuous tapers, rather than interpolating the points to a regular grid.
This leaves as possible tapers to use the spheroidal wavefunctions (continuous but hard to compute), as well as the cosine tapers of~\cite{Riedel1995}. These correspond to separable taper choices; non-separable choices with be discussed in Section \ref{sec:isotropy}.

\section{Distributional Properties of Bilinear Spectral Estimators}\label{sec:expbilin}

{
In this Section we derive the asymptotic marginal distribution of the tapered periodogram, and its asymptotic expectation.
}
We shall start from the simplest spectral estimator, namely the Bartlett periodogram as specified by~\eqref{bartlett2} and calculate its properties. 
We also define the transfer function corresponding to a taper $h$ to be
\begin{equation}
\label{eqn:Hdef}
H=\Fou[h],\quad H_0=\Fou[h_0].
\end{equation} 
We shall now be quite concrete and understand some common cases of spatial data, and their sampling, and choose as early special cases Cartesian product domains.
We shall focus on the cuboid sampling domain:
\[{B}_\square(\bm{l})=[-{\bside_1}/{2}, {\bside_1}/{2}]\times\dots \times [-{\bside_d}/{2},{\bside_d}/{2}].\]
\begin{Lemma}\label{lemma1}
Assume that $X$ is a homogeneous point process with intensity $\lambda$ and twice differentiable spectrum $f(\k)$ at all values of $\k\neq 0$. Assume $X$ is observed in a cuboid domain $B_\square(\bm{l})$ with a centroid at 0, which is growing in every dimension or $\min_j l_j\rightarrow \infty$. Then the expected value of the periodogram $I_0(\freq)$ satisfies the equation
\begin{equation}
\label{EIk}
\E I_0(\k)= {f}(\k)+ |B|^{-1}\lambda^2 T(B, \k)+o(1),\quad \k\neq 0,
\end{equation}
for 
\[T(B,\k)=\prod_{j=1}^d \frac{\sin^2(\pi \k_j\bside_j)}{(\pi \k_j)^2}.\]
\end{Lemma}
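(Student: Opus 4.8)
The plan is to compute $\E I_0(\k)$ exactly by Campbell's theorem and then pass to the growing-domain limit. Writing $I_0(\k) = \hat\lambda + |B|^{-1}\sum_{x\neq y,\in X\cap B}e^{-2\pi i\k\cdot(x-y)}$, the first-order Campbell formula gives $\E\hat\lambda = |B|^{-1}\lambda|B| = \lambda$, while the second-order version of~\eqref{eqn:Campbell} applied to the double sum yields
\[
\E\,|B|^{-1}\sum_{x\neq y,\in X\cap B}e^{-2\pi i\k\cdot(x-y)} = |B|^{-1}\int_B\int_B e^{-2\pi i\k\cdot(x-y)}\rho^{(2)}(x-y)\,dx\,dy .
\]
Splitting $\rho^{(2)}(x-y) = \lambda^2 + [\rho^{(2)}(x-y)-\lambda^2]$ separates this into a ``Poisson'' piece and a ``deviation'' piece, which I treat in turn.

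For the Poisson piece, $|B|^{-1}\lambda^2\int_B\int_B e^{-2\pi i\k\cdot(x-y)}\,dx\,dy = |B|^{-1}\lambda^2\bigl|\int_B e^{-2\pi i\k\cdot x}\,dx\bigr|^2$, and since $B_\square$ is a Cartesian product the integral factorises over coordinates; each one-dimensional factor has modulus squared $\sin^2(\pi\k_j\ell_j)/(\pi\k_j)^2$ (this is insensitive to whether the interval is taken as $[0,\ell_j]$ or centred, the two differing only by a unit-modulus phase), so this term equals exactly $|B|^{-1}\lambda^2 T(B,\k)$ --- the Fej\'er-kernel contribution displayed in~\eqref{EIk}.

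For the deviation piece, set $g(z) := \rho^{(2)}(z)-\lambda^2$ and change variables to the lag $z=x-y$, which turns $\int_B\int_B e^{-2\pi i\k\cdot(x-y)}g(x-y)\,dx\,dy$ into $\int_{\Rd} e^{-2\pi i\k\cdot z}g(z)\,|B\cap(B+z)|\,dz$, where for the cuboid $|B\cap(B+z)| = \prod_{j}(\ell_j-|z_j|)_+$. Dividing by $|B|=\prod_j\ell_j$ gives $\int_{\Rd} e^{-2\pi i\k\cdot z}g(z)\prod_{j}\bigl(1-|z_j|/\ell_j\bigr)_+\,dz$. The triangular (product-of-Fej\'er) weights are bounded by $1$ and converge pointwise to $1$ as every $\ell_j\to\infty$, so provided $g\in L^1(\Rd)$ --- which is exactly what makes the spectrum~\eqref{def:spectrum} well defined, and is implied by the smoothness hypothesis on $f$ --- dominated convergence gives convergence to $\int_{\Rd}e^{-2\pi i\k\cdot z}g(z)\,dz$. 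Summing the three contributions, $\E I_0(\k) = \lambda + \int_{\Rd}e^{-2\pi i\k\cdot z}[\rho^{(2)}(z)-\lambda^2]\,dz + |B|^{-1}\lambda^2 T(B,\k) + o(1) = f(\k) + |B|^{-1}\lambda^2 T(B,\k) + o(1)$, which is~\eqref{EIk}.

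The only genuine work sits in the last paragraph: one must pin down the integrability of $\rho^{(2)}-\lambda^2$ and check that the product-of-Fej\'er weights are dominated uniformly in the side lengths, so that the passage to the limit is legitimate. The twice-differentiability of $f$ away from the origin is not strictly needed for the stated form of the result --- it presumably serves to quantify the $o(1)$ or to guarantee smoothness of $T(B,\cdot)$ used later --- so a careful write-up should either exploit it to sharpen the remainder or remark that mere integrability of $\rho^{(2)}-\lambda^2$ suffices here. A secondary point worth flagging is \emph{why} $|B|^{-1}\lambda^2 T(B,\k)$ is kept explicit rather than folded into $o(1)$: although it vanishes when all $\k_j\neq 0$, for $\k\neq 0$ lying on a coordinate hyperplane one of the factors behaves like $\ell_j^2$, so the term need not be negligible (indeed need not even be bounded) depending on the relative growth of the $\ell_j$.
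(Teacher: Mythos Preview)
Your argument is correct and takes a genuinely different route from the paper. After the identical opening decomposition into the Poisson piece $|B|^{-1}\lambda^2 T(B,\k)$ and the deviation piece, the paper passes to the frequency domain via the convolution theorem, writing the deviation term as $\int |B|^{-1}T(B,\k-\k')\,\lambda^2\widetilde f(\k')\,d\k'$, then localises the integral to a shrinking neighbourhood $\prod_j[\k_j-\ell_j^{-(1-\beta)},\k_j+\ell_j^{-(1-\beta)}]$ and Taylor-expands $\widetilde f$ there to second order; the twice-differentiability hypothesis is used essentially, and the remainder is controlled by the Hessian of $\widetilde f$ together with explicit Fej\'er-kernel moment computations. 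Your spatial-domain argument via dominated convergence on the triangular weights $\prod_j(1-|z_j|/\ell_j)_+$ is shorter and needs only $\rho^{(2)}-\lambda^2\in L^1(\Rd)$, which as you note is already implicit in the definition~\eqref{def:spectrum}. What the paper's approach buys is a quantitative handle on the $o(1)$: the proof there yields an error of order $\ell^{\beta-2}$ (times a Hessian bound), which feeds directly into the later bias and bandwidth calculations in Section~\ref{sec:linear}. Your closing observations---that twice differentiability is not needed for the bare statement, and that $|B|^{-1}\lambda^2 T(B,\k)$ must be retained because on coordinate hyperplanes it need not be $o(1)$---are both accurate and worth keeping.
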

\begin{proof}
See Appendix~\ref{sec:Pf1}.
\end{proof}
This is not what we would expect as a large sample expectation of the spectrum, given our experience for time series and random fields in that the term $|B|^{-1}\lambda^2 T(B, \k)$ has been added. To get there, we need to understand the DFT
$J_h(\freq)$ better.

We note that in general $h(x) e^{-2\pi i\k\cdot x}$ is identically distributed but not independent for many choices of point processes $X$, making the choice of a Central Limit Theorem (CLT) a bit more complex. 
For a large class of processes one such CLT is given by \cite{biscio2019general}. If we compare the quantity in \eqref{taper1} with~\cite{biscio2019general}, then we see that $q\in {\mathbb{N}}$ and $p=1$, in their notation.
We do not have to worry about $e^{-2\pi i\k\cdot x}$ being a function of several of the points, and this is why $q=1$. 
Citing~\cite{biscio2019general}, we can deduce from their Theorem 1 that as $|B|$ diverges $J_h(\k)$ is asymptotically Gaussian. We recover the second moments by direct calculations:
\begin{Proposition}\label{DFTprop}
Assume that $X$ is a stationary point process with intensity $\lambda$ and with spectrum $f(\freq)$, and that $h$ is a unit energy taper supported in the domain $B_\square(\bm{l})$ itself only. Then 
the first moments of the direct spectral estimator $J_h(\freq)$ are given by:
\begin{align}
\label{EDFT}
\E\left\{ J_h(\freq)\right\}&=\E \sum_{x\in X\cap B}h(x) e^{-2\pi i \freq\cdot x}\\
\nonumber
&=\lambda \int_{\mathbb{R}^d}h(x)e^{-2\pi i \freq\cdot  x}\;dx=
\lambda H(\freq),\quad \freq\in {\mathbb{R}^d}\\
\nonumber
\var\left\{ J_h(\freq)\right\}&=
\lambda  + \lambda^2\int_{\mathbb{R}^d}
\left| H(\freq'-\freq)\right|^2\widetilde{f}(\freq')\;d\freq'+\lambda^2 \left|H(\freq)\right|^2-\lambda^2 |H(\freq)|^2\\
\label{varDFT}
&=\lambda  + \lambda^2\int_{\mathbb{R}^d}
\left| H(\freq'-\freq)\right|^2\widetilde{f}(\freq')\;d\freq'\\
\rel\left\{ J_h(\freq)\right\}&=
\lambda \int_{\mathbb{R}^d}  
H(\freq'-2\freq) H(\freq')d\freq'+\int_{\mathbb{R}^d} U(\freq,z) e^{-2\pi i \freq\cdot z}\{\rho(z)-\lambda^2\}dz,
\end{align}
with $U(\freq,z)=\int h(x)h(z+x)e^{-2\pi i \freq\cdot (2x)}\;dx$.
\end{Proposition}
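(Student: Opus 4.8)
The plan is to obtain all three moments by direct expansion of the defining sums followed by repeated use of Campbell's theorem~\eqref{eqn:Campbell}. The key structural simplification is that $h$ is supported inside $B_\square$, so each sum $\sum_{x\in X\cap B}$ is in fact a sum over all of $X$ and each spatial integral over $B$ may be extended to $\mathbb{R}^d$; consequently no edge terms appear and the three identities are \emph{exact}, with no $o(1)$ remainder. For the mean, applying the $n=1$ case of~\eqref{eqn:Campbell} to $\sum_{x\in X}h(x)e^{-2\pi i w\cdot x}$ with $\rho^{(1)}\equiv\lambda$ immediately gives $\lambda\int h(x)e^{-2\pi i w\cdot x}\,dx=\lambda H(w)$, which is~\eqref{EDFT}.

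For the variance I would begin from the expansion of $|J_h(w)|^2$ recorded in~\eqref{periodogram2} into a diagonal term $\sum_{x\in X}|h(x)|^2$ and an off-diagonal term $\sum_{x,y\in X}^{\neq}h(x)h^*(y)e^{-2\pi i w\cdot(x-y)}$. The diagonal term has expectation $\lambda\|h\|_2^2=\lambda$ by $n=1$ Campbell and unit energy. The off-diagonal term, by $n=2$ Campbell, equals $\iint h(x)h^*(y)e^{-2\pi i w\cdot(x-y)}\bar{\rho}\two(x-y)\,dx\,dy$. Splitting $\bar{\rho}\two(z)=\lambda^2+(\bar{\rho}\two(z)-\lambda^2)$, the constant part factorises as $\lambda^2\,\lvert\int h(x)e^{-2\pi i w\cdot x}dx\rvert^2=\lambda^2|H(w)|^2$, and the centred part is evaluated by inserting the inversion formula $\bar{\rho}\two(z)-\lambda^2=\lambda^2\int\widetilde f(k')e^{2\pi i k'\cdot z}dk'$ (from~\eqref{eqn:rhobyscale} and~\eqref{eqn:fourierexcess}) and interchanging integrals, which yields $\lambda^2\int\widetilde f(k')|H(k'-w)|^2dk'$ after recognising the two factored Fourier integrals (using that $|H|$ is even, as $h$ is real). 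Adding the three contributions gives $\E|J_h(w)|^2$; subtracting $|\E J_h(w)|^2=\lambda^2|H(w)|^2$ then cancels the $\lambda^2|H(w)|^2$ term, exactly reproducing the displayed line~\eqref{varDFT}.

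For the relation the same recipe applies to $J_h(w)^2=\sum_{x,y\in X}h(x)h(y)e^{-2\pi i w\cdot(x+y)}$, split into its diagonal $\sum_{x\in X}h(x)^2e^{-4\pi i w\cdot x}$ and off-diagonal parts. The diagonal part, by $n=1$ Campbell, is $\lambda\,\Fou[h^2](2w)$, and the convolution theorem identifies $\Fou[h^2]=H\ast H$, giving $\lambda\int H(k'-2w)H(k')\,dk'$. The off-diagonal part, by $n=2$ Campbell followed by the change of variables $(x,y)\mapsto(x,z)$ with $z=y-x$, becomes $\int U(w,z)e^{-2\pi i w\cdot z}\bar{\rho}\two(z)\,dz$ with $U(w,z)=\int h(x)h(z+x)e^{-2\pi i w\cdot(2x)}dx$ exactly as in the statement; splitting $\bar{\rho}\two(z)=\lambda^2+(\bar{\rho}\two(z)-\lambda^2)$ once more, the constant part reassembles into $\lambda^2 H(w)^2$ and the centred part is the remaining integral in the claimed expression. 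Subtracting $(\E J_h(w))^2=\lambda^2 H(w)^2$ removes that constant, leaving the stated formula for $\rel\{J_h(w)\}$.

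The only real work is analytic bookkeeping. One must justify the Fubini interchanges and the use of the inverse transform: this is legitimate because $h$ is bounded with compact support and $\bar{\rho}\two-\lambda^2\in L^1(\mathbb{R}^d)$, which is exactly the standing hypothesis that $f$ exists (and is bounded and continuous). One should also flag the identification of $\Fou[h^2](2w)=(H\ast H)(2w)=\int H(k')H(2w-k')\,dk'$ with the correlation form $\int H(k'-2w)H(k')\,dk'$: these agree for the real, symmetric tapers considered here, for which $H$ is even. I expect this normalisation point, together with making the cancellations of $\lambda^2|H(w)|^2$ and $\lambda^2 H(w)^2$ explicit, to be the only places where care is needed; there is no asymptotic step, so the proposition is an exact identity.
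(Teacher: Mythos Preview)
Your proposal is correct and follows essentially the same route as the paper's own proof in Appendix~\ref{DFTproppf}: Campbell's theorem for $n=1$ gives the mean, the diagonal/off-diagonal split of $|J_h(w)|^2$ and $J_h(w)^2$ together with $n=2$ Campbell and the decomposition $\bar\rho\two=\lambda^2+(\bar\rho\two-\lambda^2)$ give the second moments, and the Fourier inversion of $\bar\rho\two-\lambda^2$ produces the $|H(k'-w)|^2\widetilde f(k')$ integral. Your added remarks on Fubini and on the identification $\Fou[h^2](2w)=\int H(k'-2w)H(k')\,dk'$ (which needs $h$ real and even so that $H$ is real) are more explicit than the paper, which leaves the first term of the relation in the form $\lambda\int h^2(x)e^{-4\pi i w\cdot x}\,dx$ without commenting on the conversion.
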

\begin{proof}
See Appendix~\ref{DFTproppf} and notice the definition of~\eqref{eqn:Hdef}.
\end{proof}
{
Why do we bother with determining the first two moments of the point pattern? The spectrum characterises any stationary/homogeneous process and so is a key summary to estimate~\cite{brillinger1974fourier}. Estimation will be linear in the periodogram, but once $J_h(\freq)$ is approximately Gaussian, the distribution of the periodogram is determined from these moments, using theory developed for quadratic forms in normal variates~\cite{johnson1970contii}.}

In the above Proposition the term $\rel\left\{\cdot\right\}$ denotes ``relation'', also known as the ``complimentary covariance'', see e.g.~\cite{miller1974complex}. 
With these moments, and the result of ~\cite{biscio2019general} we deduce the following corollary.
\begin{Corollary}
Assume $X$ satisfies the constraints of~\cite{biscio2019general}, then the DFT satisfies:
\begin{align}
    \label{CLT}
    J_h(\k)&\overset{L}{\rightarrow}N_C\left(\lambda H(\k),f(\freq),r(\freq)\right),\\
r(\freq)&=\lambda \int_{\mathbb{R}^d}  
H(\freq'-2\freq) H(\freq')d\freq'+\int_{\mathbb{R}^d} U(\freq,z) e^{-2\pi i \freq\cdot  z}\{\rho(z)-\lambda^2\}dxdz,\;
U(\freq,z)=\int h(x)h(z+x)e^{-2\pi i \freq\cdot (2x)}\;dx.
\nonumber
\end{align}
\end{Corollary}
We can view $J_h(\k)$ as a complex-valued scalar or a real-valued two-vector. $N_C(\mu,\Sigma,C)$ is the general complex normal and its arguments are its mean $\mu$, it covariance $\Sigma$, as well as its relation or complimentary covariance $C$. It should be contrasted with the complex proper normal $N_C(\mu,\Sigma),$ that has zero relation.

What can we then say about $I_h(\k)$?
Using the continuous mapping theorem~\cite{dasgupta2008asymptotic} we can deduce from~\eqref{CLT} that if we consider only arguments $\k$ so that the complementary covariance is negligible then 
\begin{equation}
     |J_h(\k)|^2\overset{L}{\rightarrow}\frac{f(\freq)}{2}\chi^2_2\left(\frac{\lambda^2 |H(\k)|^2}{f(\freq)}\right),
\end{equation}
where the parameters of the non--central $\chi^2_\nu(\mu^2)$
($\nu$ denoting the degrees of freedom, and $\mu^2$ the non-centrality parameter). We give the form of $H(\freq)$ in~\eqref{eqn:Hdef}.
We can also use the uniform integrability of the variable $ J_h(\k)$, to get the asymptotic moments  of $I_0(\k)$ from these limits. 
In fact, the expectation of any member of that family of tapered estimates takes the form of:
\begin{equation}
\label{expectation}
\E\left\{ I_h(\k) \right\}= \int_{\Rd} |H(\k'-\k)|^2 f(\k')d\k' + \lambda^2 |H(\k)|^2.
\end{equation}
This can be derived straightforwardly by direct computation from~\eqref{Ih} by taking expectations with Campbell's formula and using the convolution theorem. We see immediately the bias of this estimator, namely the $\lambda^2 |H(\k)|^2$ term.
To remove the non-centrality bias, we define the bias-corrected periodogram from the de-biased discrete Fourier transform
\begin{equation}
\label{biascorrect}
\widetilde{I}_h(\freq)=
\left| \widetilde{J}_h(\freq)\right|^2.
\end{equation}
Note also the discussion in~\cite[p.~292]{Daley2003}, where the theory of signed measures is used to make the equivalent definition, if with additional mathematical sophistication. Therefore $J_h(\freq)-\lambda H(\freq)=\widetilde{J}_h(\freq)$ can be called the signed measure.
The quantity $\widetilde{J}_h(\freq)$ is the DFT of the process $X^0$ dual to the mean-corrected random measure $N^0(dx)=N(dx) - \lambda dx$ where $N$ is the dual  counting measure of the point process $X$. As we have subtracted $H(\freq)$ off the discrete Fourier transform, it is no longer strictly positive, but once we take the modulus square we are guaranteed to arrive at a real-valued positive quantity.

There is one wavenumber which is problematic, namely $\freq=0$. For the periodogram we have $h_0(x)=\I_B(x)/\sqrt{|B|}$. Thus we have
\begin{equation}
J_0(\freq)=\frac{1}{\sqrt{|B|}}\sum_{x\in X\cap B}e^{-2\pi i \freq\cdot x},\quad 
J_0(0)=\frac{1}{\sqrt{|B|}} N(B)=\sqrt{\|B\|}\widehat{\lambda}.
\end{equation}

This removes the problem of a non-zero mean of the DFT, as long as we assume that we know the intensity $\lambda$, and so are in the position to remove this effect. Assuming knowledge of this quantity is not a major hurdle, as it can be estimated consistently for largish areas ($|B|\rightarrow \infty$), by just counting the number of points and dividing by the area. 
For completeness we also define the signed measure DFT as
\begin{equation}
\widetilde{J}_h(\freq)=J_h(\freq)-\hat{\lambda} H(\freq)\quad
.
\label{signed}
\end{equation}
We note directly from~\eqref{CLT}, that 
\begin{equation}\widetilde{J}_h(\freq)\rightarrow N_C\{0,f(\freq),r(\freq)\}.\label{CLT2}\end{equation}
Also it follows
\begin{equation}
\widetilde{J}_0(0)=\sqrt{|B|}\widehat{\lambda}-\widehat{\lambda}H_0(0)=\sqrt{|B|}\widehat{\lambda}-\sqrt{|B|}\widehat{\lambda}=0,
\end{equation}
trivially. Thus we cannot estimate the DFT at wavenumber zero. 
For a time series analysis when calculating DFTs we subtract off the sample mean, and then also the mean corrected DFT is zero at wavenumber zero. In a time series setting the periodogram at wavenumber zero is often not plotted.

We note that a second way to do bias correction is by means of
\begin{equation}
\label{eqn:u2}
\breve{I}(\freq;h)= \sum_{x,y\in X} h(x)h(y) e^{-2\pi i \freq\cdot (x-y)}-\widehat{\lambda^2}\left|H(\freq)\right|^2.
\end{equation}
The advantage of the first estimator in~\eqref{biascorrect} is that it is naturally non-negative and removes the error before we take the modulus square. This also should have a positive impact on the variance. In addition, where as $\hat\lambda$ is an unbiased estimator for $\lambda$, $\widehat{\lambda^2}$ can be a biased estimator for $\lambda^2$.

For additional generality, we could consider a more sophisticated estimator than~\eqref{eqn:u2} or~\eqref{biascorrect} see e.g.~\cite[Ch.~6]{cohen1995time}, and define
\begin{equation}
\label{Ig}
    I_g(\freq)=\sum_{x\in X\cap B} \sum_{y\in X\cap B} g(x,y) e^{-2\pi i\freq\cdot \left(x-y\right)}-\widehat{\lambda^2} G(\freq,\freq),
\end{equation}
where $g(x,y)$ is a {\em kernel} which may have a specified support, that has to be selected, and $G(\freq,\freq)=\sum_x \sum_y g(x,y)e^{-2\pi i\freq\cdot \left(x-y\right)}$. This estimator suffers from not being positive by design. 
For example we could define $g(x,y)=h(x)h^\ast(y)$ with $h(x)$ a multi-dimensional data taper ~\cite{hanssen1997multidimensional}, or we could make a non-separable choice of $g(x,y)=h(\|x-y\|)$. Depending on the choice of $g(x,y)$ the quantity $I_g(\freq)$ may satisfy a number of desirable characteristics such as positivity, asymptotic unbiasedness and  computational efficiency.

Looking at the debiased periodogram, we can now determine further properties of $I_h(\k)$, or properties of
$\widetilde{I}_h(\k)$. To produce an estimator that is smoothed we need to study the covariance and variance of $\widetilde{I}_h(\k_1)$ and 
$\widetilde{I}_h(\k_2)$.
We first look at $\E \left\{ \widetilde{I}_h(\k)\right\}$. We find that
its form for large spatial regions is specified by the following theorem. 


\begin{Lemma}\label{lemdebias}
Assume that $X$ is a homogeneous point process with a spectral density $f(\freq)$. Then the bias-corrected tapered periodogram 
has a first moment given by:
\begin{align}
\nonumber
\E \left\{ \widetilde{I}_h(\freq)\right\}
&=\int_{\Rd} |H(\k'-\k)|^2 f(\k')d\k'.
\label{Ih2}
\end{align}
\end{Lemma}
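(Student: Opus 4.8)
The plan is to obtain the lemma as an immediate consequence of the two first-moment identities already established: the mean of the tapered DFT in Proposition~\ref{DFTprop}, eq.~\eqref{EDFT}, and the expectation of the tapered periodogram in~\eqref{expectation}. First I would expand the modulus square in the definition~\eqref{biascorrect}. Since $\lambda>0$ is a deterministic constant and $H(w)=\Fou[h](w)$ is a deterministic function, for every $w\in\Rd$,
\[
\widetilde{I}_h(w)=\bigl|J_h(w)-\lambda H(w)\bigr|^2
=|J_h(w)|^2-2\lambda\,\Re\bigl\{J_h(w)H^\ast(w)\bigr\}+\lambda^2|H(w)|^2
=I_h(w)-2\lambda\,\Re\bigl\{J_h(w)H^\ast(w)\bigr\}+\lambda^2|H(w)|^2 .
\]
Taking expectations, using linearity of $\E$ together with the fact that $\Re\{\cdot\}$ commutes with expectation because $\lambda$ and $H(w)$ are non-random, yields the first displayed equality of the lemma, with the convention that the cross term is read after taking expectation, i.e. as $-2\lambda\,\Re\{\E[J_h(w)]\,H^\ast(w)\}$.

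It then remains to substitute the two inputs. From~\eqref{EDFT} we have $\E\{J_h(w)\}=\lambda H(w)$, so the cross term equals $-2\lambda\,\Re\{\lambda H(w)H^\ast(w)\}=-2\lambda^2|H(w)|^2$; and from~\eqref{expectation} we have $\E\{I_h(w)\}=\int_{\Rd}|H(w'-w)|^2 f(w')\,dw'+\lambda^2|H(w)|^2$. Adding the three contributions,
\[
\E\{\widetilde{I}_h(w)\}
=\Bigl(\int_{\Rd}|H(w'-w)|^2 f(w')\,dw'+\lambda^2|H(w)|^2\Bigr)-2\lambda^2|H(w)|^2+\lambda^2|H(w)|^2
=\int_{\Rd}|H(w'-w)|^2 f(w')\,dw',
\]
which is the asserted value: the three $\lambda^2|H(w)|^2$ terms cancel exactly, so that only the convolution of $|H|^2$ with the true spectrum survives and the non-centrality boost present in~\eqref{expectation} is removed.

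Since both ingredients are already available in the excerpt, there is essentially no genuine obstacle; the only points warranting a line of care are (i) reading the middle expression of the statement with the cross term evaluated using $\E\{J_h(w)\}=\lambda H(w)$ rather than $J_h(w)$ itself, and (ii) checking the bookkeeping of the three $\lambda^2|H(w)|^2$ contributions so that they cancel rather than leaving a residual; no separate treatment of $w=0$ is needed since the argument is entirely at the level of first moments. The only place where real work is hidden is~\eqref{expectation} itself, i.e. the ``direct computation'' announced below it: apply Campbell's formula~\eqref{eqn:Campbell} to~\eqref{periodogram2}, split into the diagonal sum $\sum_{x}|h(x)|^2$, which contributes $\lambda\|h\|_2^2=\lambda$, and the off-diagonal sum, which by stationarity and the convolution theorem contributes $\int_{\Rd}|H(w'-w)|^2[f(w')-\lambda]\,dw'+\lambda^2|H(w)|^2$; Parseval's identity $\int_{\Rd}|H(w')|^2\,dw'=\|h\|_2^2=1$ then recombines the two $\lambda$-terms and recovers~\eqref{expectation}. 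If one wished to make Lemma~\ref{lemdebias} self-contained, this is the step I would carry out in full; otherwise the lemma is a two-line corollary of Proposition~\ref{DFTprop} and~\eqref{expectation}.
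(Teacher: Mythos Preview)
Your proposal is correct and follows essentially the same route as the paper: expand $|J_h(w)-\lambda H(w)|^2$, use $\E J_h(w)=\lambda H(w)$ from Proposition~\ref{DFTprop} for the cross term, and invoke the expectation of $I_h$ so that the three $\lambda^2|H(w)|^2$ contributions cancel. The only cosmetic difference is that the paper's proof re-derives the convolution form of $\E I_h(w)$ in place (via Campbell and Fourier inversion of $\bar\rho^{(2)}-\lambda^2$) rather than citing~\eqref{expectation}, which is exactly the step you flag and sketch at the end of your proposal.
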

\begin{proof}
See Appendix~\ref{pflemdebias}.
\end{proof}
Thus as long as $H(\freq)$ is getting more concentrated in wavenumber (e.g. $H(\freq)\rightarrow \delta(\freq)$), this is asymptotically in $|B|$ an unbiased estimator of $f(\freq)$.
Using the signed measure DFT of~\eqref{signed} to study $X$ is more convenient, as this lets us avoid the contribution that affects the low wavenumbers. 

Let us write down what grid we get large sample unbiasedness for, customarily called the \emph{Fourier grid}, and this is useful when the observational domain $B$ is a box.
\begin{Definition}
The Fourier wavenumber grid for a point process observed on a cuboid domain $B_\square(\bm{l})$ corresponds to the points
\begin{equation}
{\cal K}(\bm{l}):={\cal K}(B_\square(\bm{l}))=\{\k_n \;:\;
\k_n=\begin{pmatrix} p_{n1}/l_1,\dots, p_{nd}/l_d\end{pmatrix},\quad p_{nj}\in{\mathbb{Z}}\}.
\end{equation}
\end{Definition}
Note that the physical units of the wavenumber elements is per unit length, such as $m^{-1}$. Referring to Lemma~\ref{lemma1} we see that the zero wavenumber, or taper $h_0$ related bias term $T(B,\freq)$, vanishes using this grid not only removing any asymptotic bias at low wavenumbers, but also giving us a sampling of the wavenumbers that approximately leads to independent periodogram ordinates, rather like in the random field case, see Corollary~\ref{Propcovar2}. 

These results establish what  wavenumber grid we should evaluate a standard spectral estimator of a time series at, in analogy to the the {\em{Fourier frequencies}}~\cite[p.~197--198]{Percival1993} in time series. Their basic importance follows because we can expect the direct Fourier transform to be Gaussian and so uncorrelated implies independence. 

A second feature of time series is the notion of the Nyquist wavenumber~\cite[p.~98]{Percival1993}. This does not exist for point processes.
It may seem counter intuitive that there is no upper limit to the wavenumbers we can estimate. When analysing a process that has been sampled in space, such as a random field, we expect to see aliasing. Aliasing is when variation that is happening very rapidly is confused with slower cycles, because when we have sampled more sparsely, rapid variation cannot be resolved. For random locations of the point process the pairwise distances can be any real-valued value, so the Nyquist wavenumber does not (in a sense) exist.

\begin{Theorem}{Large--Domain Expectation of the tapered periodogram.}\label{thm1}
Assume that $X$ is a stationary point process with intensity $\lambda$ and twice differentiable spectrum $f(\k)$ at all values of $\k\neq 0$, and that $h$ is a unit energy taper (e.g. $\|h\|=1$) supported in the cuboid domain $B_\square(\bm{l})$ only. Assume $X$ is observed in $B_\square(\bm{l})$, which is growing in every dimension, that is $\min \bside_j\rightarrow \infty$. Then the expected value of the tapered periodogram $I_h(\freq)$ satisfies the equation
\begin{equation}
\label{Prop2expr}
\E I_h(\freq)= f(\freq)+\lambda^2|H(\freq)|^2+o(1),\quad w\neq 0,
\end{equation}
where $H(\freq)$ is the Fourier transform of $h(x)$, as defined by \eqref{eqn:Hdef}.
\end{Theorem}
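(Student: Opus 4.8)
The plan is to reduce the statement to the convergence of a single convolution and then run an approximate-identity argument adapted to the fact that the spectral window $|H(w)|^2$ concentrates at the origin as $\min_j\ell_j\to\infty$. First I would invoke the exact finite-sample expectation~\eqref{expectation} --- equivalently, combine $\E J_h(w)=\lambda H(w)$ and $\var\{J_h(w)\}$ from Proposition~\ref{DFTprop} through $\E I_h(w)=\var\{J_h(w)\}+|\E J_h(w)|^2$ --- to get
\[
\E I_h(w)=\int_{\Rd}|H(w'-w)|^2 f(w')\,dw'+\lambda^2|H(w)|^2 .
\]
Since the second summand already appears verbatim in~\eqref{Prop2expr}, the whole theorem amounts to showing that, after the substitution $u=w'-w$,
\[
\int_{\Rd}|H(u)|^2 f(w+u)\,du=f(w)+o(1),\qquad \min_j\ell_j\to\infty .
\]

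Next I would record the structural facts about the kernel $\mathcal{W}_B(u):=|H(u)|^2$. It is nonnegative, and by Plancherel's theorem together with $\|h\|_2=1$ one has $\int_{\Rd}\mathcal{W}_B(u)\,du=1$, so $\mathcal{W}_B$ is a probability density on $\Rd$. The growing-domain hypothesis enters only through concentration: for every fixed $\varepsilon>0$,
\[
\int_{\{|u|>\varepsilon\}}\mathcal{W}_B(u)\,du\longrightarrow 0
\qquad\text{and}\qquad
\int_{\{|u|\le\varepsilon\}}|u|^2\,\mathcal{W}_B(u)\,du\longrightarrow 0 .
\]
For the rectangular taper $h_0$, where $\mathcal{W}_B=|B|^{-1}T(B,\cdot)$, both are the explicit Fej\'er-kernel estimates already carried out in the proof of Lemma~\ref{lemma1}: integrating the ``offending'' axis against $\sin^2(\pi u_j\ell_j)/(\pi u_j)^2$ and the remaining axes against their full line integrals $\ell_k$ leaves a bound of order $\sum_j\ell_j^{-1}$ in each case. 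For the standard scaled separable families (the cosine tapers of~\citet{riedel1995minimum} and the prolate tapers, of the form $h(x)=|B|^{-1/2}h_1(x_1/\ell_1,\dots,x_d/\ell_d)$) one rescales to the unit cube, so both bounds reduce to tail estimates for the fixed function $H_1=\Fou[h_1]$ in $L^2$ and in weighted $L^2$ respectively.

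With these facts in hand I would split the integral at a radius $0<\varepsilon<|w|$, so that the ball $\{|u|\le\varepsilon\}$ avoids the origin, where $f$ is twice differentiable by hypothesis. On the inner ball I Taylor-expand $f(w+u)=f(w)+\nabla f(w)\cdot u+\tfrac12\,u^{\top}\nabla^2 f(\xi_u)\,u$: the constant term contributes $f(w)\int_{\{|u|\le\varepsilon\}}\mathcal{W}_B=f(w)\bigl(1-o(1)\bigr)$; the linear term is at most $|\nabla f(w)|\,\bigl(\int_{\{|u|\le\varepsilon\}}|u|^2\mathcal{W}_B\bigr)^{1/2}\bigl(\int\mathcal{W}_B\bigr)^{1/2}=o(1)$ by Cauchy--Schwarz (and is identically $0$ for symmetric tapers, since then $\mathcal{W}_B$ is even in each coordinate); and the quadratic remainder is $O\bigl(\sup_{B(w,\varepsilon)}\|\nabla^2 f\|\cdot\int_{\{|u|\le\varepsilon\}}|u|^2\mathcal{W}_B\bigr)=o(1)$. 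On the outer region $\int_{\{|u|>\varepsilon\}}\mathcal{W}_B(u)f(w+u)\,du\le\|f\|_\infty\int_{\{|u|>\varepsilon\}}\mathcal{W}_B=o(1)$, using that $f$ is bounded (which holds e.g.\ whenever $\rho^{(2)}-\lambda^2\in L^1$). Summing the pieces gives $f(w)+o(1)$, which together with the reduction of the first paragraph is~\eqref{Prop2expr}. I would note in passing that if one only wants the $o(1)$ rather than a rate, mere continuity of $f$ near $w$ suffices in place of the Taylor step.

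I expect the real work to lie in the concentration estimates, and in particular in handling the rectangular taper, for which $\mathcal{W}_B=|H_0|^2$ decays only like $|u_j|^{-2}$ along each axis; consequently $\int_{\Rd}|u|^2\mathcal{W}_B(u)\,du=\infty$ and no global Taylor expansion of $f$ is available, so the radial split and the separate $\|f\|_\infty$ bound on the tail are not a convenience but a necessity --- and this is precisely the step at which Lemma~\ref{lemma1} is being reused rather than reproved. A secondary point to make explicit is that the theorem as stated tacitly requires the taper to belong to one of the physically scaled families above (for an arbitrary unit-energy $h$ supported in a growing $B_\square$, with no control on how fast it oscillates, $|H|^2$ need not concentrate) and requires $f$ to be at least locally bounded away from the origin; both are innocuous for the estimators actually used in the paper.
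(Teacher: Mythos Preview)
Your proposal is correct and follows essentially the same approach as the paper's own proof: both start from the exact convolution formula $\E I_h(w)=\int|H(w'-w)|^2 f(w')\,dw'+\lambda^2|H(w)|^2$, then run an approximate-identity argument by splitting the integration domain into a region where $|H|^2$ is concentrated and its complement, Taylor-expanding the spectrum on the inner region and bounding the outer region by the vanishing tail mass of $|H|^2$. The only cosmetic differences are that the paper works with the excess spectrum $\widetilde f$ rather than $f$ and phrases the inner region as an abstract concentration set $\mathcal W$ with $\int_{\mathcal W}|H|^2=1-\varepsilon_\ell$ rather than a fixed ball of radius $\varepsilon$; your version is arguably a touch more explicit about the side conditions (boundedness of $f$, the taper belonging to a scaled family) that the paper leaves implicit.
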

\begin{proof}
See Appendix~\ref{sec:Pf2}.
\end{proof}
{
This theorem establishes that the tapered periodogram is a biased estimator of the spectrum of the point pattern. Previous authors made up an ad-hoc corrections~\cite{diggle1987nonparametric} to remove the bias. Our result will suggest a method to arrive at a positive spectral density estimator that is not biased. This is the basic and important result; equivalent to~\cite[Section V]{brillinger1974fourier}. 
In fact, asymptotics are well--understood both for time series and random fields. Our (previous) understanding of asymptotics for spectral estimation of point processes hearkens back to~\cite{Daley2003} or \cite{diggle1987nonparametric}. In the latter reference the authors informally refer back to Chemistry theory 
by Hansen and McDonald, discussing the properties of gasses.}

%

\begin{Corollary}{Bias--corrected periodogram.}\label{cor1}
Assume that $X$ is a stationary point process with intensity $\lambda$ and twice differentiable spectrum $f(\k)$ at all values of $\k\neq 0$, and that $h$ is a unit energy taper supported in the cuboid domain $B_\square(\bm{l})$ only. Assume $X$ is observed in $B_\square(\bm{l})$, which is growing in every dimension. Then the expected value of the tapered periodogram $\widetilde{I}_h(\k)$ satisfies the equation
\begin{equation}
\label{Prop2expr2}
\E \widetilde{I}_h(\k)= f(\k)+o(1),\quad \k\neq 0.
\end{equation} 

\end{Corollary}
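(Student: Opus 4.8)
The plan is to read off the corollary by cancelling the bias term $\lambda^2|H(\k)|^2$ between the exact finite-sample identity of Lemma~\ref{lemdebias} and the large-domain expansion of Theorem~\ref{thm1}. First I would expand the square in the definition~\eqref{biascorrect}, writing
\[
\widetilde I_h(\k)=\left|J_h(\k)\right|^2-2\lambda\,\Re\!\left\{J_h(\k)H^\ast(\k)\right\}+\lambda^2\left|H(\k)\right|^2,
\]
and then take expectations. Using $\E J_h(\k)=\lambda H(\k)$ from~\eqref{EDFT} and the fact that $\lambda$ is real, the cross term collapses: $\E\!\left\{J_h(\k)H^\ast(\k)\right\}=\lambda H(\k)H^\ast(\k)=\lambda|H(\k)|^2$. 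Hence $\E\widetilde I_h(\k)=\E I_h(\k)-\lambda^2|H(\k)|^2$, which is precisely the intermediate identity recorded in Lemma~\ref{lemdebias}.

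Second, I would substitute the expansion of Theorem~\ref{thm1}, valid under exactly the hypotheses assumed here (stationarity, a twice differentiable spectrum away from the origin, a unit-energy taper supported in $B_\square$, and $\min_j\ell_j\to\infty$): $\E I_h(\k)=f(\k)+\lambda^2|H(\k)|^2+o(1)$ for $\k\neq 0$. The two occurrences of $\lambda^2|H(\k)|^2$ annihilate, leaving $\E\widetilde I_h(\k)=f(\k)+o(1)$, with the error term inherited verbatim from Theorem~\ref{thm1}. An equivalent route that bypasses Theorem~\ref{thm1} is to start from the closed form $\E\widetilde I_h(\k)=\int_{\Rd}|H(\k'-\k)|^2 f(\k')\,d\k'$ in Lemma~\ref{lemdebias} and argue that $|H|^2$ is an approximate identity: Parseval together with $\|h\|_2=1$ gives $\int_{\Rd}|H(\k')|^2\,d\k'=1$, while as $\min_j\ell_j\to\infty$ the mass of $|H|^2$ concentrates at the origin, so the convolution tends to $f(\k)$ at each $\k\neq0$ where $f$ is continuous (continuity following from twice differentiability), provided $f$ has an integrable majorant away from $0$ to control the tail.

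The corollary is thus essentially immediate, and the only point deserving a line of care is the reduction of the cross term — checking that $\E\{J_h(\k)H^\ast(\k)\}$ is real and equal to $\lambda|H(\k)|^2$, so that the factor $2$ makes the bias correction collapse to a single $-\lambda^2|H(\k)|^2$. All the genuinely delicate estimates, namely controlling the boundary and edge contributions to $\E I_h(\k)$ so that the convolution remainder is $o(1)$ for fixed $\k\neq0$, have already been carried out in the proof of Theorem~\ref{thm1}, so nothing further is needed here.
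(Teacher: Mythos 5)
Your proposal is correct and follows essentially the same route as the paper's proof in Appendix~\ref{proofcor1}: expand the square in~\eqref{biascorrect}, use Campbell's theorem (via $\E J_h(\k)=\lambda H(\k)$) to reduce the cross term to $2\lambda^2|H(\k)|^2$, and cancel against the bias term in Theorem~\ref{thm1}. The alternative convolution/approximate-identity route you sketch via Lemma~\ref{lemdebias} is also sound but is not the argument the paper uses.
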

\begin{proof}
See Appendix~\ref{proofcor1}.
\end{proof}

{
This corollary then informs us how to do estimation, by removing the bias.} 
This corollary shows the importance of removing the spectral bias before squaring -- otherwise it gets hard to isolate and remove the effect at zero wavenumber, which is exacerbated at higher intensities (growing $\lambda$), as is clear from~\eqref{Prop2expr}. 
Note that the estimator $\widetilde{I}_h(\k)$ is debiased relative to $\E\left\{  I_h(\freq)\right\}$, but is still guaranteed to be non-negative. This can be compared to removing $\widehat{\mu}$ from a time series before calculating the periodogram.


Finally, to employ smoothing for purposes of variance reduction, we will extend the tapered estimator to include multiple tapers. Define $\ntapers\ge 1$ estimates of the spectrum via
\begin{equation}
I_{{\itapers}}^t(\k) =\sum_{x,y\in X \cap B}h_{\itapers}(x)e^{-i2\pi \k\cdot x} h^\ast_{\itapers}(y)e^{i2\pi \k\cdot y},\quad \itapers=1,\dots, \ntapers,
\label{tapered}
\end{equation}
which is not bias-corrected but
where we assume
\[\int_{\Rd} h_{\itapers}(x) h^\ast_\itapersalt(x)dx = \delta_{\itapers \itapersalt},\]
where $\delta_{\itapers \itapersalt}=1$ if $\itapers=\itapersalt$ and $0$ otherwise. Of course~\eqref{tapered} 
still suffers from low-wavenumber bias. To define a debiased estimator we take
{{\begin{equation}
\widetilde{I}_{{\itapers}}^t(\k) =\left(\sum_{x\in X \cap B}h_\itapers(x)e^{-i2\pi \k\cdot x}-\lambda H_\itapers(\k)\right) \left(\sum_{y\in X \cap B} h^\ast_\itapers(y)e^{i2\pi \k\cdot y}-\lambda H^\ast_\itapers(\k)\right),\quad \itapers=1,\dots, \ntapers
.\label{tapered-debias}
\end{equation}}}
We subsequently average these estimates over $\itapers$ to reduce variance.

In~\eqref{tapered} we have a product of $h_\itapers(x)h^\ast_\itapers(y)$. This is an inevitable consequence of the bilinear form of the periodogram from the Fourier transform. The most natural way of making multidimensional and separable tapers would be for $x=\begin{pmatrix} x_1 & \dots & x_d\end{pmatrix}^T$ to define the component-wise product taper
$h_\itapers(x)=\tilde{h}_{\itapers_1}(x_1) \dots \tilde{h}_{\itapers_d}(x_d)$. We will then need to re-enumerate $\itapers_1,\dots, \itapers_d$ into a single index. For instance in 2D if we have three tapers in 1D, then we will end up with nine tapers in 2D, and $\itapers$ will range from one to nine. 
These estimates can also be arrived at starting from~\eqref{taper1} and $J_h(\k)$. As we shall use $\ntapers$ tapers linearly, and $\ntapers^2$ for the quadratic estimators, we shall use the subscript `0' to refer to the untapered periodogram, and $\itapers=1,\dots,\ntapers$ to refer to the subsequent tapers. 

Tapering is very common for stochastic processes. Initially the idea was hotly contested, but its utility is now firmly established both for time series and random fields~\cite{dahlhaus1987edge,walden2000unified}. The idea of tapering is to ameliorate edge effects that leads to the asymptotically leading bias term. This is important in 1D~\cite{thomson1982spectrum,walden2000unified} but of greater importance in 2D and higher~\cite{dahlhaus1987edge}. In 2D and higher the edge effects become more pronounced, and indeed asymptotically dominant~\cite{kent1996spectral,mardia1984maximum}. We also use multitapering to stabilise the variance of any estimator, as described for time series in~\cite{walden2000unified}. We shall describe the necessary steps to perform linear estimation of the spectrum of a 2D and higher dimension point process in Section \ref{sec:linear}. 

Let us set theory aside for a moment and consider some spectral estimates for archetypal point patterns. Figure \ref{fig:dataexample} demonstrates the debiased periodogram, the debiased multitapered periodogram, with sine-tapers, and a rotationally averaged 1D summary of the peridogram (defined in Sections \ref{sec:linear}\&\ref{sec:isotropy}) for four point patterns exhibiting different structural behaviour. The aspect ratio of the figure panels are kept the same only to have an orderly figure: Due to debiasing we can estimate the periodogram on wavenumbers of our own choosing, and not just on the Fourier grid which in these examples would be different for different tapers. The  wavenumber-grid in this illustration is a regular grid with a step size 0.005 in both dimensions. 

We see that information is present at wavenumbers up to $\|\freq\|\approx 0.2$. The tapering smooths the periodogram, as expected. There is a dark well near the origin for the regular pattern, and a bright hump for the clustered pattern, and both features transfer to the rotationally averaged curves (compare with Figure \ref{fig:fig-1}). The anisotropy of the fourth pattern is hinted by the anisotropy visible in the 2D periodograms as elongation of the ``hump'' in the second dimension. The elongation is perpendicular to the elongations of the clusters in the data because wavenumbers relate inversely to spatial units. The rotationally averaged periodogram was also computed from the (not shown) non-debiased periodogram for comparison. There is a very prominent bias near $\|\freq\|=0$ for the non-debiased version, which illustrates why the proposed debiasing step is relevant when applying the method in practice. 

\begin{figure}[!ht]
	\centering
	\includegraphics[width=0.95\linewidth]{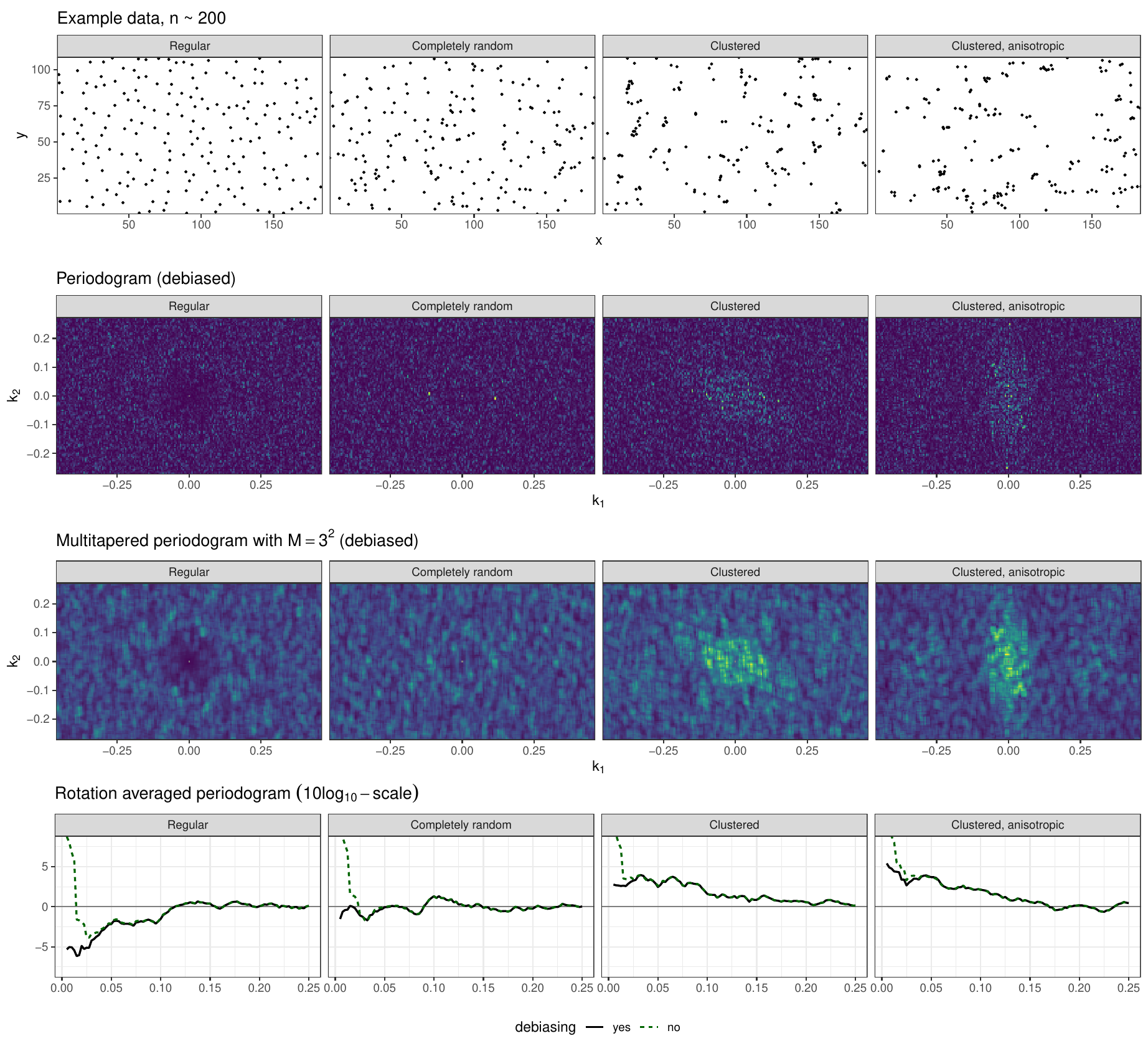}
	\caption{\emph{Top row}: Four example data patterns in a rectangle of area 2e4 units, exhibiting regularity, complete randomness, small scale clustering, and clustering with a left-right axis directionality. \emph{Second row}: Their debiased periodograms excluding $\k=0$. \emph{Third row}: Their multitapered debiased periodograms, $3\times 3$ sine-tapers. \emph{Bottom row}: Their rotationally averaged (un-tapered) periodograms, including the case without debiasing. The 2D summaries use a $\k$-grid with a step size $0.005$ in both directions. The rotational averaging was estimated at $\|\freq\|=0.0050, 0.0075, ..., 0.2500$ using a box kernel having a radius 1.5 times the $\k$-grid step size 0.005.}
	\label{fig:dataexample}
\end{figure}

\section{Covariance of bilinear spectral estimators}\label{sec:varbilinear}
{
In this Section we determine the covariance of the periodogram, in anticipation of smoothing the periodogram.}
If we compare the developments of the previous sections to that of spectral analysis, it would seem that we are in a good position to estimate the spectrum, especially as we can assume the spectral deviation function $\widetilde{f}(\freq)$ is smooth. The smoothness of $\widetilde{f}(\freq)$ follows from the decay of $\widetilde{\rho}(z)$. Simple Fourier theory stipulates 
that the decay of $\widetilde{\rho}(z)$ yields the smoothness of $\widetilde{f}(\freq)$. From a mean-square error perspective on estimating the spectrum, to understand how to smooth the periodogram away from zero wavenumber, and to do so we need to further study the variance and covariance of direct spectral estimators. 


Our main concern is 
: \textbf{1)} Is the variance finite? \textbf{2)} Can we find a grid of wavenumbers so that the estimated spectrum is uncorrelated at these points? 
To be able to answer such questions we must study what the variance and covariance of the DFT is. 
Let us determine the second order properties of spectral estimators. Core to our understanding of smoothing will be the variance and covariance of the tapered functions. This will be established in the following theorem. For brevity define
\begin{equation}
    \phi_\k(x)=e^{-2\pi i \k\cdot x },\quad \k,x\in{\mathbb{R}}^d.
\end{equation}
With this notation we have
\begin{equation}
I^t_\itapers(\k) =\sum_{x,y\in X\cap B}h_{\itapers}(x)h_{\itapers}^\ast(y)\phi_\k(x-y),\quad \itapers=1,\dots \ntapers.
\label{tapered3}
\end{equation}
Above we note that we have the same value of $\itapers$ across $x$ and $y$ in~\eqref{tapered3} as this corresponds to a modulus square.

\begin{Theorem}\label{Propcovar}
Let $I^t_\itapers(\freq)$ denote the (tapered) periodogram given by
~\eqref{tapered}. The covariance between the (tapered)  periodogram and itself across two wavenumbers $\k_1,\k_2\in{\mathbb{R}}^d$ then takes the form of
\begin{align*}
\cov& \{I^t_\itapers(\k_1) , I^t_\itapersalt(\k_2)\}
=\int_{B^4}
 \rho^{(4)}(x,y,z,v)
 h_\itapers(x) h_\itapers(y)h_\itapersalt(z) h_\itapersalt(v)\phi_{\k_1}(x-y) \phi_{\k_2}(z-v)\;dxdydvdz\\
 \nonumber
 &+\int_{B^3} \rho^{(3)}(x,y,v)
 \left[h_\itapers(x) h_\itapers(y)h_\itapersalt(x) h_\itapersalt(v)\phi_{\k_1}(x-y)\left\{ \phi_{\k_2}(x-v)+  \phi_{-\k_2}(x-v)\right\}\right.\\
&\quad+h_\itapers(x) h_\itapers(y)h_\itapersalt(y) h_\itapersalt(v)\phi_{\k_1}(x-y)\left\{ \phi_{\k_2}(y-v)+  \phi_{-\k_2}(y-v)\right\}\\
&\quad\left.+\phi_{\k_2}(x-y)h_\itapers^2(v)h_\itapersalt(x)h_\itapersalt(y)+\phi_{\k_1}(x-y) h_\itapersalt^2(v)h_\itapers(x)h_\itapers(y)\right]\;dxdydv\\
&+\int_{B^2}\rho^{(2)}(x,y)\left[ h_\itapers(x) h_\itapers(y)h_\itapersalt(x) h_\itapersalt(y)\phi_{\k_1}(x-y)\left\{ 
\phi_{\k_2}(x-y)+\phi_{-\k_2}(x-y)\right\}\right.\\
&\quad+\left\{h_\itapers^2(x)h_\itapersalt(x)h_\itapersalt(y)+ h_\itapersalt^2(x)h_\itapers(x)h_\itapers(y)\right\}
\left.\left\{\phi_{\k_1}(x-y)+\phi_{\k_2}(x-y)\right\}+h_\itapers^2(x)h_\itapersalt^2(y)\right]\,dxdy\\
 &+\lambda 
 \int_{B} h_\itapers^2(x)h_\itapersalt^2(x)\,dx -\lambda^2-\lambda \int_{B^2} h_\itapers(x)h_\itapers(y)h_\itapersalt(x)h_\itapersalt(y)\left[\phi_{\k_1}(x-y)+\phi_{\k_2}(x-y)\right]\rtwo(x,y)dxdy\\
 &-\left\{  \int_{B^2} h_\itapers(x)h_\itapers(y)\phi_{\k_1}(x-y)\rtwo(x,y)dxdy \right\}\cdot \left\{  \int_{B^2} h_\itapersalt(x)h_\itapersalt(y)\phi_{\k_2}(x-y)\rtwo(x,y)dxdy \right\}.
\end{align*}
\end{Theorem}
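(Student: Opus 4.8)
The plan is to compute the covariance directly from $\cov\{I^t_p(\k_1),I^t_l(\k_2)\}=\E\{I^t_p(\k_1)I^t_l(\k_2)\}-\E\{I^t_p(\k_1)\}\E\{I^t_l(\k_2)\}$ and to reduce every expectation to Campbell's theorem~\eqref{eqn:Campbell}. First I would multiply out the two periodograms, using the representation~\eqref{tapered3} and the fact that the tapers are real-valued, to obtain a single expectation of a sum over ordered four-tuples of points of $X\cap B$:
\[
\E\{I^t_p(\k_1)I^t_l(\k_2)\}=\E\sum_{x,y,z,v\in X\cap B} h_p(x)h_p(y)h_l(z)h_l(v)\,\phi_{\k_1}(x-y)\,\phi_{\k_2}(z-v),
\]
where each of $x,y,z,v$ runs over $X\cap B$ independently, so that coincidences among them are allowed and must be tracked.

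Next I would evaluate this expectation by splitting the sum according to the coincidence pattern of the four points. For each set partition $\pi$ of the four ``slots'' $\{x,y,z,v\}$, the subsum over tuples realising exactly that pattern is a sum over distinct points, one for each block of $\pi$, so Campbell's theorem turns it into $\int_{B^{|\pi|}}\rho^{(|\pi|)}(\cdot)\,[\text{integrand with slots in a common block identified}]\,d(\cdot)$, with $\rho^{(1)}\equiv\lambda$ and the integral over $B^{|\pi|}$ because each $h$ is supported in $B$. (Since the integrand is complex and signed, Campbell's theorem is applied to its real and imaginary parts; finiteness of every integral is immediate, as the tapers are bounded with bounded support, $|\phi_{\k}|\le 1$, and the product densities are locally integrable.) There are fifteen partitions. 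The unique all-distinct one produces the $\rho^{(4)}$ integral in the statement. The six partitions with a single coinciding pair produce the $\rho^{(3)}$ integrals: the two ``within one periodogram'' pairings $x=y$ and $z=v$ collapse one exponential to $\phi_{\k}(0)=1$, while the four ``cross'' pairings ($x=z$, $x=v$, $y=z$, $y=v$) each leave one free point in each factor, and since the surviving difference may point either way they generate the $\phi_{\k_2}(x-v)+\phi_{-\k_2}(x-v)$- and $\phi_{\k_2}(y-v)+\phi_{-\k_2}(y-v)$-type combinations. The four partitions with a block of size three and a singleton together with the three partitions into two pairs produce the $\rho^{(2)}$ integrals (several exponentials again collapsing to $1$), and the unique all-coincide partition produces $\int_B\rho^{(1)}(x)h_p^2(x)h_l^2(x)\,dx=\lambda\int_B h_p^2(x)h_l^2(x)\,dx$. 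Throughout I would use the symmetry of $\rho^{(k)}$ in its arguments to relabel dummy variables and combine like terms.

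Then I would subtract the product of the means. From the first-moment computation underlying~\eqref{expectation} (equivalently Lemma~\ref{lemdebias}, keeping the diagonal contribution), the \emph{exact} identity $\E I^t_p(\k)=\lambda+\int_{B^2}h_p(x)h_p(y)\,\phi_{\k}(x-y)\,\rho^{(2)}(x,y)\,dx\,dy$ holds, the term $\lambda$ coming from the diagonal $x=y$ together with $\phi_{\k}(0)=1$ and $\|h_p\|_2=1$. Multiplying out $\E I^t_p(\k_1)\cdot\E I^t_l(\k_2)$ gives $\lambda^2$, two cross terms of the form $\lambda\times(\text{a single }\rho^{(2)}\text{ integral})$, and the product of two $\rho^{(2)}$ integrals; I would subtract these, combine with the size-three-plus-singleton, two-pair and all-coincide $\rho^{(2)}$ contributions (using once more the symmetry of $\rho^{(2)}$), and check that the leftover reduces exactly to the $\lambda\int_B h_p^2h_l^2$ term, the $-\lambda^2$ term, the single $\lambda$-weighted $\rho^{(2)}$ term and the closing product-of-two-integrals term of the statement. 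Assembling the $\rho^{(4)}$, $\rho^{(3)}$, $\rho^{(2)}$ and $\rho^{(1)}$ pieces then yields the claimed formula.

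The only analytic ingredient is Campbell's theorem; the hard part will be the bookkeeping. For each of the fifteen partitions one has to decide which slot-differences survive and with which sign of the wavenumber, and --- for the $\rho^{(1)}$ and $\rho^{(2)}$ pieces --- to align them correctly with the expansion of the mean product so that the intended cancellations occur and the survivors regroup precisely into the stated form; this is where sign and labelling slips are easiest to make. A secondary, routine matter is justifying Campbell's theorem for complex signed integrands and verifying convergence of all the integrals.
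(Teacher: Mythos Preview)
Your proposal is correct and follows essentially the same route as the paper: both compute $\E\{I^t_p(\k_1)I^t_l(\k_2)\}$ by partitioning the quadruple sum according to coincidence patterns, apply Campbell's theorem block by block, and then subtract the product of the exact first moments $\E I^t_p(\k)=\lambda+\int_{B^2}h_p(x)h_p(y)\phi_{\k}(x-y)\rho^{(2)}(x,y)\,dxdy$. The paper's only organisational difference is that it first writes each periodogram as the off-diagonal sum plus the diagonal $\sum_x h_p^2(x)$, producing three pieces $V_1,V_2,V_3$ whose internal coincidence cases recover exactly your fifteen set partitions; the subsequent relabelling and regrouping into the $\rho^{(4)},\rho^{(3)},\rho^{(2)}$ integrals and the residual $\lambda$-terms is identical to what you outline.
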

\begin{proof}
The proof is in Appendix~\ref{sec:Pf3}.
\end{proof}
 
This theorem derives the general expression for the covariance of the periodogram, and will help us in general to determine how to do linear smoothing.
The cross-correlation also relates to~\cite[Section V]{brillinger1974fourier}, and helps us understand when we can treat Fourier coefficients as uncorrelated (and with asymptotic Gaussianity as independent). Uncorrelatedness is important to the smoothing, as it ensures a variance reduction by averaging.
Whilst, the expression derived in Theorem~\ref{Propcovar} is general, and only requires the assumptions of homogeneity of the point process in space, it is hard to understand so let us study some special cases.
Let us study this general correlation in the instance of the Poisson process where we know that
\begin{equation}
\label{assumpPoisson}
\rho^{(n)}(x_1,\dots, x_n)=\lambda^n,
\end{equation}
and consider the case of $\k_1=\k_2=\k$.

For completeness we also define the spectral bandwidth $b_h$ by
\begin{align}
\label{eqn:defbandwidth}
    b^2_h&:=\int_{{\mathbb{R}}^d}\|\k\|^2 \left|H(\k) \right|^2d\k.
\end{align}
\begin{Corollary}[Covariance of Spectral Estimates]\label{Propcovar2}\label{Propcovar3}
Let $I^t_\itapers(\k)$ denote the tapered periodogram given by
~\eqref{tapered} for a Poisson process with intensity $\lambda$ using a single taper $\itapers$, and assume that $\|k\|>\max\{b_{h_{\itapers}},b_{h_{\itapersalt}}\}$. We can determine the covariance between the periodogram using different tapers ($\itapers$ and $\itapersalt$) or at different frequencies by the following:
\begin{align*}
\cov& \{I^t_\itapers(\k) , I^t_\itapersalt(\k)\}=\lambda 
 \int_{B} h_\itapers^2(x)h_\itapersalt^2(x)\,dx+\lambda^2\delta_{\itapers\itapersalt}+o(1).
\end{align*}
Note that $o(1)$ is in $|B|$ diverging.
The covariance between the (tapered) periodogram and itself at different wavenumbers ($\k_1$ and $\k_2$) takes the form of
\begin{align*}
\cov& \{I^t_\itapers(\k_1) , I^t_\itapers(\k_2)\}=\lambda \|h_\itapers\|_4^4=o(1).
\end{align*}
\end{Corollary}
\begin{proof}
The proof can be found in Appendices~\ref{sec:Pf4} and~\ref{sec:Pf5}. 
\end{proof}
At this stage we have simplified our assumptions to the Poisson process which is quite disappointing. The reason why we have chosen to do so is clear from Theorem~\ref{Propcovar}. If we assume the process is Poisson then several terms cancel. However if the Fourier transform is turning Gaussian 
as we have discussed in~\eqref{CLT} in Section~\ref{sec:dirspectral}, with uniform integrability of $J_h(\k)$ the following proposition holds.
\begin{Proposition}\label{Propcovar3b}
Let $I^t_\itapers(\k)$ denote the tapered periodogram given by~\eqref{tapered}, and $\tilde{I}^t_\itapers(\k)$ the tapered bias corrected periodogram. Assume $X$ satisfies the assumptions given for~\eqref{CLT} and that $\|\widetilde{f}^{(m)}\|_0<\infty$ for $m=2,3,4,5,6$. Then
\begin{align*}
\cov \{\widetilde{I}^t_\itapers(\k_1) , \widetilde{I}^t_\itapersalt(\k_2)\}
=o(1)+\left|\E\{
\widetilde{J}_\itapers(\k_1)\widetilde{J}^\ast_\itapersalt(\k_2)\}\right|^2.
\end{align*}
To derive the form of the latter term we note
\begin{align}
\nonumber
 \E\{
\widetilde{J}_\itapers(\k_1)\widetilde{J}^\ast_\itapersalt(\k_2)\}&=   \cov \left\{  {J}_\itapers(\k_1),{J}_\itapersalt(\k_2)\right\}=\lambda\delta_{\itapers\itapersalt}+ \lambda^2\cov_1 \left\{ {J}_\itapers(\k_1),{J}_\itapersalt(\k_2)\right\},
    \end{align}
    where we have $\tilde{f}(\freq)$ given by \eqref{eqn:fourierexcess} and so
    \begin{align}
    \nonumber
        \cov_1 \left\{ {J}_\itapers(\k_1),{J}_\itapersalt(\k_2)\right\}
        &\equiv \iint_{R^d\times R^d}\widetilde{\rho}(x-y)h_\itapers(x)e^{-2\pi i \freq_1\cdot  x}h^\ast_\itapersalt(y)e^{2\pi i \freq_2\cdot  y}\,dx\,dy+o(1)\\
        \nonumber
        &=\iint_{R^d\times R^d} 
        \int_{R^d}\widetilde{f}(\freq')e^{-2\pi i x\cdot(\freq_1-\freq')}e^{2\pi i y\cdot(\freq_2-\freq')}d\freq' \cdot h_\itapers(x)h^\ast_\itapersalt(y)\,dx\,dy+o(1)\\
        &=\int_{R^d}\widetilde{f}(\freq') H_\itapers(\freq_1-\freq') H^\ast_\itapersalt(\freq_2-\freq')d\freq'+o(1) .
    \end{align}
    We note that as $H_\itapers(\freq)$ has concentrated support we will be able to apply similar arguments to those of Appendix D to determine nearly uncorrelated DFTs.
\end{Proposition}
\begin{proof}
See Appendix~\ref{proofcovar3b}
\end{proof}

\begin{Definition}
The tapered Fourier wavenumber grid for a point process observed on a cuboid domain $B_\square(\bm{l})$ corresponds to the points
\begin{equation}
{\cal K}_\ntapers(B_\square(\bm{l}))=\{\k_n \;:\;
\k_n=\begin{pmatrix} p_{n1}\tau(\epsilon) /l_1,\dots, p_{nd}\tau(\epsilon)/l_d\end{pmatrix},\quad p_{nj}\in{\mathbb{Z}}\},
\end{equation}
where $\tau(\epsilon)$ is the smallest positive real number such that $\left|\int H_\itapers(\tau(\epsilon)-\k') H^\ast_\itapersalt(\k') \;d\k'\right|<\epsilon,\,$ for all $\itapers,\itapersalt\in\{1\ldots\ntapers\}$.
\end{Definition}

These results hint at producing general methodology; we can for the Poisson process figure out how to do non-parametric estimation. This is however not enough  and we need to determine the variance of the periodogram more generally. 
To go further than Proposition~\ref{Propcovar3} and avoid the Poisson assumption we refer back to~\eqref{CLT}. As ${J}_h(\freq)$ is becoming Gaussian for larger sample sizes this implies its moments must also start to behave like the moments of the Gaussian  ${J}_h(\freq)$. 

Note that this result is not contradictory to Theorem~\ref{Propcovar} from a dimensional perspective as this theorem concerns the modulus square of the Fourier transform.
With these moments we can now consider the problem of estimation. 
We might feel an increasing degree of unease as we have succumbed to the usual fallacy of only deriving the most general results for a Poisson process. However as the observational domain becomes larger we would expect $|H_\itapers(\freq)|$ to concentrate and so having a non-constant spectrum will become less of an issue. 


\section{Linear Smoothing}\label{sec:linear}
{
Having already derived the first and second order properties of the periodogram, in this Section
we propose an accurate estimator of $f(\k)$ formed by smoothing. 
}
If we fully characterise $f(\k)$ then we have fully characterised $\gamma(z)$, as the Fourier transform is a bijection. Just like for a time series or random field, a key question concerns the correct method for quadratic characterisation and estimation.


We want to start from the random variables $\{I_h(\freq)\}$ or $\{\widetilde{I}_h(\freq)\}$ and directly from those quantities (non-parametrically) estimate the spectral density. As long as we seek to estimate $f(\k)$ at points $\k$ where the function $f(\k)$ is continuous,
averaging $\{I_0(\k)\}$ (or $\{\widetilde{I}_h(\freq)\}$) locally seems like a sensible strategy. 
\cite{diggle1987nonparametric} proposed smoothing the 2D isotropic $\widetilde{I}_D(\k)$ as defined in~\eqref{diggle4}, and applied a weighted average smoother with user tuned weights. This approach, however, does  not determine how to sample wavenumbers, or remove correlation. We would additionally like to include the new form of bias removal before smoothing.

We have two clear options available to us, either smoothing the raw periodogram at the Fourier frequencies (de-biasing is not needed at the  Fourier frequencies, because we defined the grid such that the bias will be 0), or by using multitapering. We then have two possible estimators, namely the bias--corrected multi-taper estimator of~\cite{walden2000unified} as well as 
a multi-dimensional kernel density estimator~\cite{duong2005cross}. They are both linear in the estimated spectrum. We shall use the periodogram rather than another estimator for the kernel density estimation, as it is easier to keep track of bias and correlation.

We therefore define the multitaper estimator~\cite{walden2000unified} (and refer to~\cite{hanssen1997multidimensional,liu1992multiple})
\begin{equation}
    \overline{I}_{\ntapers}^t(\freq)=\frac{1}{\ntapers}\sum_{\itapers=1}^{\ntapers}
    \widetilde{I}^t_{\itapers}(\freq),\quad \freq\in{\mathbb{R}}^d,
    \label{eqn:multitaper}
\end{equation}
where $ \widetilde{I}^t_{\itapers}(\freq)$ is defined in~\eqref{tapered}. 

Why is it advantageous to use the estimator in~\eqref{eqn:multitaper}? Because $\widetilde{I}^t_{\itapers}(\freq)$ is uncorrelated across $\itapers=1,\dots, \ntapers$ it follows that the variance of $\overline{I}^t_\ntapers(\freq)$ decreases like $1/\ntapers$. Or
\begin{align}
\nonumber
    \var\{  \overline{I}_{\ntapers}^t(\freq)\}&=\frac{1}{\ntapers^2}\sum_{\itapers=1}^{\ntapers}\sum_{\itapersalt=1}^{\ntapers}
   \cov\{ \widetilde{I}^t_{\itapers}(\freq),\widetilde{I}^t_{\itapersalt}(\freq)\}\\
   \nonumber
   &=\frac{1}{P{\ntapers}^2}\sum_{\itapers=1}^{\ntapers}\sum_{\itapersalt=1}^{\ntapers}
  \left(\lambda \delta_{\itapers \itapersalt}+\lambda^2 (\tilde{f}(\freq) +o(1))\delta_{\itapers \itapersalt}\right)^2\\
  \nonumber
   &=\frac{\lambda^2}{\ntapers^2}\sum_{\itapers=1}^{\ntapers}\left(1+\lambda \tilde{f}(\freq) \right)^2+o(1/\ntapers)
   =\frac{\lambda^2}{\ntapers}\left\{1+\lambda \tilde{f}(\freq) \right\}^2+o(1/\ntapers)\\&=
   \frac{1}{\ntapers} \{f(\freq)+o(1)\}^2 + o(1/\ntapers),
\end{align}
where the last equality follows from Lemma~\ref{Lemmasimplify}, which is a consequence of Proposition~\ref{Propcovar3b}.
We shall explore these results in practical scenarios in simulations in Section~\ref{sec:simulations}. 

An additional useful modification is to consider a one-dimensional smoothing-function $W(\k)$ and a bandwidth matrix $\bm{\Omega}$ and implement multidimensional smoothing, see e.g.~\cite{duong2005cross}, and   
averaging over directions, thus reducing the statistic to magnitude $t=\|\k\|$ only, 
\begin{equation}
\label{angularaveragedI}
    \overline{I}_h(t;\Omega)=\sum_{\k'\in {\cal K}} {\Omega}^{-1}W\left({\Omega}^{-1}\left(||\k'||- t \right) \right)\widetilde{I}_h\left(\k'\right),
\end{equation}
which we refer to as the ``rotationally averaged periodogram''. \cite{mugglestone1996practical} call a version of this function the $R$-spectrum. Alternatively, one could average also over the magnitudes to arrive at a summary function in direction. In 2D the convenient argument would be the polar angle; \cite{mugglestone1996practical} call this version the $\theta$-spectrum  and use it to study anisotropy.

\section{Isotropic spectral estimation}\label{sec:isotropy}
{
In this Section we show how to form isotropic estimators and discuss their properties. Isotropic estimators are vitally important in two dimensions and higher as it becomes increasingly difficult to interpret spectra.}
 To simplify our representation, it is commonly assumed that we are analysing isotropic processes, and that only isotropic summaries need to be produced.

We therefore consider the special case of stationary and  isotropic point processes for which the complete covariance function depends only on distance, $\gamma(z)= \gamma_I(||z||)$ \cite[p.~310--311]{Daley2003}. Such radial functions \cite{grafakos2013fourier} are special as we do not require any orientational specificity in our spectral representation. The isotropy of $\gamma(z)$ transfers through the Fourier transform to the sdf so in turn $f(\freq)=f_I(||\freq||)$. We shall refer to the treatment of isotropic random fields when implementing analysis~\cite{Ponomarenko2006spectral}. An alternative is discussed by~\cite{errico1985spectra} and~\cite{Durran2017}, for random fields. 

The orientation invariance of the spectrum leads to dimensional reduction of the multidimensional Fourier transform. Three basic concepts are useful for the isotropic analysis: Recall the Bessel function of order $\besorder>-1/2$   
$$\bessel_\besorder(t) = \frac{t^\besorder}{2^\besorder\pi^{1/2}\Gamma(\besorder+1/2)}\int_0^\pi \sin^{2\besorder}(u)~\cos[t\ \cos(u)]\;du\qquad t \ge 0.$$
It is connected to the $d$-dimensional Fourier transform via \cite[Sec 3]{vembu1961fourier}
$$
\int_{\Sd}e^{-2\pi t i v\cdot u} du=2\pi~t^{-(d/2-1)}\bessel_{d/2-1}(2\pi t)\quad \forall v\in \Sd, \; t\geq 0,
$$
we can also note relationships between radial and Cartesian Fourier representations, see for example \cite{grafakos2013fourier} or \cite{estrada2014radial}.

Recall also the (1-dimensional) Hankel transform of order $m$ of a function $g:[0,\infty]\mapsto\R$ 
$$\mathcal{H}_\besorder[g](t) = \int_0^\infty g(r)\bessel_\besorder(tr)r~dr,\qquad t \ge 0.$$
Then if we assume $X$ is a stationary and isotropic process with a radial second order product density $\rho^{(2)}_I$, the sdf at wavenumber $\freq$ with  $t=\|\freq\|>0$ takes the form
\begin{align}
\nonumber
	\label{eq:isotropicsdf}
    f_I(t=\|\freq\|):=f(\freq)&= \lambda + \int_{\Rd}\left\{\rho^{(2)}(z)-\lambda^2\right\}e^{-2\pi i \freq\cdot z}\;dz\\
        &= \lambda + \int_{0}^\infty\left\{\rho_I^{(2)}(r)-\lambda^2\right\} r^{d-1} \int_{\Sd}e^{-2\pi ri~ \freq\cdot u}\;du\;dr\nonumber\\
    	&= \lambda +  2\pi\int_0^\infty \left\{\rho_I^{(2)}(r)-\lambda^2\right\} \left(\frac{r}{t}\right)^{d/2-1}\bessel_{d/2-1}(2\pi tr) r dr \nonumber\\
    	&\equiv \lambda +  2\pi~\mathcal{H}_{d/2-1}\left[\left\{\rho_I^{(2)}(r)-\lambda^2\right\} \left(\frac{r}{t}\right)^{d/2-1}\right](2\pi t),
\end{align}
i.e. the sdf is linearly related to the Hankel transform of the complete covariance function.

Finally we recall the radially averaged set covariance function,
$$\bar{\nu}_B(r) := \frac{1}{|S^{d-1}|}\int_{S^{d-1}}|B\cap B_{ru}|du , \quad r\ge 0$$
where $\nu_B(z) := |B\cap B_z|$ is the set covariance function of the set $B$ \cite[Appendix B.3]{IPSS2008}.

We start by noting the expectation of~\eqref{diggleD}; by direct calculation it follows that this takes the form:
\begin{align}
\nonumber
    \E\{ \bar{I}_{h}(\|\k\|)\} 
    &=\lambda+\int_{B}\int_{B\cap B_{-z}}
    h_I\left(\|z\| \right)\ \bessel_{d/2-1}\left(2\pi\|\k \| \|z\|\right)\ \|z\|^{d/2-1}\ \tilde{\rho}^{(2)}(\|z\|) \ dz dx.
\end{align}
As the change of variable to polar coordinates means we will multiply by $\|z\|^{d-1}$, and this will correspond to a $d/2-1$ Hankel transform of the taper $h(\cdot)$ multiplied by the product density. For more details on the Hankel transform see~\cite{grafakos2013fourier}.

To gain insight into how we might estimate the radial sdf besides numerically rotation-averaging the $d$-dimensional periodogram using Eq.~\eqref{angularaveragedI} we analytically rotation-average Bartlett's periodogram (Eq.~\ref{Ih}): with $t\ge 0$
\begin{align}
    \overline{I}_0(t)&:=\frac{1}{|S^{d-1}|}\int_{S^{d-1}} {I}_0(tu)\, du\\
    \nonumber
    &=\frac{1}{|B||S^{d-1}|}\sum_{x\in X\cap B} \sum_{y\in X\cap B} \int_{S^{d-1}} e^{-2\pi i (t u)\cdot (x-y)}\, du\\
    \nonumber
    &= \hat\lambda + \frac{1}{|B||\Sd|}\sum_{x,y\in X\cap B}^{\neq} \int_{S^{d-1}} e^{-2\pi t i (x-y)\cdot u}\ du\\
\label{eq:rawRA}
    &= \hat\lambda + \frac{2\pi}{|B||\Sd| t^{d/2-1}}\sum_{x,y\in X\cap B}^{\neq} \bessel_{d/2-1}(2\pi t\|x-y\|)~\|x-y\|^{-(d/2-1)},
\end{align}
which in the planar case simplifies to 
\begin{align}
\nonumber
&\stackrel{d=2}{=}~ \hat\lambda + \frac{1}{|B|}\sum_{x,y\in X\cap B}^{\neq} \bessel_{0}(2\pi t\|x-y\|).
\end{align}
This explains the motivation behind the isotropic estimator discussed by \cite{Bartlett1964} and \cite{diggle1987nonparametric}. The aforementioned authors prefer to normalise by $N(B)$ rather than by $|B|$, corresponding to dividing \eqref{eq:rawRA} by $\hat\lambda$. 
Like the classical periodogram, the isotropic ``shortcut'' estimator of Eq.\eqref{eq:rawRA} is highly biased near $0$, as has already been observed by \cite{diggle1987nonparametric}. The biases are described in the following proposition. 

\begin{Proposition}
	\label{BesselE0}
	The isotropic estimator given by \eqref{eq:rawRA} has expectation
	\begin{align}
	\nonumber
	\E\left\{ \bar{I}_0(t)\right\} =& 
	\lambda + 2\pi~\mathcal{H}_{d/2-1}\left[\left\{\rho_I^{(2)}(r)-\lambda^2\right\} \left(\frac{r}{t}\right)^{d/2-1}~\frac{\bar{\nu}_B(r)}{|B|}\right](2\pi t) +
	2\pi\lambda^2~\mathcal{H}_{d/2-1}\left[\left(\frac{r}{t}\right)^{d/2-1}~\frac{\bar{\nu}_B(r)}{|B|}\right](2\pi t)
	\end{align}
\end{Proposition}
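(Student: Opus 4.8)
The plan is to compute $\E\{\bar I_0(t)\}$ directly from the closed form~\eqref{eq:rawRA}, taking expectations of its two terms separately. Since $B$ is bounded, the first term gives $\E\{\hat\lambda\}=|B|^{-1}\E\{N(B)\}=\lambda$, which is the $n=1$ instance of Campbell's theorem~\eqref{eqn:Campbell}. For the sum over distinct pairs I would apply Campbell's theorem with $n=2$ (in the integrable-function form already used repeatedly above) to $h(x,y)=\bessel_{d/2-1}(2\pi t\|x-y\|)\,\|x-y\|^{-(d/2-1)}\,\I_B(x)\I_B(y)$; this is legitimate because the integral representation of $\bessel_m$ shows that $\bessel_m(s)\,s^{-m}$ stays bounded as $s\to0$, while $B\times B$ is bounded and $\rho_I^{(2)}$ is locally integrable, so $h$ is integrable against $\rho^{(2)}$. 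Using stationarity and isotropy, $\rho^{(2)}(x,y)=\rho_I^{(2)}(\|x-y\|)$, so the expected sum equals $\int_B\int_B \bessel_{d/2-1}(2\pi t\|x-y\|)\,\|x-y\|^{-(d/2-1)}\,\rho_I^{(2)}(\|x-y\|)\,dx\,dy$.

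Next I would reduce this double integral to one dimension. Writing $z=x-y$ and integrating out $x$ (Fubini, together with the symmetry $\nu_B(z)=\nu_B(-z)$ of the set covariance) gives $\int_{\Rd} g(\|z\|)\,\nu_B(z)\,dz$ with $g(r):=\bessel_{d/2-1}(2\pi tr)\,r^{-(d/2-1)}\,\rho_I^{(2)}(r)$. Because $g$ is radial, passing to polar coordinates $z=ru$, $u\in\Sd$, and using $\int_{\Sd}\nu_B(ru)\,du=|\Sd|\,\bar\nu_B(r)$ turns this into $|\Sd|\int_0^\infty g(r)\,\bar\nu_B(r)\,r^{d-1}\,dr$; here the radially averaged set covariance $\bar\nu_B$ is exactly what survives, since $\nu_B$ itself is not radial. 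Reinstating the prefactor $2\pi/(|B||\Sd|\,t^{d/2-1})$ from~\eqref{eq:rawRA}, the factors $|\Sd|$ cancel and the powers combine as $r^{-(d/2-1)}r^{d-1}=r^{d/2-1}\cdot r$, so pulling the constant $t^{-(d/2-1)}$ inside the integral yields $2\pi\,\mathcal{H}_{d/2-1}\bigl[\rho_I^{(2)}(r)\,(\bar\nu_B(r)/|B|)\,(r/t)^{d/2-1}\bigr](2\pi t)$. Decomposing $\rho_I^{(2)}(r)=\{\rho_I^{(2)}(r)-\lambda^2\}+\lambda^2$ and using linearity of $\mathcal{H}_{d/2-1}$ splits this into the two Hankel terms in the statement; adding the $\lambda$ from the first term completes the identity. (Equivalently one may take $\E$ inside the rotation average and substitute $\E\{I_0(tu)\}$, but the computation is the same.)

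There is no obstacle of principle here — it is a bookkeeping computation — so the only place needing genuine care is the geometric reduction: the change of variables $z=x-y$ on $B\times B$ producing the set covariance $\nu_B$, then the angular integration over $\Sd$ collapsing $\nu_B$ to $\bar\nu_B$, together with the exponent accounting that matches the integrand to the $\bessel_{d/2-1}(lr)\,r$ normalisation of $\mathcal{H}_{d/2-1}$. Isolating the $\lambda^2$ piece is deliberate rather than cosmetic: the term $2\pi\lambda^2\,\mathcal{H}_{d/2-1}[(r/t)^{d/2-1}\,\bar\nu_B(r)/|B|](2\pi t)$ does not vanish as $t\to0$ and is precisely the source of the near-origin bias of $\bar I_0$ observed by~\cite{diggle1987nonparametric}, while the first Hankel term reproduces the true isotropic sdf of~\eqref{eq:isotropicsdf} attenuated by the kernel $\bar\nu_B(r)/|B|\le1$.
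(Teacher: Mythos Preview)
Your proposal is correct and follows exactly the route the paper sketches: apply Campbell's theorem to the pair sum in~\eqref{eq:rawRA}, reduce to a radial integral via the set covariance and polar coordinates, and then split $\rho_I^{(2)}=\{\rho_I^{(2)}-\lambda^2\}+\lambda^2$. The paper's own proof is essentially a one-liner (``apply Campbell's theorem and add and subtract the $\lambda^2$ Hankel term''), so you have simply written out the bookkeeping --- the $z=x-y$ change of variables, the appearance of $\nu_B$, the angular average producing $\bar\nu_B$, and the exponent matching to the $\mathcal H_{d/2-1}$ normalisation --- that the paper leaves implicit.
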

\begin{proof}
	The proof follows by applying the Campbell's theorem and adding and subtracting $$2\pi\lambda^2~\mathcal{H}_{d/2-1}\left[\left(\frac{r}{t}\right)^{d/2-1}~\frac{\bar{\nu}_B(r)}{|B|}\right](2\pi t) = \lambda^2\frac{1}{|S^{d-1}||B|}\int_{S^{d-1}}T(B, tu) du.$$
\end{proof}
If we compare the formulas for the expectation of the isotropic sdf in Eq.\eqref{eq:isotropicsdf} and the expectation in Proposition \ref{BesselE0}, we recognise two sources of bias, just like with the periodogram. There is a convolution with a function $\bar{\nu}_B/|B|$, coming from the finite observation window like in the periodogram but this time radially averaged, and a centring bias term that is a radially averaged version of the $d$-dimensional bias. Diggle's truncation-to-local-minimum construction, see~\eqref{diggle4}, tries to correct for the biases, but among other things, it will not work well for a clustered process as the spectrum near zero will be underestimated.

To work towards a tapered estimator, consider rotation averaging the debiased, tapered estimator of \eqref{biascorrect}, in expectation,
\begin{align}
\label{orientav}
&\E\frac{1}{|\Sd|}\int_{\Sd}
\widetilde{I}_h(tu)\, du
\\
=&{\lambda}+\frac{2\pi}{|\Sd|t^{d/2-1}}\E\left[ \sum_{x,y\in X\cap B}^{\neq} h(x) h(y) \bessel_{d/2-1}\left(2\pi t \| x-y\| \right) \| x-y\|^{-(d/2-1)}\right] -\frac{\lambda^2}{|\Sd|}\int_{S^{d-1}}\left| H(t u)
\right|^2\,du. \nonumber 
\end{align}

The problem is that a radial function $h_I$ might not exist for which $h_I(||x-y||)=h(x)h(y)$. But if we consider data-tapering the observed differences $\{x-y:x,y\in X\cap B\}$ instead of in $B$, then we can device a taper $h_I$ in $B\oplus B = \bigcup_{x\in B}\{B+x\}$, thus we propose the following isotropic estimator:
\begin{align}
\label{eq:debiasedIsotropic}
\bar I_h(t) := \hat\lambda + \frac{2\pi}{|\Sd| t^{d/2-1}|B|}\sum_{x,y\in X\cap B}^{\neq} h_I(x-y)\bessel_{d/2-1}(2\pi t\|x-y\|)~\|x-y\|^{-(d/2-1)} - \widehat{\lambda^2}Bias_I(t).
\end{align}
The bias correction term consists of the rotation average of both the taper and the set covariance of $B$,
\begin{eqnarray*}
    Bias_I(t)&=& 2\pi\mathcal{H}_{d/2-1}\left[\left(\frac{r}{t}\right)^{d/2-1}\frac{\bar{p}_{B,h}}{|B|}\right](2\pi t), \quad \text{where}\\
    \bar{p}_{B,h}(r) &=& |S^{d-1}|^{-1}\int_{S^{d-1}}|B\cap B_{ru}|h_I(ru)du,
\end{eqnarray*}
and depends only on $t=|\freq|$. We warn the reader that the estimator of~\eqref{eq:debiasedIsotropic} is not necessarily positive, but like previous authors ~\cite{LiiMasry1994,bandyopadhyay2009asymptotic}, we sacrifice positivity in order to remove large part of the bias. Additional issues arise from estimation of the required $\lambda^2$, as the standard estimator $\hat{\lambda}^2$ is biased with a term depending on $\lambda/|B|$ and a term depending on the unknown second order product density of the underlying process. In our examples we use the slightly less biased estimator of $\widehat{\lambda^2}=N(B)[N(B)-1]/|B|^2$ which removes the first order bias. On the positive side, note that the debiased isotropic estimator without particular tapering (i.e. $h_I\equiv 1$) can be formed without any tuning parameters.

We may now ask what radial taper to use. Tapers are normally designed for processes observed in discrete time or space. We have discussed using separable tapers, and now note that the choice of the taper needs to match the observational domain. For example it is difficult to match a square observational domain with a radial taper, unless one chooses a compact taper inscribed by the box. There are therefore three considerations 1) demanding the taper be separable in space, 2) isotropic in space, or 3) exactly compact in space. Discrete and isotropic tapers have been been determined numerically by~\cite{simons2011spatiospectral}. This will not be an option for us as we need to evaluate the tapers at random locations.

Slepian and co-authors have studied the design of tapers in arbitrary dimensions~\cite{slepian1964prolate}. Often solutions are in terms of prolate spheroidal wavefunctions. Their discrete analogue is the prolate spheroidal wave sequences, which are similar to the set of Hermite function~\cite{xu1999multiple}. However, the isotropic estimator can be seen as an estimator for the Fourier transform of the (non-stationary) first moment of the difference process $Z=\{z=x-y:x,y\in X\cap B\}$, with the connection $\rho^{(2)}(z) = \lambda_Z(z)/|B\cap B_z|$.

Assuming the observation window is a cuboid $B=B_\square(\bm{l})$, 
then the difference vector observation window is $B\oplus B=\prod_{j=1}^d[-\bside_j,\bside_j]$, and we can create the $d$-dimensional taper related to the Hermite functions as the product of the squared exponentials 
\begin{equation}
    h_I(z) = \prod_{j=1}^d e^{-az_j^2/(2\bside_j)^2}, \quad z\in\Rd
\end{equation}
We see that if $\bside_j=\bside$ for all $j$, then $h_I$ becomes radial. This is related to circular harmonic decompositions, see~\cite{jacovitti2000multiresolution}. The authors of~\cite{xu1999multiple} recommend using a scaling factor which we have adjusted for a sampling region of $z_j\in (-\bside_j,\bside_j)$, but which they match to the spheroidal wave functions. We propose setting $a=25$, as this numerically seems to not down--weight too much data, or have a too large jump near the border of $B\oplus B$.

In a sense our above construction is a ``fix'' as it only works for the 0th Hermite function, and using more than one taper will even if $\bside_j=\bside$ for all $j$ not lead to isotropic functions. This is not unsurprising as the sampling domain will leave an imprint, and capturing this set of information requires using more than the first taper, see analogous discussion in 1D~\cite[Ch.~6]{Percival1993}.
Also, if we have a spatial sampling that is highly elongated rectangle rather than a square then another solution needs to be chosen.   
Finally we may ask ourselves, what happens if we have a point process $X$ which is anisotropic but we still evaluate~\eqref{diggle} in 2D for the pattern? Because it incorporates angular averaging we expect the RHS of the expression to only depend on $\|\freq\|$ as it does.

\begin{Proposition}{Expectation of Diggle's 2D Estimator II.}\label{BesselE}
The large area expectation of Diggle's estimator given in~\eqref{diggle} when $B$ is a cuboidal box $B_\square(\bm{l})$ and when the spectrum is not necessarily isotropic takes the form in terms of $\widetilde{f}(\freq)$ as defined in~\eqref{eqn:fourierexcess} of 
\begin{align}
    \E\{ I_D(\freq)\}&=\lambda+\lambda^2\int_{{\mathbb{R}}^d}
    \widetilde{f}(\freq')\frac{\delta(\|\freq\|-\|\freq'\|)}{\|\freq\|}d\freq'
+\int_{{\mathbb{R}}^d}\frac{\delta(\|\freq\|-\|\freq'\|)}{\|\freq\|}
|B|^{-1}T(B,w-\freq')d\freq'+o(1),
\end{align}
where $T(B,\freq)$ is defined in Lemma~\ref{lemma1}.
\end{Proposition}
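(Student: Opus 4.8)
The plan is to exploit the fact that, for $\|w\|$ outside the truncation band (i.e.\ $\|w\|>t_0$, where the modification \eqref{diggle4} is inactive), Diggle's estimator is simply the spherical average of Bartlett's periodogram over the sphere of wavenumbers of modulus $\|w\|$. This is precisely identity \eqref{eq:rawRA}, obtained from the Bessel representation $\int_{\Sd}e^{-2\pi l i v\cdot u}\,du=2\pi\,l^{-(d/2-1)}\bessel_{d/2-1}(2\pi l)$. Hence $\E\{I_D(w)\}=|\Sd|^{-1}\int_{\Sd}\E\{I_0(\|w\|u)\}\,du$, and the task splits into (i) inserting the exact mean of $I_0$ and (ii) justifying that the growing--domain limit can be taken inside the spherical integral.

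For step (i) I would apply \eqref{expectation} with the flat taper $h_0=|B|^{-1/2}\I_B$, for which $|H_0(\k)|^2=|B|^{-1}|\Fou[\I_B](\k)|^2=|B|^{-1}T(B,\k)$ by a direct computation (this is exactly the $T$ of Lemma~\ref{lemma1}), so that $\E\{I_0(\k)\}=|B|^{-1}\int_{\Rd}T(B,\k-v)f(v)\,dv+\lambda^2|B|^{-1}T(B,\k)$. Writing $f(v)=\lambda+\lambda^2\widetilde f(v)$ via \eqref{eqn:fourierexcess}, and using $\int_{\Rd}T(B,\cdot)=|B|$ (Parseval, since $\|\I_B\|_2^2=|B|$), the constant part of the convolution returns exactly $\lambda$, leaving $\E\{I_0(\k)\}=\lambda+\lambda^2\bigl(|B|^{-1}T(B,\cdot)*\widetilde f\bigr)(\k)+\lambda^2|B|^{-1}T(B,\k)$. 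The same identity can be reached from scratch, in the house style of Proposition~\ref{BesselE0}, by applying Campbell's theorem \eqref{eqn:Campbell} to the double sum in \eqref{diggle}, changing variables $z=x-y$ so the inner integral over $x$ produces the set covariance $\nu_B(z)=|B\cap B_z|$ with $\Fou[\nu_B]=T(B,\cdot)$, and invoking the convolution theorem together with $\Fou[\rtwo(\cdot)-\lambda^2]=\lambda^2\widetilde f$; the direction--dependence of $\rtwo$ forbids reducing this to a one--dimensional Hankel transform, so the full $d$--dimensional Fourier structure is retained here.

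For step (ii) I would average the last display over $\k=\|w\|u$, $u\in\Sd$, and let $\min_j\ell_j\to\infty$. Since $|B|^{-1}T(B,\cdot)$ is a nonnegative approximate identity of unit mass concentrating at the origin, $\bigl(|B|^{-1}T(B,\cdot)*\widetilde f\bigr)(\|w\|u)\to\widetilde f(\|w\|u)$; twice differentiability of $f$ off the origin and compactness of the sphere $\{\|\k\|=\|w\|\}$ upgrade this to uniform convergence in $u$, so the limit passes inside the spherical integral and produces $\lambda+|\Sd|^{-1}\lambda^2\int_{\Sd}\widetilde f(\|w\|u)\,du$ together with the centering term $|\Sd|^{-1}\lambda^2|B|^{-1}\int_{\Sd}T(B,\|w\|u)\,du$, which is kept because it is the leading bias as $\|w\|\to0$. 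Rewriting the spherical averages as integrals over $\Rd$ against the one--dimensional surface measure $\delta(\|w\|-\|v\|)/\|w\|$ (the coarea formula, equivalently polar coordinates) turns these two terms into the two displayed integrals, with $T(B,w)$ as in Lemma~\ref{lemma1}.

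The hard part is step (ii): establishing the $o(1)$ \emph{uniformly over orientations}, so that the growing--domain limit and the spherical integral commute. Near the origin the radial kernel $\bessel_{d/2-1}(2\pi\|w\|\|z\|)\,\|z\|^{-(d/2-1)}$ and the set covariance $\nu_B$ interact, and controlling the approximate--identity remainder for $|B|^{-1}T(B,\cdot)*\widetilde f$ is exactly where twice differentiability of $f$ away from $0$ is used. A secondary point to watch — which distinguishes this from the isotropic Proposition~\ref{BesselE0} — is the bookkeeping of the angular normalization constants when passing from the averages $\int_{\Sd}(\cdot)\,du$ to the surface--delta form, and verifying that the window contribution is correctly split between the ``collapsed'' $\widetilde f$--term and the convolutional centering term displayed in the statement.
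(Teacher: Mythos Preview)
Your approach is correct and reaches the same destination as the paper, but the two are organized from opposite ends. The paper works in the space domain: it applies Campbell to the Bessel double sum directly, then computes the $2$D Fourier transform of $J_0(2\pi\|w\|\,\|z\|)$ via the Hankel orthogonality relation $\int_0^\infty J_0(2\pi\|w\|r)J_0(2\pi\|u\|r)\,r\,dr\propto\delta(\|w\|-\|u\|)/\|w\|$, and finally invokes the convolution theorem on the products with $|B|^{-1}|B\cap B_{-z}|$ (whose transform is $|B|^{-1}T(B,\cdot)$) and $\rho^{(2)}(z)-\lambda^2$ (whose transform is $\lambda^2\widetilde f$). You instead work in wavenumber from the start, using $\bar I_0=|\Sd|^{-1}\int_{\Sd}I_0(\|w\|u)\,du$ (eq.~\eqref{eq:rawRA}) to inherit the already-established mean of $I_0$ from eq.~\eqref{expectation}/Lemma~\ref{lemma1}, and then recast the spherical average via coarea. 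The two routes are Fourier-dual: the spherical average of the complex exponential is exactly what produces $J_0$, so your coarea step and the paper's Bessel--Fourier computation encode the same identity.

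What your organization buys is economy---you recycle Lemma~\ref{lemma1} wholesale instead of re-deriving the Fej\'er structure, and your explicit caveat about the truncation at $t_0$ (so that $I_D=\bar I_0$) is a point the paper's proof leaves implicit. What the paper's route buys is that the $o(1)$ remainder is displayed as a concrete double convolution $\int\int\lambda^2\widetilde f(u')\,\delta(\|w\|-\|u''\|)\,\|w\|^{-1}\,|B|^{-1}T(B,u''-u')\,du''\,du'$ before the approximate-identity limit is taken, which makes the structure of the bias visible without the separate uniformity-in-orientation argument you (rightly) flag as the delicate step. Both are fine; your step~(ii) uniformity follows from exactly the Taylor argument in the proof of Lemma~\ref{lemma1} combined with compactness of the sphere $\{\|v\|=\|w\|\}$, as you note. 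Do watch the $|\Sd|$ bookkeeping and the $\lambda^2$ prefactor on the centering term when matching your final display to the statement.
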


\begin{proof}
The proof is provided in Appendix~\ref{proofBesselE}.
\end{proof}
This shows what happens when we calculate isotropic summaries of quantities that are not isotropic per se. {Thus the two propositions determine what expectation the analytically orientationally averaged periodogram and the second shows us how Diggle's estimator mixes orientational information up to produce an isotropic estimator.}

\section{Simulations}\label{sec:simulations}

To study the behaviour of the estimators we have discussed, and additionally some of the debiasing results, we conducted an extensive simulation study.

\subsection{Details for executing the simulations study}

We simulated three stationary models in 2D to represent the archetypal spatial patterns of regularity, complete randomness and clustering. Throughout the simulations the intensity, i.e. expected number of points per unit area, was kept constant at $\lambda\equiv 0.01$. The processes were simulated in square observation windows $B_n=[-\bside_n/2, \bside_n/2]^2$ of increasing area, such that on average we observed $\lambda|B_n| = \lambda \bside_{n}^2 = n = 25, 50, 100, 200, 400$ or $800$ points. For example, $\bside_{n=100}=100$ and $\bside_{n=400}=200$ spatial units. We chose the regime both to explore the often encountered small data scenarios and to ascertain asymptotic behaviour. The models below are tuned so that they produce patterns with visually distinct structures at $\bside_{n=100}$.

Complete randomness i.e. the Poisson process has no adjustable parameters after $\lambda$ was fixed. Spatial regularity is represented by the Mat\'ern type II process \cite[Section 6.5.2]{IPSS2008} in two variants, one with hard-core radius $2$ (variant ``r2'') and one with radius $5$ (variant ``r5'') which correspond to about 40\% and 90\% of the maximum allowed radius for the process given $\lambda$, respectively. The clustering behaviour is represented by the Thomas process \cite[Section 6.3.2]{IPSS2008} with cluster intensity $\kappa$ and Gaussian dispersal kernel standard deviation $\sigma$ and again in two variants, one with many small or tight clusters ($\kappa = 0.6\lambda, \sigma=2$, variant ``MS'') and one with a few large clusters ($\kappa=0.3\lambda, \sigma=6$, variant ``FL''). The per-cluster expected point count $\mu$ was then fixed via the property $\mu\kappa=\lambda$ to $1.67$ and $3.33$, respectively. The theoretical pair correlation functions and spectral density functions of the models are depicted in Figure \ref{fig:fig-1}. Example patterns of each model and observation window combination are given in Supplement Figure \ref{figS:patterns}.

For the estimation of the 2D spectra we fixed the wavenumber grid to $\freq\in[-0.3, 0.3]^2$ with 101 equidistant steps in each dimension, giving the step~size~0.006 in both dimensions. This scale was chosen to cover the interesting range of non-constant values for all models (cf. Figure \ref{fig:fig-1}). To reduce the effect of high wavenumber noise only the values on the sub-grid $\freq\in[-0.2,0.2]^2$ were considered when integrating over $\freq$ for the quality metrics discussed below. When a rotationally averaged curve was to be computed using Eq. \eqref{angularaveragedI}, the averaging was done over the magnitude-grid $\|\freq\|=0.003, 0.006,...,0.300$ using a box kernel and if not otherwise stated, radius of $1.25\cdot 0.006$. The radial Hermite taper parameter was fixed to $a=25$.

We summarised the quality of each estimator $I_.=I_.(\freq;\x)$ under all combinations of a model $\mathcal{M}$ and observation window $B_n$, say $\mathcal{M}_n$, with an integrated summary.
First, we estimated per-wavenumber variance $V(\freq;\mathcal{M}_n)=\ \text{Var}\left[I_.(\freq;\x) | \x \sim  \mathcal{M}_n\right]$, bias $Bias(\freq; \mathcal{M}_n)=\ \E \left[I_.(\freq;\x) - f_{M}(\freq)|\x\sim \mathcal{M}_n\right]$ and mean square error $MSE(\freq;\mathcal{M}_n)\ =\ V(\freq;\mathcal{M}_n) + Bias^2(\freq;\mathcal{M}_n)$. The $f_{\mathcal{M}}$ stand for the theoretical sdf of the model $\mathcal{M}$. 
Then these were summarised further to $iVar(\mathcal{M}_n)\ =\sum_\freq V(\freq;\mathcal{M}_n)$, the integrated squared bias $iBias^2(\mathcal{M}_n)=\sum_\freq Bias^2(\freq;\mathcal{M}_n)$ and $iMSE(\mathcal{M}_n) = iVar(\mathcal{M}_n) + iBias^2(\mathcal{M}_n)$. 
Smaller $iVar, iBias^2$ and $iMSE$ indicate better quality. The quantities were estimated from 1000 simulations of every $\mathcal{M}_n$.

The estimator for a periodogram is given in equation \eqref{biascorrect} with the debiasing term included. Bartlett's periodogram (``periodogram'') has the taper $h_0(x)=|B_n|^{-\frac{1}{2}}\I_{B_n}(x)$ for window $B_n$. 
{For the multitapering (``mt $\ntapers$''), defined in equation \eqref{tapered} with $\ntapers=\nsubtapers^2$ tapers each having a parameter $\itapers~=~(\subtapers_1, \subtapers_2)~\in~\{1,...,\nsubtapers\}^2$, we used the orthogonal sine-tapers  $h_\itapers(x)=\I_{B_n}(x)\prod_{j=1}^2\sin[ \pi \subtapers_j (x_j+l_j/2)/l_j]$. 
For the kernel smoothed estimators (``smoothed $b$''), we first compute the Bartlett's periodogram and then, given the estimate $I_0$ on the wavenumber gird, we convolute it with a $b\times b$ discrete template having approximately Gaussian weights.}

All computations were done using the R-software \cite[v.3.6.3]{Rsoftware}. Simulations were done with the help the R-package \texttt{spatstat} \cite[v.1.64-1]{spatstat}. Each of the discussed estimators were programmed into an R-package \texttt{ppspectral}, available on request from the first author.

\begin{table}[ht]
	\centering
\caption{The fraction of bias removed by the proposed bias correction. Maximum is 1.00.}
	\begin{tabular}{llrrrrrr}
		\hline
		Model & Estimator & $n=25$ & 50 & 100 & 200 & 400 & 800 \\ 
		\hline
		Mat\'ernII r5 & periodogram & 1.00 & 1.00 & 1.00 & 0.98 & 0.99 & 1.00 \\ 
		Mat\'ernII r5 & mt $\ntapers=3^2$ & 0.99 & 1.00 & 1.00 & 1.00 & 1.00 & 1.00 \\ 
		Mat\'ernII r2 & periodogram & 1.00 & 1.00 & 1.00 & 0.99 & 0.99 & 1.00 \\ 
		Mat\'ernII r2 & mt $\ntapers=3^2$ & 1.00 & 1.00 & 1.00 & 1.00 & 1.00 & 1.00 \\ 
		Poisson & periodogram & 1.00 & 1.00 & 1.00 & 0.98 & 0.99 & 1.00 \\ 
		Poisson & mt $\ntapers=3^2$ & 1.00 & 1.00 & 1.00 & 1.00 & 1.00 & 1.00 \\ 
		Thomas FL & periodogram & 0.93 & 0.98 & 0.99 & 0.97 & 0.99 & 1.00 \\ 
		Thomas FL & mt $\ntapers=3^2$ & 0.35 & 0.86 & 0.98 & 1.00 & 1.00 & 1.00 \\ 
		Thomas MS & periodogram & 0.98 & 0.99 & 1.00 & 0.97 & 0.99 & 1.00 \\ 
		Thomas MS & mt $\ntapers=3^2$ & 0.89 & 0.98 & 1.00 & 1.00 & 1.00 & 1.00 \\ 
		\hline
	\end{tabular}

\label{tab:bias}
\end{table}

\subsection{The effect of debiasing}

As we saw in Figure \ref{fig:dataexample}, the debiasing term has a large effect on the quality of the estimates. Table \ref{tab:bias} provides the fractions of bias removed by the debiasing term for the Bartlett's periodogram and $\ntapers=3^2$ multitapered periodogram. Nearly all bias is removed, with the only notable exception being the 65\% bias left in the multitapered periodogram when data is an arguably very small sample of Thomas FL variant. 


We illustrate the form of the biases in Figure \ref{fig:bias2d} for two model variants. The centring bias is always positive, as expected from equation \eqref{expectation}. 
Bartlett's periodogram bias (cf. Lemma~\ref{lemma1}) is concentrated along the axis where one of the sinc functions is constantly 1 and near the origin where the second sinc function grows to 1. 
The multitapered periodogram has a square-shape around the origin.

\begin{figure}[!ht]
	\centering
	\includegraphics[width=.9\linewidth]{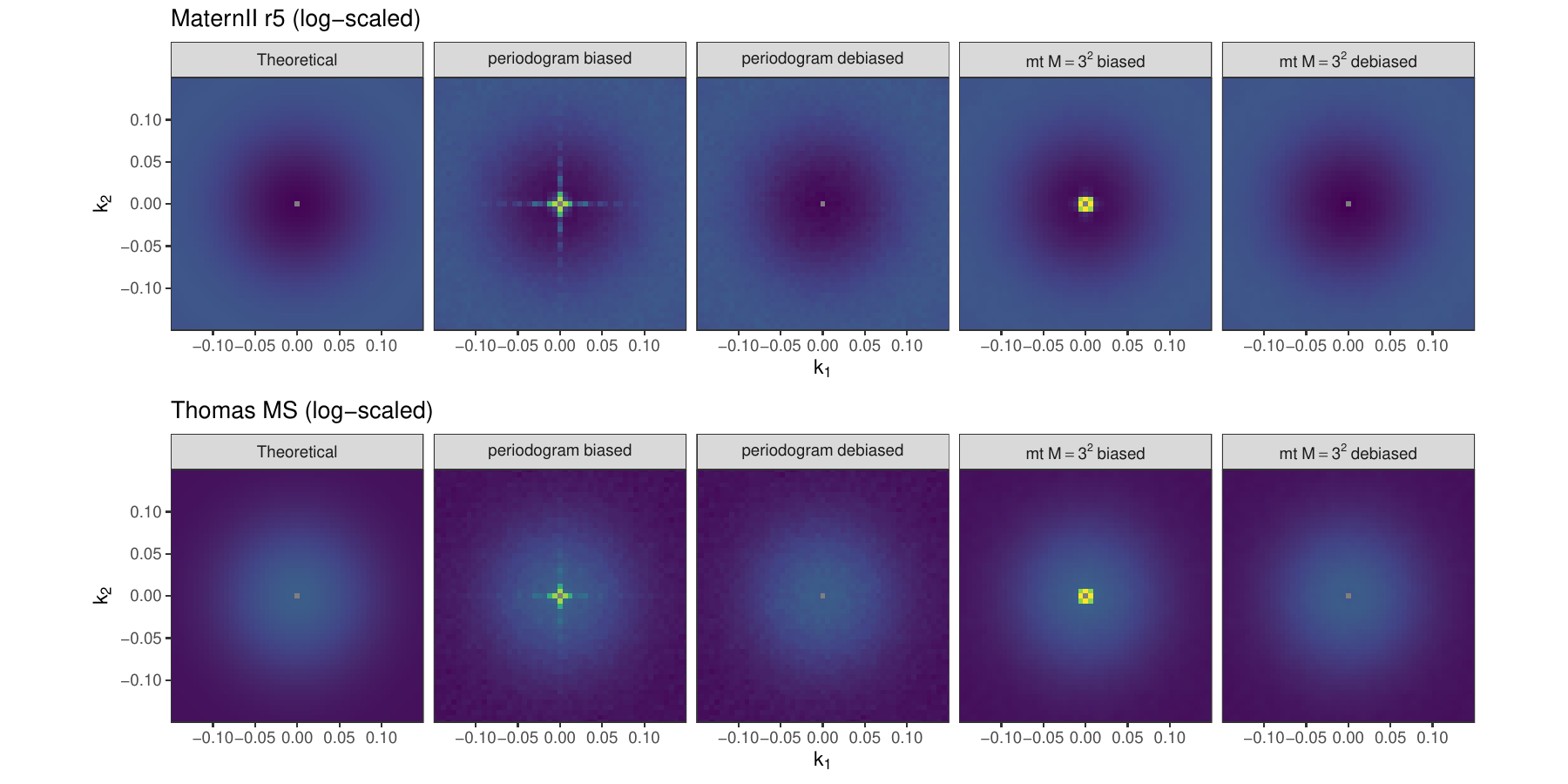}
	\caption{The effect of bias removal for two of the models described in the text. Theoretical sdf on the left, and then towards the right, estimated mean of: Bartlett's periodogram, biased and debiased; Multitapered $\ntapers=3^2$, biased and debiased. Estimated with sample sizes $n=800$.}
	\label{fig:bias2d}
\end{figure}

\subsection{Overall quality of the debiased estimators}

Figure \ref{fig:Quality} summarises the estimated integrated variance, squared bias and MSE of the estimators with various tuning parameters. Only the debiased estimators are included, and the values are given in $\log_{10}$-scale and relative to the baseline given by Bartlett's periodogram.

Multitaper with $\ntapers=1^2$ (i.e. only one taper) does not differ in iMSE from the periodogram as the benefit of averaging does not apply. With more than one taper the variance is reduced, e.g. $\ntapers=3^2$ it goes down by 90\%, and the bias stays at baseline, but adding more and more tapers ($\ntapers=6^2$) starts to introduce bias especially for low sample sizes. The bias is more prominent for the clustered cases, and based on the earlier discussion this is due to oversmoothing at small wavenumbers. Both the variance reduction and the oversmoothing happens with the post-hoc smoothing as well.
From the results we can confirm that some smoothing is beneficial but again a balance must be struck to avoid oversmoothing (cf. Section \ref{sec:linear}). The regular process is more resilient to oversmoothing with reasonable sample sizes.

\begin{figure}[htp]
	\centering
	\includegraphics[width=1\linewidth]{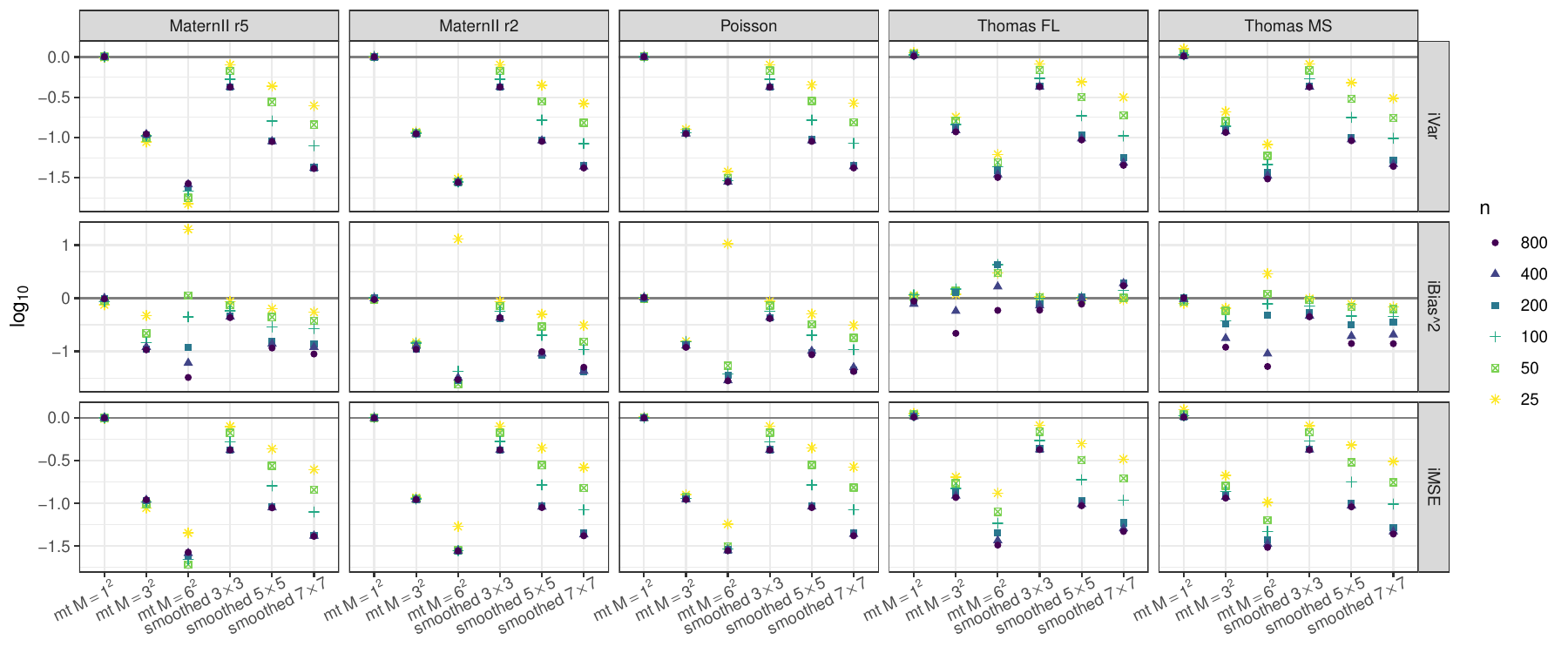}
	\caption{Relative integrated variance (top row), squared integrated bias (middle), and integrate mean square error (bottom) for different models and estimators of spectral density at various levels of sample size $n$. The values are relative to debiased Bartlett's periodogram, which has the value 0 in the presented $\log_{10}$-scale. For example, a value of -1 means the error is of order 0.1 relative to the periodogram.}
	\label{fig:Quality}
\end{figure}

\subsection{Rotational averaging and isotropic estimation}

Isotropic estimation, either by averaging radially or by the isotropic shortcuts discussed in Section 7, is an important dimension reduction step for easier visual inspection of the periodogram (particularly for 3D data). The kernel-based rotation averaging depends on a bandwidth parameter, and we studied how this affects the quality metrics. We set the bandwidth radius to $BF\cdot 0.006$ and varied the bandwidth factor $BF$ from 1 to 10. Figure \ref{fig:rotave-bandwidth} provides the $iBias^2$ and $iMSE$ for the debiased Bartlett's periodogram when comparing the rotationally averaged estimates to the true 1D isotropic sdfs (cf. Figure \ref{fig:fig-1}). 

The Poisson process has a constant sdf so oversmoothing will not be penalised. For the clustered processes even a small bandwidth introduces bias, but this is typically more than offset by the reduction in variance. The error for the few large clusters variant is more sensitive to changes in the bandwidth than the many small clusters. This is because the effective range of wavenumbers where the sdf exhibits structure is more concentrated, and thus easier to smooth out, than the few large cluster variant (cf. Figure \ref{fig:fig-1}). 

The strongly regular variant ``r5'' is similar to the clustered cases in that an optimal bandwidth is clearly present. For the less regular ``r2'' variant the error is like that of the Poisson process, possibly due to the target sdf being a weak oscillation around a constant and this being hard  to detect. If we look at the integrated squared bias we however do see that with large enough data oversmoothing is detectable.

We note that since multitapered periodograms and locally averaged periodograms are already smoothed, their rotational averaged estimates exhibit smoothing bias at smaller bandwidth factors (results not worth showing here).

\begin{figure}[htp]
	\centering
	\includegraphics[width=1\linewidth]{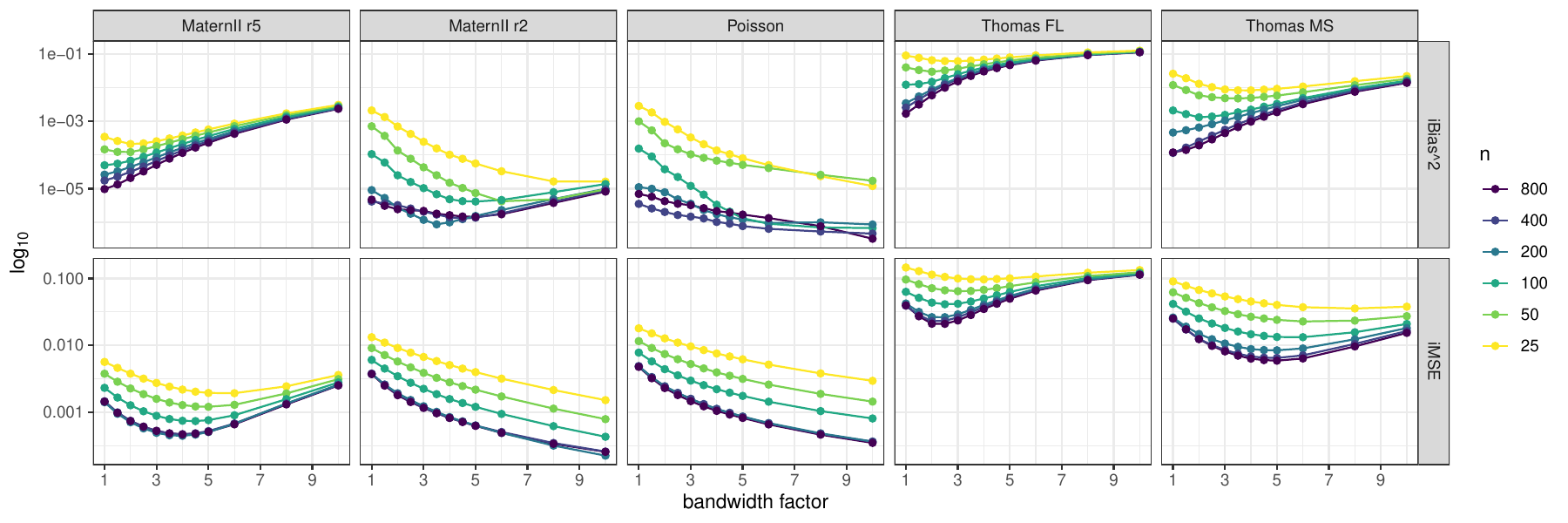}
	\caption{Integrated squared bias (top) and integrated MSE of the rotationally averaged Bartlett's periodogram when compared to the true 1D isotropic sdf. The rotational averaging bandwidth was $BF\cdot 0.006$, where the bandwidth factor $BF$ is on the x-axis. The models are described in the text.}
	\label{fig:rotave-bandwidth}
\end{figure}

\begin{table}[htp]
	\centering
\caption{Scores $iMSE,iBias$ and $iVar$ for isotropic estimators, relative to radially averaged debiased periodogram with optimised smoothing bandwidth.}
	\begin{tabular}{l|rrr|rrr}
		\hline
		& \multicolumn{3}{c}{No taper} &  \multicolumn{3}{c}{Squared-exponential taper} \\ 
				\hline
	    Model    & iMSE & iBias$^2$ & iVar  & iMSE & iBias$^2$ & iVar\\ 
	\hline
Mat\'ernII r5 & 6.12 & 0.21 & 8.57 & 3.51 & 0.29 & 4.90 \\ 
  Mat\'ernII r2 & 22.40 & 3.52 & 23.07 & 15.59 & 1.09 & 16.12 \\ 
  Poisson & 15.72 & 10.53 & 15.89 & 13.76 & 5.75 & 13.77 \\ 
  Thomas FL & 1.22 & 0.04 & 1.81 & 1.32 & 0.14 & 1.97 \\ 
  Thomas MS & 4.34 & 0.01 & 6.50 & 3.58 & 0.04 & 5.35 \\
\hline
	\end{tabular}
\label{tab:isotropic}
\end{table}

To compare the radial averaging to direct isotropic estimation, we estimated the debiased isotropic periodogram and the squared-exponential tapered isotropic periodogram with $a=25$. Table \ref{tab:isotropic} provides the relative quality scores for the isotropic and the tapered isotropic periodograms relative to radially averaged periodogram with the best bandwidth for each model and data size, cf. Figure \ref{fig:rotave-bandwidth}. The values are medians over the sample sizes $n\ge 50$.

In iMSE sense the radially averaged periodogram performs the best, chiefly because iMSE is of order iVar and the radial averaging greatly reduces the variance. However, in terms of bias the isotropic estimator provides more accuracy, except for Poisson when no oversmoothing is possible. Adding a smooth taper to the isotropic estimator further reduces the bias, but can increase the variance, likely due to some data near the edges is being filtered out. The bias occurs near $\|\freq\|=0$, mostly when $\|\freq\| < 1/(\sqrt{2}l_n)$ (illustration in Supplements Figure \ref{figS:rotave-v-isotropic}).





\section{Spectral properties of the Barro Colorado Island data}\label{sec:BCI}

The Barro Colorado Island (BCI) rainforest data set \cite{condit2019complete} contains a census of tree and shrub species from a 1000m by 500m region on Barro Colorado Island, Panama.
We illustrate the techniques developed in this paper on the point patterns corresponding from three species contained in the BCI data, using the 2015 census, using only individuals that are alive.

\begin{figure}[ht]
    \centering
    \includegraphics[width=.9\linewidth]{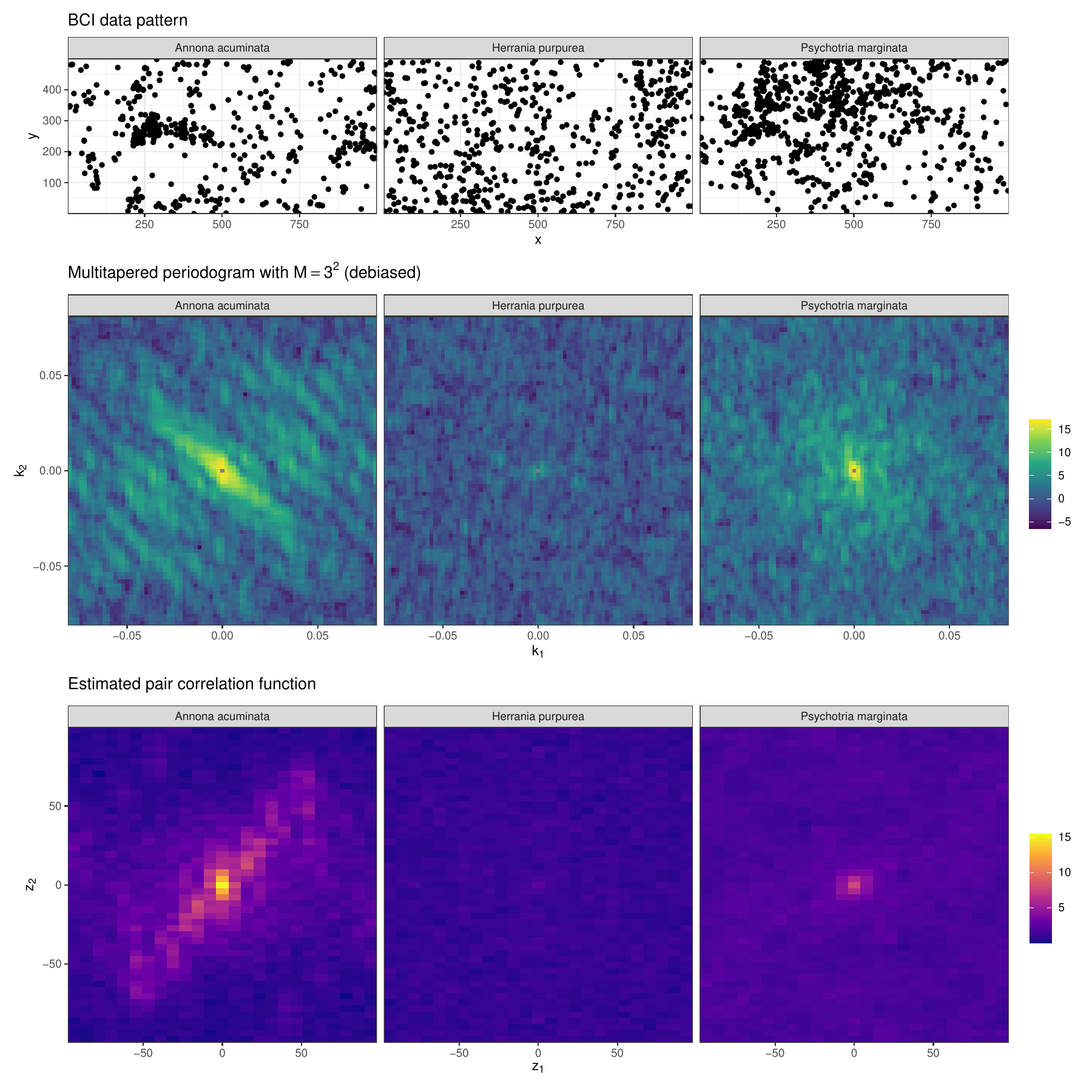}
    \caption{\emph{Top row:} Point patterns of three different species from the BCI data.
    \emph{Middle row:} Multitaper spectral estimates from the above point patterns, rescaled by the intensity and plotted on a decibel scale, i.e.\ $10\log_{10}(\bar{I}^t_M(\freq)/\hat\lambda)$ with $M=3^2$. 
    \emph{Bottom row:} Estimated pair correlation function for the three species, i.e.\ $\widehat\rho^{(2)}(z)/\hat\lambda^2$.}
    \label{fig:bci_example}
\end{figure}

Figure~\ref{fig:bci_example} shows an example analysis for three species from the BCI data, namely \textit{Annona acuminata}, \textit{Herrania purpurea}, and \textit{Psychotria marginata}. 
The top row shows the point patterns for each of the three species, the middle row shows estimates of their spectral density functions computed using multitapering, and the bottom row shows estimates of the pair correlation function for each species.
It is clear that there are substantial differences in the structures present in the spectral density functions of these species.
Firstly, in the case of \textit{Annona acuminata} (left), we see clearly anisotropic behaviour, similar to the anisotropic clustering shown in Figure~\ref{fig:dataexample}. This is to be expected as \textit{A. acuminata} is usually associated with very moist soils and has previously been associated with the stream (a linear structure) running through the plot \cite{harms2001habitat,flugge2014subcommunities}.
Secondly, in the case of \textit{Herrania purpurea} (middle) we see no structure at all, which is reassuring as this suggests that the features seen in the other two species are signal and not just noise. Again, this makes ecological sense as \textit{H. purpurea} is a lower canopy tree that has seeds widely dispersed by monkeys and other animals.
Finally for \textit{Psychotria marginata} (right) we see more isotropic behaviour, with information present at low wavenumbers until $\|k\|\approx 0.005$, corresponding to correlation on scales of length around 200. Additionally, there is further information until $\|k\|\approx 1/15$, which corresponds to the peak seen in the pair correlation function until scales around 15. This clustering is likely caused by seeds being relatively poorly dispersed \cite{theim2014spatial}, but also because \textit{P. marginata} is associated with moist soils leading to a build up of large clusters of trees in the low plateau and swamp regions of the plot in the upper left quadrant of the BCI plot \cite{harms2001habitat,flugge2014subcommunities}.

\section{Conclusions}\label{sec:conclusions}
Spectral analysis and Fourier features are classical and important tools for characterizing and understanding 
time series and random fields~\cite{Percival1993,Diggle2013},  in for example estimation and detection problems. 
When first introduced, the fundamental theory of spectral analysis for random fields and time series corresponded to understanding what Fourier transform to use, and determining the first and second moments of that Fourier transform for stationary time series and random fields~\cite{Robinson1982}. 
This led researchers to develop the by now well--established theory of spectral analysis (again for time series and random field) from which a large and sophisticated theory of spectral analysis was built. This set of methods has more recently been discovered in machine learning~\cite{lazaro2010sparse,ton2018spatial}. 
The corresponding theory for spatial point processes has been neglected apart from some notable and not very recent exceptions~\cite{Bartlett1963,Bartlett1964,mugglestone1996practical}. For point processes machine learning researchers have just started to discover the utility of Fourier-based methods~\cite{ilhan2020modeling,john2018large}.
Current state of the art for the spectral analysis of point processes is that we really do not even know what to compute, and know even less how to address its digital implementation.

This article has addressed this first outstanding problem. We have calculated the expectation of natural method of moments estimators. We have imported state of the art ideas from signal processing as part of this process, and have addressed how to taper for variance reduction in this setting. Introducing tapering in this setting required us to determine how to taper, where we chose to use continuous space tapers, not to interpolate as had been done by previous authors.
We showed that in the setting of Gaussian convergence, multitaper estimates of the spectral density have a decreased variance. Our simulation studies verify our theoretical results, and the general utility of tapering outside the most restrictive class where we can prove everything. 

This manuscript does not seek to address how to bring spectral analysis of point processes into the 21st century, and reproduce all results available for time series and random fields. Instead we have taken an important first step by adapting existing signal processing tools for time series and random fields and shown how they can be modified to help us estimate the spectral content of a spatial point process. This paves the way forward to develop more sophisticated theory which can be applied to complex spatial point pattern datasets and more sophisticated algorithms.

\section*{Acknowledgment}
This work was supported in part by
by the UK Engineering and Physical Sciences Research Council
under Mathematical Sciences Leadership Fellowship EP/I005250/1, Developing Leaders Award EP/L001519/1, and Award EP/N007336/1;  the European Research Council under Grant CoG
2015-682172NETS within the Seventh European Union Framework Program; and Academy of Finland (project number 327211). 

\appendix

\section{Proof of Lemma~\ref{lemma1}}
\label{sec:Pf1}
\begin{proof}
We start the proof by noting that (using that $B$ is centred)
\begin{align}
\label{expPerio}
\E I_0(\freq) &= \lambda + |B|^{-1}\int_B\int_{B_{-x}}e^{-2\pi i\freq\cdot z}\rtwo(z)dz\\
\nonumber
&= \lambda + |B|^{-1}\int_B\int_{B_{-x}}e^{-2\pi i\freq\cdot z}[\kap{z}]dz dx + |B|^{-1}\lambda^2 \int_B\int_{B_{-x}}e^{-2\pi i\freq\cdot z}dz \\
\nonumber
&= 
 \lambda + |B|^{-1}\int_{\R^d}|B\cap B_{-z}|e^{-2\pi i\freq\cdot z}[\kap{z}]dz + |B|^{-1}\lambda^2 T(B, \freq)\\
 &=\lambda+\int_{{\mathbb{R}}^d}G^{(1)\ast}(\freq')
 \widetilde{G}^{(2)}(\freq'-\freq)\,d\freq'+|B|^{-1}\lambda^2 T(B, \freq),
 \nonumber
\end{align}
by the convolution theorem, after we have defined the functions:
\begin{align*}
\nonumber
G^{(1)}(\freq)&=|B|^{-1}\int_{R^d}  | B\cap B_{-z}|e^{-2\pi i\freq\cdot z}\,dz
=|B|^{-1}T(B,\freq)\\
\widetilde{G}^{(2)}(\freq)&=\int_{\R^d}e^{-2\pi i\freq\cdot z}[\kap{z}]dz=\lambda^2\widetilde{f}(\freq).
\end{align*}
Note that from Gradshteyn 3.741 with the change of variables $x_j=\pi \freq_j \bside_j$
that we get (recalling $|B|=\prod_j \bside_j$):
\begin{align}
\nonumber
\int_{{\mathbb{R}}^d} T(B,\freq)\;d\freq&=\int_{{\mathbb{R}}^d} 
\prod_j\frac{\sin^2(\pi \freq_j\bside_j)}{(\pi \freq_j)^2}d\freq_j\\
\nonumber
&=\int_{{\mathbb{R}}^d} 
\prod_j \bside_j^2 \frac{\sin^2(x_j)}{x_j^2}( dx_j/(\pi\bside_j))\\
&= |B|
\prod_j ( \pi/\pi)=  |B|.
\label{normT}
\end{align}

We have defined
\begin{equation}
T(B,\freq)=\prod_{j=1}^d \frac{\sin^2(\pi \freq_j\bside_j)}{(\pi \freq_j)^2},\quad \freq\in {\mathbb{R}}^d.
\end{equation}
We now go back to our definition of $\E I_0^{(2)}(\freq)$, and explore this more carefully. We also fix $0<\beta<1$, and expand the expectation as follows (realising that care needs to be taken around the pole):
\begin{align}
\nonumber
\E I_0^{(2)}(\freq)
&=\lambda^2 \int_{R^d}G^{(1)\ast}(\freq-\freq') \widetilde{G}^{(2)}(\freq') d\freq'\\
\nonumber
&=\lambda^2\prod_{j=1}^d \left(\int_{-\infty}^{\freq_j-\frac{1}{\bside_j^{1-\beta}}}+ 
\int_{\freq_j-\frac{1}{\bside_j^{1-\beta}}}^{\freq_j+\frac{1}{\bside_j^{1-\beta}}}+
\int_{\freq_j+\frac{1}{\bside_j^{1-\beta}}}^\infty\right) G^{(1)\ast}(\freq-\freq') \widetilde{G}^{(2)}(\freq') d\freq'\\
&=\E I_{1\dots 1}^{(2)}(\freq)+\E I_{12\dots}^{(2)}(\freq)+\dots +\E I_{d\cdots d}^{(2)}(\freq).
\end{align}
We can note directly, that taking a supremum over $G^{(1)\ast}(\freq-\freq') $ in the range of integration, and assuming the integral of $G^{(2)}(\freq)$ is finite (which is true from Parseval-Rayleigh relationships
\begin{equation}
\E I_{11}^{(2)}(\freq)={\cal O}(|B|^{-1}\prod_{j=1}^d\bside_j^{2-2\beta})={\cal O}(\prod_{j=1}^d \bside_j^{1-2\beta}),
\end{equation}
which requires $\beta>1/2$ to be $o(1)$. The same holds for the other elements apart from
$\E I_{2\dots 2}^{(2)}(\freq)$.
We therefore have
\begin{align}
\label{expsecond}
\E I_0^{(2)}(\freq)&=\lambda^2 \prod_{j=1}^d  
\int_{\freq_j-\frac{1}{\bside_j^{1-\beta}}}^{\freq_j+\frac{1}{\bside_j^{1-\beta}}} G^{(1)\ast}(\freq-\freq') \widetilde{G}^{(2)}(\freq') d\freq'
\left\{1+o(1)\right\}\\
\nonumber
&\equiv {\cal I}_0^{(2)}(\freq)
\left\{1+o(1)\right\},
\end{align}
so that
\[{\cal I}_0^{(2)}(\freq)\equiv \lambda^2 \prod_{j=1}^d  
\int_{\freq_j-\frac{1}{\bside_j^{1-\beta}}}^{\freq_j+\frac{1}{\bside_j^{1-\beta}}} G^{(1)\ast}(\freq-\freq') \widetilde{G}^{(2)}(\freq') d\freq'.\]

We assume that $\tilde{f}(\freq)$ is twice differentiable at $\freq$ (everywhere possibly but $\freq=0$) so that we get for all $\freq\neq 0$ and for $\|\freq-\freq''\|<\|\freq-\freq'\|$
\begin{equation}
\label{tildetaylor}
\widetilde{f}(\freq')=\widetilde{f}(\freq)+\nabla \widetilde{f}(\freq)^T\left(\freq'-\freq\right)+
\frac{1}{2}\left(\freq'-\freq\right)^T\tilde{\mathbf{H}}_f(\freq'')\left(\freq'-\freq\right),
\end{equation}
defining $\nabla \widetilde{f}(\freq)$ as the gradient and $\tilde{\mathbf{H}}_f(\freq)$ as the Hessian matrix.

Substituting this expansion back into~\eqref{expsecond} we get 
with a change of variables 
\begin{align}
\nonumber
{\cal I}_0^{(2)}(\freq)&=\lambda^2 \int_{\freq_1-\frac{1}{\bside_1^{1-\beta}}}^{\freq_1+\frac{1}{\bside_1^{1-\beta}}}\dots \int_{\freq_p-\frac{1}{\bside_p^{1-\beta}}}^{\freq_p+\frac{1}{\bside_p^{1-\beta}}}G^{(1)\ast}(\freq-\freq') 
\left\{\widetilde{f}(\freq)+\nabla \widetilde{f}(\freq)^T\left(\freq'-\freq\right)+
\frac{1}{2}\left(\freq'-\freq\right)^T\right.\\
\nonumber
&\left.\tilde{\mathbf{H}}_f(\freq'')\left(\freq'-\freq\right) \right\}\;d\freq'\\
\nonumber
&=\lambda^2\widetilde{f}(\freq)\int_{\freq_1-\frac{1}{\bside_1^{1-\beta}}}^{\freq_1+\frac{1}{\bside_1^{1-\beta}}}\dots \int_{\freq_p-\frac{1}{\bside_p^{1-\beta}}}^{\freq_p+\frac{1}{\bside_p^{1-\beta}}}G^{(1)\ast}(\freq-\freq') \;d\freq'
\\
&+\frac{1}{2}{\mathrm{trace}}
\left\{\tilde{\mathbf{H}}_f(\freq'') 
\int_{\freq_1-\frac{1}{\bside^{1-\beta}}}^{\freq_1+\frac{1}{\bside^{1-\beta}}}\dots \int_{\freq_p-\frac{1}{\bside^{1-\beta}}}^{\freq_p+\frac{1}{\bside^{1-\beta}}}
|B|^{-1}T(B,\freq-\freq')  \cdot (\freq'-\freq) \left(\freq'-\freq\right)^T\;d\freq'\right\}.
\end{align}
We can then note that
\[\int_{\freq_1-\frac{1}{\bside^{1-\beta}}}^{\freq_1+\frac{1}{\bside^{1-\beta}}}\dots \int_{\freq_p-\frac{1}{\bside^{1-\beta}}}^{\freq_p+\frac{1}{\bside^{1-\beta}}}G^{(1)}(\freq-\freq') \;d\freq'=1+o(1).\]
By direct calculation 
we then find
\begin{align*}
\int_{R^d}G^{(1)}(\freq)d\freq &=|B|^{-1}\int_{R^d}\int_{R^d}  | B\times B_{-z}|e^{-i2\pi \freq\cdot z}\,dzdw\\
&=|B|^{-1}\int_{R^d}  | B\times B_{-z}|\delta(z) \,dz\\
&=1.
\end{align*}
Note that this is now consistent with~\eqref{normT}, and achieves a value of $1$.
Second,
\begin{equation*}
\int_{\freq_1-\frac{1}{\bside^{1-\beta}}}^{\freq_1+\frac{1}{\bside^{1-\beta}}}\dots \int_{\freq_p-\frac{1}{\bside^{1-\beta}}}^{\freq_p+\frac{1}{\bside^{1-\beta}}}\left(\freq'-\freq\right)^T
G^{(1)\ast}(\freq-\freq') \;d\freq'=0,
\end{equation*}
as
\[ | B\times B_{-z}|= | B\times B_{z}|,\]
 implies that this function $| B\times B_{-z}|$
   is symmetric (or even) in $z$, and real-valued, which implies that its Fourier transform also is real-valued. The symmetric integral of an even times an odd function is always zero. 

We now also determine
\begin{equation}
{\mathbf{B}}=
\int_{\freq_1-\frac{1}{\bside_1^{1-\beta}}}^{\freq_1+\frac{1}{\bside_1^{1-\beta}}}\dots \int_{\freq_p-\frac{1}{\bside^{1-\beta}}}^{\freq_p+\frac{1}{\bside^{1-\beta}}}
G^{(1)\ast}(\freq-\freq')\cdot (\freq-\freq') \left(\freq-\freq'\right)^T\;d\freq'.
\end{equation}
Let us start by a change of variables. Let us first define the matrix
\[\bm{L}={\mathrm{diag}}\left(\bside_1,\dots,\bside_d\right).\]
Thus the change of variables is
\begin{equation}
\nu=\pi \bm{L}\left(\freq-\freq'\right) \Longleftrightarrow 
\freq'=\freq-\bm{L}^{-1}\nu/\pi.
\end{equation}
With this change of variables we get that
\begin{align}
\nonumber
{\mathbf{B}}&=
\int_{-\bside_1^\beta/2}^{\bside_1^\beta/2}\dots \int_{-\bside_p^\beta/2}^{\bside_p^\beta/2}
G^{(1)\ast}(\freq-\freq')\cdot (\freq-\freq') \left(\freq-\freq'\right)^T\;d\freq'\\
\nonumber
&=\int_{-\bside_1^\beta/2}^{\bside_1^\beta/2}\dots \int_{-\bside_p^\beta/2}^{\bside_p^\beta/2}
\prod_{j=1}^d \frac{\pi^2\bside_j^2\sin^2(\nu_j)}{|B|\nu_j^2}\bm{L}^{-1}\pi^{-d}\nu \nu^T \bm{L}^{-1}\pi^{-d}\;d\nu_j  (1/\bside_j)\\
&=\int_{-\bside_1^\beta/2}^{\bside_1^\beta/2}\dots \int_{-\bside_p^\beta/2}^{\bside_p^\beta/2}
\prod_{j=1}^d \frac{\bside_j\sin^2(\nu_j)}{|B|\nu_j^2}\bm{L}^{-1}\nu \nu^T \bm{L}^{-1}\;d\nu_j.
\end{align}
We then find if all the $\bside_j=\bside$ then
if $0<\beta$
\begin{align}
\nonumber
{\mathrm{trace}}\left\{{\mathbf{B}}\right\}&=
\int_{-\bside^\beta/2}^{\bside^\beta/2}\dots \int_{-\bside^\beta/2}^{\bside^\beta/2}
\prod_{j=1}^d \frac{\sin^2(\nu_j)}{\nu_j^2}\sum_{q=1}^d (\nu_q/\bside)^2 \;d\nu_j\\
\nonumber
&=
d\cdot \frac{1}{\bside^2}\left(\int_{-\infty}^{\infty} \frac{\sin^2(x)}{x^2}\,dx \right)^{d-1}\int_{-\bside^\beta/2}^{\bside^\beta/2}\sin^2(x)\,dx
\\
&=\frac{d}{\bside^2} \pi^{d-1}\bside^\beta
\left\{1+o(1) \right\}.
\end{align}
This decreases with increasing $\bside$ as long as $\beta<2$ (which follows as we have assumed $\beta<1$), and as the next error term is $O(\bside^{-\beta})$. We therefore want to chose $\beta\rightarrow 1$, or for some fixed $\epsilon>0$ take $\beta=1-\epsilon$.
Putting all of these components together we get that
\begin{align*}
\E I_0(\freq)&= \lambda + |B|^{-1}\int_{\R^d}|B\cap B_{-z}|e^{-2\pi i\freq\cdot z}[\kap{z}]dz + |B|^{-1}\lambda^2 T(B, \freq)\\
&= \lambda + \lambda^2\widetilde{f}(\freq)+\frac{1}{2}
{\mathrm{trace}}\left\{\tilde{\mathbf{H}}_f(\freq'') \frac{d}{\bside^{2-\beta}} \pi^{d-1}
\left\{1+o(1) \right\}
\right\}+
|B|^{-1}\lambda^2 T(B, \freq)\\
&=\lambda + (-\lambda+f(\freq))\left\{1+o(1) \right\}+\frac{1}{2}
{\mathrm{trace}}\left\{\tilde{\mathbf{H}}_f(\freq'') \frac{d}{\bside^{2-\beta}} \pi^{d-1}
\left\{1+o(1) \right\}
\right\}+
|B|^{-1}\lambda^2 T(B, \freq)\\
&=f(\freq)\left\{1+o(1) \right\}+o(1)+
|B|^{-1}\lambda^2 T(B, \freq).
\end{align*}
\end{proof}

\section{Proof of Proposition~\ref{DFTprop}}\label{DFTproppf}
\begin{proof}
We start from using Eqn.~\eqref{eqn:Campbell} and determine that
\begin{align}
    \E\left\{ J_h(\freq)\right\}&=\E \sum_{x\in X}h(x)e^{-2\pi i \freq\cdot  x}= \lambda\int_{R^d}h(x)e^{-2\pi i \freq\cdot  x}\,dx
    \label{Hk}
    =\lambda H(\freq).
\end{align}
Subsequently we find using Eqn.~\eqref{eqn:Campbell} 
\begin{align}
\nonumber
    \E\left\{\left| J_h(\freq)\right|^2\right\}
    &= \E\left\{\sum_{x\in X}\sum_{y\in X}
    h(x)e^{-2\pi i \freq\cdot  x}h^\ast(y)e^{2\pi i \freq\cdot  y}\right\}\\
  \nonumber
    &=\E \sum_{x\in X}\left|h(x)\right|^2+\E\left\{\sum_{x\in X,y\in X}^{\neq}h(x)e^{-2\pi i \freq\cdot  x}h^\ast(y)e^{2\pi i \freq\cdot  y}\right\}\\
    \nonumber
    &=\lambda \int_{R^d}\left|h(x)\right|^2dx+\iint_{R^d\times R^d}{\rho}^{(2)}(x-y)h(x)e^{-2\pi i \freq\cdot  x}h^\ast(y)e^{2\pi i \freq\cdot  y}\,dx\,dy\\
    &=\lambda +\iint_{R^d\times R^d}{\rho}^{(2)}(x-y)h(x)e^{-2\pi i \freq\cdot  x}h^\ast(y)e^{2\pi i \freq\cdot  y}\,dx\,dy.
\end{align}
We now calculate the variance as
\begin{align}
\nonumber
    \var \left\{ J_h(\freq)\right\}
    &= \E\left\{\left|J_h(\freq)\right|^2\right\}-
    \left|\E\left\{ J_h(\freq)\right\}\right|^2\\
    &=\lambda +\iint_{R^d\times R^d}{\rho}^{(2)}(x-y)h(x)e^{-2\pi i \freq\cdot  x}h^\ast(y)e^{2\pi i \freq\cdot  y}\,dx\,dy-\left|\lambda H(\freq)
    \right|^2.
    \label{var:expr}
\end{align}
We now define the renormalised quantity
\begin{equation}
    \widetilde{\rho}^{(2)}(z)=\frac{ {\rho}^{(2)}(z)-\lambda^2}{\lambda^2},\quad z\in{\mathbb{R}}^d.
\end{equation}
The expression in~\eqref{var:expr} then can be simplified to
\begin{align}
\nonumber
    \var \left\{ J_h(\freq)\right\}
    &=\lambda +\iint_{R^d\times R^d}{\rho}^{(2)}(x-y)h(x)e^{-2\pi i \freq\cdot  x}h^\ast(y)e^{2\pi i \freq\cdot  y}\,dx\,dy-\left|\lambda H(\freq)
    \right|^2\\
    &= \nonumber
    \lambda +\lambda^2\iint_{R^d\times R^d}\left(\widetilde{\rho}^{(2)}(x-y)+1\right)h(x)e^{-2\pi i \freq\cdot  x}h^\ast(y)e^{2\pi i \freq\cdot  y}\,dx\,dy-\left|\lambda H(\freq)
    \right|^2\\
    &=\lambda+\lambda^2\iint_{R^d\times R^d}\widetilde{\rho}^{(2)}(x-y)h(x)e^{-2\pi i \freq\cdot  x}h^\ast(y)e^{2\pi i \freq\cdot  y}\,dx\,dy\nonumber \\
    &=\lambda+ \lambda^2\var_1 \left\{ J_h(\freq)\right\},
    \end{align}
    where we define
    \begin{equation}
        \var_1 \left\{ J_h(\freq)\right\}
        \equiv \iint_{R^d\times R^d}\widetilde{\rho}^{(2)}(x-y)h(x)e^{-2\pi i \freq\cdot  x}h^\ast(y)e^{2\pi i \freq\cdot  y}\,dx\,dy.
    \end{equation}
    
    To simplify this expression we note that
    \begin{align}
    \nonumber
   \var_1 \left\{ J_h(\freq)\right\}     &=\int_{R^d\times R^d} 
        \widetilde{\rho}^{(2)}(x-y)h(x)e^{-2\pi i \freq\cdot  x}h^\ast(y)e^{2\pi i \freq\cdot  y}\,dx\,dy\\
        \nonumber
        &=\int_{R^d\times R^d} 
        \left[\int_{R^d}\widetilde{f}^{(2)}(\freq') e^{2\pi i (x-y)\cdot \freq'} d\freq'\right] \cdot h(x)e^{-2\pi i \freq\cdot  x}h^\ast(y)e^{2\pi i \freq\cdot  y}\,dx\,dy\\
        \nonumber
        &=\int_{R^d\times R^d} 
        \int_{R^d}\widetilde{f}^{(2)}(\freq')e^{-2\pi i (x-y)\cdot(\freq-\freq')}d\freq' \cdot h(x)h^\ast(y)\,dx\,dy\\
        &=\int_{R^d}\widetilde{f}^{(2)}(\freq') \left|H(\freq-\freq')\right|^2dw' .
    \end{align}
    This yields the desired expression for the variance. We return to the expression of the relation of the DFT.
    We aim to show that
    \[\rel\left\{ J_h(\freq)\right\}=\lambda \int_{\mathbb{R}^d} 
H(\freq'-2\freq) H(\freq')d\freq'+\int_{\mathbb{R}^d} U(\freq,z) e^{-2\pi i \freq\cdot  z}\{\rho(z)-\lambda^2\}dxdz.\]
    We have by the definition of the relation~\cite{schreier2010statistical}
    \begin{align}
\nonumber
    \rel \left\{ J_h(\freq)\right\}
    &= \E\left\{\left( J_h(\freq)\right)^2\right\}-
    \left(\E\left\{ J_h(\freq)\right\}\right)^2.
    \label{rel:expr}
\end{align}
We then note that
\begin{align}
\nonumber
    \E\left\{\left( J_h(\freq)\right)^2\right\}&=
    \E\left\{\sum_{x\in X}\sum_{y\in X}
    h(x)e^{-2\pi i \freq\cdot  x}h(y)e^{-2\pi i \freq\cdot y}\right\}\\
  \nonumber
    &=\E \sum_{x\in X}\left\{h(x)\right\}^2e^{-4\pi i \freq\cdot  x}+\E\left\{\sum_{x\in X,y\in X}^{\neq}h(x)e^{-2\pi i \freq\cdot  x}h(y)e^{-2\pi i \freq\cdot y}\right\}\\
    \nonumber
    &=\lambda \int_{R^d}\left\{h(x)\right\}^2e^{-4\pi i \freq\cdot  x}dx+\iint_{R^d\times R^d}\bar{\rho}^{(2)}(x-y)h(x)e^{-2\pi i \freq\cdot  x}h(y)e^{-2\pi i \freq\cdot  y}\,dx\,dy\\
    &=\lambda\int_{R^d}\left\{h(x)\right\}^2 e^{-4\pi i \freq\cdot  x} dx +
     \rel_1 \left\{ J_h(\freq)\right\}.
\end{align}
We now seek to simplify this and write
\begin{equation}
    \rel_1 \left\{ J_h(\freq)\right\}
    =\lambda^2 \iint_{R^d\times R^d}\left(\widetilde{\rho}^{(2)}(x-y) +1\right)h(x)e^{-2\pi i \freq\cdot  x}h(y)e^{-2\pi i \freq\cdot  y}\,dx\,dy.
\end{equation}
We additionally have from~\eqref{Hk}:
\[ \E^2\left\{ J_h(\freq)\right\}=\lambda^2 H^2(\freq).\]
Putting all of this together we get that 
\begin{align*}
     \rel \left\{ J_h(\freq)\right\}
     &=\lambda\int_{R^d}\left\{h(x)\right\}^2 e^{-4\pi i \freq\cdot  x} dx+\lambda^2 \iint_{R^d\times R^d}\left(\widetilde{\rho}^{(2)}(x-y) +1\right)h(x)e^{-2\pi i \freq\cdot  x}h(y)e^{-2\pi i \freq\cdot  y}\,dx\,dy\\
     &-\lambda^2 H^2(\freq)\\
     &=\lambda\int_{R^d}\left\{h(x)\right\}^2 e^{-4\pi i \freq\cdot  x} dx+\lambda^2 \iint_{R^d\times R^d} \widetilde{\rho}^{(2)}(x-y) h(x)e^{-2\pi i \freq\cdot  x}h(y)e^{-2\pi i \freq\cdot  y}\,dx\,dy.
\end{align*}
We now implement a change of variables, and set $z=x-y$. We then
have 
\begin{equation*}
    \iint_{R^d\times R^d} \!\!\!\!\widetilde{\rho}^{(2)}(x-y) h(x)e^{-2\pi i \freq\cdot  x}h(y)e^{-2\pi i \freq\cdot  y}\,dx\,dy=
     \iint_{R^d\times R^d} \!\!\!\!\widetilde{\rho}^{(2)}(z) h(x)e^{-2\pi i \freq\cdot  x}h(x-z)e^{-2\pi i \freq\cdot  (x-z)}\,dx\,dz,
\end{equation*}
and so with the definition of
\begin{equation}
    U(\k,z)\equiv \int h(x) h(z+x) e^{-2\pi i \freq\cdot (2x)}
    dx,
\end{equation}
the expression is a consequence of the above expressions. 
\end{proof}

\section{Proof of Lemma~\ref{lemdebias}}\label{pflemdebias}
\begin{proof}
We start from
\[
\widetilde{I}_h(\freq)=\left| J_h(\freq)-\lambda H(\freq)\right|^2,\]
and so take
\begin{equation*}
    \E \left\{ \widetilde{I}_h(\freq)\right\}
    =\E\{ \left| J_h(\freq)\right|^2\}-
   \lambda \E \{J_h(\freq) H^\ast(\freq)\}
   -
   \lambda \E\{ J_h^\ast(\freq) H(\freq)\}+
  \lambda^2 \E \{\left|H(\freq)\right|^2\}.
\end{equation*}
We now use Campbell's theorem to evaluate these expectations.
We already have from Proposition~\ref{DFTprop} that
\begin{align}
\nonumber
    \E\left\{\left| J_h(\freq)\right|^2\right\}
    &=\lambda +\iint_{R^d\times R^d}{\rho}^{(2)}(x-y)h(x)e^{-2\pi i \freq\cdot  x}h^\ast(y)e^{2\pi i \freq\cdot  y}\,dx\,dy.
    \end{align}
Also we note that
\begin{align}
    H(\freq)&=\int_{R^d}h(x)e^{-2\pi i \freq\cdot  x}\,dx.
\end{align}
We also note
\begin{align}
\nonumber
    \E\left\{ J_h(\freq)\right\}
    &=\lambda H(\freq).
    \end{align}
Therefore we have
\begin{align}
\nonumber
  \!  \E \left\{ \widetilde{I}_h(\freq)\right\}\!
    &=\!\E\{ \left| J_h(\freq)\right|^2\}-
   \lambda \E \{J_h(\freq) H^\ast(\freq)\}
   -
   \lambda \E\{ J_h^\ast(\freq) H(\freq)\}+\lambda^2
   \E \{\left|H(\freq)\right|^2\}\\
   \nonumber
   &=\E\{ \left| J_h(\freq)\right|^2\}-\lambda^2
   \left|H(\freq)\right|^2\\
   &=\lambda +\iint_{R^d\times R^d}\{{\rho}^{(2)}(x-y)-\lambda^2\}h(x)e^{-2\pi i \freq\cdot  x}h^\ast(y)e^{2\pi i \freq\cdot  y}\,dx\,dy.
   \label{convy}
\end{align}
We note that the multiplication in~\eqref{convy} leads to a convolution in the wavenumber domain. If we rewrite
\begin{align}
{\rho}^{(2)}(z)-\lambda^2&=
\int_{R^d} (f(\freq)-\lambda)e^{2\pi i \freq\cdot  z}\,d\freq,
\end{align}
then rewriting~\eqref{convy} we get
\begin{align}
\nonumber
\!  \E \left\{ \widetilde{I}_h(\freq)\right\}\! 
&=\lambda +\iint_{R^d\times R^d}\int_{R^d} (f(u)-\lambda)\exp(-2\pi i u\cdot (x-y)+2\pi i \k\cdot  (x-y))\,du h(x)h^\ast(y)\,dx\,dy\\
\nonumber
&=\lambda +\int_{R^d} (f(u)-\lambda) \|H(\k-u)\|^2\,du=
\int_{R^d} f(u) \|H(\k-u)\|^2\,du,
\end{align}
as given.
\end{proof}

\section{Proof of Corollary~\ref{cor1}}\label{proofcor1}
\begin{proof}
Starting from Lemma~\ref{lemdebias} we can note that
\begin{equation}
\label{EIhw}
    \E\{\widetilde{I}_h(\freq) \}=\int_{{\mathbb{R}}^d}
    \left| H(\freq'-\freq)\right|^2 f(\freq')\, d\freq'.
\end{equation}
Now we need to assume properties of $|H(\freq)|$ in order to simplify this expression. Assume the cuboid domain has side length $\bside$, and that we have selected a data taper so that it has unit norm:
\[\int_{{\mathbb{R}}^d} \|h(x)\|^2\,dx=1. \]
We assume the data taper $h(x)$ is compactly supported in space and well--concentrated in wavenumber to region ${\mathcal W}\subset {\mathbb{R}}^d$ so that
\begin{equation}\iint_{\mathcal W} |H(\freq)|^2\,d\freq=1-\varepsilon_l,\end{equation}
where $l>0$ is the minimum length scale in any dimension of ${\mathcal W}$, and we assume $\epsilon_l\rightarrow 0$ as $|l|\rightarrow \infty.$ We now repeat the arguments posed in Appendix~\ref{sec:Pf1}, for a different window $h(x)$ then the (spatial) box--car to the region. 

We now return to~\eqref{EIhw} and present a similar argument to Appendix~\ref{sec:Pf1}. We note that
\begin{align}
\nonumber
\E\{\widetilde{I}_h(\freq) \}&=\int_{{\mathbb{R}}^d}
    \left| H(\freq'-\freq)\right|^2 f(\freq')\, d\freq'\\
    \nonumber
&=\int_{{\mathbb{R}}^d}
    \left| H(\freq'-\freq)\right|^2 \left(\lambda+\lambda^2\widetilde{f}(\freq') \right)\, d\freq'\\
    \nonumber
&=\lambda+\lambda^2\int_{{\mathbb{R}}^d}
     \left| H(\freq'-\freq)\right|^2\widetilde{f}(\freq')\,d\freq'\\
     \nonumber
&=\lambda+\lambda^2\int_{\mathcal W}
\left| H(\freq'-\freq)\right|^2\widetilde{f}(\freq')\,d\freq'+
\lambda^2\int_{{\mathbb{R}}^d\backslash {\mathcal W}}
     \left| H(\freq'-\freq)\right|^2\widetilde{f}(\freq')\,d\freq'\\
     &=\lambda+\E_1\{\widetilde{I}_h(\freq) \}+
     \E_2\{\widetilde{I}_h(\freq) \},
\end{align}
where $f^{[n]}$ denotes the $n$th derivative of the function $f$ and we use this equation to define the latter two terms. We note that 
\begin{align}
\left|  \E_2\{\widetilde{I}_h(\freq) \}\right| &\leq 
\lambda^2 \varepsilon_l \|\widetilde{f}\|_0.
\end{align}
Thus we only need to understand the remaining term in the expression. We then obtain with a Lagrange form of the remainder
\begin{align}
\nonumber
\E_1\{\widetilde{I}_h(\freq) \}&=\lambda^2\int_{\mathcal W}
\left| H(\freq'-\freq)\right|^2\widetilde{f}(\freq')\,d\freq'\\
\nonumber
&=\lambda^2\int_{\mathcal W}
\left| H(\freq'-\freq)\right|^2\left[\widetilde{f}(\freq)+\widetilde{f}^{[1]}(\freq)(\freq'-\freq)+\frac{1}{2}\widetilde{f}^{[2]}(\freq)(\freq'-\freq)^2+R^{[3]}(\freq,\freq')\right]\,d\freq'\\
&=\lambda^2 (1-\varepsilon_l)\widetilde{f}(\freq)+\frac{1}{2}\lambda^2\widetilde{f}^{[2]}(\freq) \int_{\mathcal W}
\left| H(\freq'-\freq)\right|^2(\freq'-\freq)^2\,d\freq'+\widetilde{R}^{[3]}(\freq).
\end{align}
We can note that 
\begin{equation}
|{R}^{[3]}(\freq,\freq')|\leq \frac{1}{3!}\sup_{\freq''} |\widetilde{f}^{[3]}(\freq'')|\int |\freq'-\freq|^3\left| H(\freq'-\freq)\right|^2\,d\freq'.
\end{equation}
Putting these terms together we obtain that
\begin{align}
\E\{\widetilde{I}_h(\freq) \}&=\lambda+\lambda^2 (1-\varepsilon_l)\widetilde{f}(\freq)+\frac{1}{2}\lambda^2\widetilde{f}^{[2]}(\freq) \int_{\mathcal W}
\left| H(\freq'-\freq)\right|^2(\freq'-\freq)^2\,d\freq'+\widetilde{R}^{[3]}(\freq)\\
&=\lambda+\lambda^2 \widetilde{f}(\freq)+o(1),
\end{align}
which completes the expression.
\end{proof}

\section{Proof of Theorem~\ref{thm1}}
\label{sec:Pf2}
\begin{proof}
The expectation of the tapered Fourier transform is (see~\eqref{eqn:Campbell})
\begin{align}
\nonumber
\E I^t(\freq) &= 
 \int_{{\mathbb{R}}^{2d}} 
h(x)h(y)e^{-2\pi i\freq\cdot (x-y)} \rho^{(2)}(x,y)\,dx\,dy+\int_{{\mathbb{R}}^{d}} 
h^2(x) \rho^{(1)}(x)\,dx
\\
&=\lambda\int_\R h^2(x) dx + \int_{B^2} h(x)h(y)e^{-2\pi i\freq\cdot (x-y)}\rtwo(x-y)dxdy\\
\nonumber
&= \lambda\cdot 1 + \int_{B^2} h(x)h(y)e^{-2\pi i\freq\cdot (x-y)}[\rtwo(x-y)-\lambda^2]dxdy \\
\nonumber
&+ \lambda^2 \int_{B^2} h(x)h(y)e^{-i2\pi \freq\cdot (x-y)}dxdy\\
\nonumber
&= \lambda  + \int_{\R^d}\int_{\R^d}h(x)h(y)e^{-2\pi i\freq\cdot (x-y)}[\rtwo(x-y)-\lambda^2]dxdy+\lambda^2 h(B,\freq)h(B,-\freq)\\
\nonumber
&= \lambda  + \int_{\R^d}\int_{\R^d}h(x)h(y)e^{-2\pi i\freq\cdot (x-y)}[\rtwo(x-y)-\lambda^2]dxdy+\lambda^2 \left|H(\freq)\right|^2\\
&=\lambda  + \E I^{t,(2)}(\freq)+\lambda^2 \left|H(\freq)\right|^2,
\label{expectationper}
\end{align}
this defining the quantity
$\E I^{t,(2)}(\freq).$

The expression $\E I^{t,(2)}(\freq)$ mirrors what we can see in~\eqref{expPerio}.
We see directly the benefits of tapering--$|H(\freq)|$ will decay rapidly from zero so there will be no effect from the third term--there will be less blurring in the second term apart from very close to $\freq=0$.

From which we see that we can re-express this understanding in the Fourier domain, again using the convolution theorem. We define
\begin{equation*}
\E I^{t,(2)}(\freq)= \int_{\R^d}\int_{\R^d}h(x)h(y)e^{-2\pi i\freq\cdot (x-y)}[\rtwo(x-y)-\lambda^2]dxdy.
\end{equation*}
To explore this further we again do a change of variables:
\begin{equation*}
\E I^{t,(2)}(\freq)= \int_{\mathbb{R}^d}
\int_{B_{z}}h(x)h(x-z)e^{-i2\pi \freq\cdot z}[\rtwo(z)-\lambda^2]dxdz.
\end{equation*}
We define the inner product window using that $h(x)$ is compactly supported to get
\[W_h(z)=\int_{B_z}h(x)h(x-z)dx=\int_{\R^d}h(x)h(x-z)dx=
\int_{\R^d} \left| H(\freq)\right|^2 e^{2\pi i\freq\cdot z}\,d\freq.\]
We could also divide by $|B\cap B_{-z}|h(x)h(x-z)$
as long as we are inside the domain. Outside $B$ we cannot, and so
this is why we cannot remove all bias in the spectrum even if we bias--correct as suggested  by~\cite{guyon1982parameter}.

With this window we can note that
\begin{equation}
\label{rapered}
\E I^{t,(2)}(\freq)= \lambda^2\int_{\mathbb{R}^d}
\left| H(\freq'-\freq)\right|^2\widetilde{f}(\freq')\;d\freq'
.\end{equation}
Ideally we would like $\left| H(\freq)\right|^2=\delta(\freq)$, but this is not possible as
$|B|$ is finite. If we assume 
$\widetilde{f}(\freq)$ is smooth and possessing two derivatives then if $B$ is growing we can obtain nearly unbiasedness. 
We now assume that $\left| H(\freq)\right|^2$
is concentrated to a region in wavenumber ${\cal W}$ and again use a Taylor expansion. There are two approaches to this, either~\cite{Riedel1995} or~\cite{thomson1990quadratic}.

The systematic bias: $\lambda^2 h(B,\freq)h(B,-\freq)=\left| H(\freq)\right|^2$ replaces the sinc functions for the square box. As we have chosen the function $h$ to be well--concentrated~\cite{thomson1990quadratic,Riedel1995,walden2000unified}, the effect of this is limited.

We assume that 
\begin{equation}
\int_{{\cal W}}\left| H(\freq)\right|^2\;d\freq=1-\varepsilon_{\bside},\quad {\mathrm{where}}\quad \varepsilon_{\bside}=o(1).
\end{equation}
We also assume that $\tilde{f}(\freq)$ is upper bounded by $\|\widetilde{f}\|_0$. We then have
\begin{align}
\E I^{t,(2)}(\freq)&= \lambda^2\int_{\mathbb{R}^d\backslash {\cal W}}
\left| H(\freq'-\freq)\right|^2\widetilde{f}(\freq')\;d\freq'
+\lambda^2\int_{ {\cal W}}
\left| H(\freq'-\freq)\right|^2\widetilde{f}(\freq')\;d\freq'.
\end{align}
We note that
\[\int_{\mathbb{R}^d\backslash {\cal W}}
\left| H(\freq'-\freq)\right|^2\widetilde{f}(\freq')\;d\freq'
\leq \|\widetilde{f}\|_0\{1-\{ 1-\varepsilon_{\bside}\}\}.\]
We can yet again utilise the Taylor expansion of
\eqref{tildetaylor} inside ${\cal W}$.
We find
\begin{align}
\nonumber
\E I^{t,(2)}(\freq)&=\lambda^2\int_{ {\cal W}}
\left| H(\freq'-\freq)\right|^2\widetilde{f}(\freq')\;d\freq'
\left\{1+o(1)\right\}\\
\nonumber
&=\lambda^2\int_{ {\cal W}}
\left| H(\freq'-\freq)\right|^2\left\{\widetilde{f}(\freq)+\nabla \widetilde{f}(\freq)^T\left(\freq'-\freq\right)+
\frac{1}{2}\left(\freq'-\freq\right)^T\tilde{\mathbf{H}}_f(\freq'')\left(\freq'-\freq\right)\right\}\;d\freq'
\left\{1+o(1)\right\}\\
&=\lambda^2\widetilde{f}(\freq)+0+
\lambda^2\int_{ {\cal W}}\left| H(\freq'-\freq)\right|^2\frac{1}{2}\left(\freq'-\freq\right)^T\tilde{\mathbf{H}}_f(\freq'')
\left(\freq'-\freq\right)\;d\freq'.
\end{align}
We note that
\begin{align*}
&\|\int_{ {\cal W}}\left| H(\freq'-\freq)\right|^2\frac{1}{2}\left(\freq'-\freq\right)^T\tilde{\mathbf{H}}_f(\freq'')
\left(\freq'-\freq\right)\;d\freq'\|^2\\
&\leq |\tilde{\mathbf{H}}_f(\freq'')|
\int_{ {\cal W}}\left| H(\freq'-\freq)\right|^2\frac{1}{2}\left(\freq'-\freq\right)^T
\left(\freq'-\freq\right)\;d\freq'\\
&=|\tilde{\mathbf{H}}_f(\freq'')|
\int_{ {\cal W}}\left| H(\freq)\right|^2 w^2 d\freq=
{\cal O}(1/\bside).
\end{align*}
Then
\begin{align*}
\E I^{t}(\freq)
&=\lambda  + \E I^{t,(2)}(\freq)+\lambda^2  \left|H(\freq)\right|^2\\
&=f(\k)+0+
{\cal O}(1/\bside)+\lambda^2  \left|H(\freq)\right|^2.
\end{align*}
We have that 
$\left|H(\freq)\right|^2$ decays in $\|\freq\|$. If $\freq\notin {\mathcal W}$ then the influence of this term becomes negligible.
\end{proof}

\section{Proof of Theorem~\ref{Propcovar}}
\label{sec:Pf3}
\begin{proof}For brevity we define the notation $\phi_\k(x)=e^{-2\pi i\k\cdot x}$. From first principles using Campbell's theorem we may deduce that the uncentred second moment of $I^t_\itapers(\freq_1)$ with $I^t_\itapersalt(\freq_2)$ takes the
form of
\begin{align}
\nonumber
 \E \left\{I^t_\itapers(\freq_1) I^t_\itapersalt(\freq_2)\right\} =& \E\left[\sum^{\neq}_{x,y\in X}
 h_\itapers(x) h_\itapers(y) \phi_{\freq_1}(x-y) + \sum_{x\in X}
 h_\itapers^2(x)  \right]\left[\sum_{z,v\in X}^{\neq} h_\itapersalt(z) h_\itapersalt(v)\phi_{\freq_2}(z-v) + \sum_{q\in X}
 h_\itapersalt^2(q) \right] \\
 \nonumber
=& \E\left\{ \sum^{\neq}_{x,y\in X}h_\itapers(x) h_\itapers(y)\phi_{\freq_1}(x-y)\sum_{z,v\in X}^{\neq} h_\itapersalt(z) h_\itapersalt(v)\phi_{\freq_2}(z-v)\right\} \\
\nonumber
& + \E \left\{\sum_{x\in X}
 h_\itapers^2(x) \sum^{\neq}_{y,z\in X}h_\itapersalt(y)h_\itapersalt(z)\phi_{\freq_2}(y-z)+\sum_{x\in X}
 h_\itapersalt^2(x) \sum^{\neq}_{y,z\in X}h_\itapers(y)h_\itapers(z)\phi_{\freq_1}(y-z)\right\}\\ 
 \nonumber
& + \E \left[\sum_{x\in X}
 h_\itapers^2(x) \sum_{y\in X}
 h_\itapersalt^2(y)\right] \\
= & V_1(\freq_1,\freq_2) + V_2(\freq_1,\freq_2) + V_3 .
\end{align}
This latter equation defines the three terms $V_1(\freq_1,\freq_2) $, $V_2(\freq_1,\freq_2)$ and $V_3$ (where the latter does not depend on $\freq_1$ and $\freq_2$ even if $V_2$ is the sum of two terms, one only depending on $\freq_1$ and the other only depending on $\freq_2$).
We now need to further split these terms up in order to learn about the cases when we can have repeats of locations or not. We start with $V_1$. For convenience we suppress ``$\in X$'' notation. First, we note:
\begin{align}
\nonumber
V_1(\freq_1,\freq_2)&=\E\left\{ \sum^{\neq}_{x,y}h_\itapers(x) h_\itapers(y)\phi_{\freq_1}(x-y)\sum_{z,v}^{\neq} h_\itapersalt(z) h_\itapersalt(v)\phi_{\freq_2}(z-v)\right\} \\
 \nonumber
 &=\E\sum^{\neq}_{x,y}\sum_{z,v}^{\neq}
 \I\left(x\neq z,x\neq v \right)
 \I\left(y\neq z,y\neq v \right)
 h_\itapers(x) h_\itapers(y)h_\itapersalt(z) h_\itapersalt(v)\phi_{\freq_1}(x-y) \phi_{\freq_2}(z-v)\\
 \nonumber
 &+\E\sum^{\neq}_{x,y}\sum_{z,v}^{\neq}
 \I\left(x=z ,x\neq v \right)
 \I\left(y\neq z,y\neq v \right)
 h_\itapers(x) h_\itapers(y)h_\itapersalt(z) h_\itapersalt(v)\phi_{\freq_1}(x-y) \phi_{\freq_2}(x-v)\\
 \nonumber
 &+\E\sum^{\neq}_{x,y}\sum_{z,v}^{\neq}
 \I\left(x\neq z,x= v \right)
 \I\left(y\neq z,y\neq v \right)
 h_\itapers(x) h_\itapers(y)h_\itapersalt(z) h_\itapersalt(v)\phi_{\freq_1}(x-y) \phi_{\freq_2}(z-v)\\
 \nonumber
 &+\E\sum^{\neq}_{x,y}\sum_{z,v}^{\neq}
 \I\left(x\neq z,x\neq v \right)
 \I\left(y= z,y\neq v \right)
 h_\itapers(x) h_\itapers(y)h_\itapersalt(z) h_\itapersalt(v)\phi_{\freq_1}(x-y) \phi_{\freq_2}(z-v)\\
 \nonumber
 &+\E\sum^{\neq}_{x,y}\sum_{z,v}^{\neq}
 \I\left(x\neq z,x\neq v \right)
 \I\left(y\neq z,y= v \right)
 h_\itapers(x) h_\itapers(y)h_\itapersalt(z) h_\itapersalt(v)\phi_{\freq_1}(x-y) \phi_{\freq_2}(z-v)\\
 \nonumber
 &+\E\sum^{\neq}_{x,y}\sum_{z,v}^{\neq}
 \I\left(x= z,x\neq v \right)
 \I\left(y= v,y\neq z \right)
 h_\itapers(x) h_\itapers(y)h_\itapersalt(z) h_\itapersalt(v)\phi_{\freq_1}(x-y) \phi_{\freq_2}(z-v)\\
 \nonumber
 &+\E\sum^{\neq}_{x,y}\sum_{z,v}^{\neq}
 \I\left(x=v,x\neq z \right)
 \I\left(y= z,y\neq v \right)
 h_\itapers(x) h_\itapers(y)h_\itapersalt(z) h_\itapersalt(v)\phi_{\freq_1}(x-y) \phi_{\freq_2}(z-v).
\end{align}
We have here assumed that $h_\itapers(x)$ is compactly supported on $B$ for all values of $p$ used. Using the simplification implied by~\eqref{eqn:Campbell} we obtain the expression of
\begin{align}
\nonumber
V_1(\freq_1,\freq_2)&=\E\left\{ \sum^{\neq}_{x,y}h_\itapers(x) h_\itapers(y)\phi_{\freq_1}(x-y)\sum_{z,v}^{\neq} h_\itapersalt(z) h_\itapersalt(v)\phi_{\freq_2}(z-v)\right\} \\
 \nonumber
 &=\int_{B^4}
 \rho^{(4)}(x,y,z,v)
 h_\itapers(x) h_\itapers(y)h_\itapersalt(z) h_\itapersalt(v)\phi_{\freq_1}(x-y) \phi_{\freq_2}(z-v)dxdydvdz\\
 \nonumber
 &+\int_{B^3}\rho^{(3)}(x,y,v)
 h_\itapers(x) h_\itapers(y)h_\itapersalt(x) h_\itapersalt(v)\phi_{\freq_1}(x-y) \phi_{\freq_2}(x-v)dxdydv\\
 \nonumber
 &+\int_{B^3}\rho^{(3)}(x,y,z)
 h_\itapers(x) h_\itapers(y)h_\itapersalt(z) h_\itapersalt(x)\phi_{\freq_1}(x-y) \phi_{\freq_2}(z-x)dxdydz\\
 \nonumber
 &+\int_{B^3}\rho^{(3)}(x,y,v)
 h_\itapers(x) h_\itapers(y)h_\itapersalt(y) h_\itapersalt(v)\phi_{\freq_1}(x-y) \phi_{\freq_2}(y-v)dxdydv\\
 \nonumber
 &+\int_{B^3}\rho^{(3)}(x,y,z)
 h_\itapers(x) h_\itapers(y)h_\itapersalt(z) h_\itapersalt(y)\phi_{\freq_1}(x-y) \phi_{\freq_2}(z-y)dxdydz\\
 \nonumber
 &+\int_{B^2}\rho^{(2)}(x,y)
 h_\itapers(x) h_\itapers(y)h_\itapersalt(x) h_\itapersalt(y)\phi_{\freq_1}(x-y) \phi_{\freq_2}(x-y)dxdy\\
 \nonumber
 &+\int_{B^2}\rho^{(2)}(x,y)
 h_\itapers(x) h_\itapers(y)h_\itapersalt(y) h_\itapersalt(x)\phi_{\freq_1}(x-y) \phi_{\freq_2}(y-x)dxdy.
\end{align}
We will be able to use the orthogonality between the data tapers $\{h_\itapersalt(x)\}$ but this will be easier in the Fourier domain where we can assume local smoothness of the Fourier transform of $\rho^{(n)}(\dots)$, and carry out the usual Taylor series. Now we start by implementing the calculations for the Poisson case, to see the mechanics.

The next term in the expansion is then
\begin{align*}
V_2&(\freq_1,\freq_2)=\E \left\{\sum_{v}
 h_\itapers^2(v) \sum^{\neq}_{x,y} h_\itapersalt(x)h_\itapersalt(y)\phi_{\freq_2}(x-y)+\sum_{x}
 h_\itapersalt^2(x) \sum^{\neq}_{v,y}h_\itapers(x)h_\itapers(y) h_\itapersalt(v)h_\itapersalt(y)\phi_{\freq_1}(x-y)\right\}\\
 &= \int_{B^3} \rho^{(3)}(x,y,v)h_\itapers^2(v) h_\itapersalt(x)h_\itapersalt(y)\phi_{\freq_2}(x-y)\;dxdydv
 +\iint_{B^2}\rho^{(2)}(x,y) h_\itapersalt(x)h_\itapersalt(y)h_\itapers^2(x)\phi_{\freq_2}(x-y)\;dxdy\\
 &+\iint_{B^2}\rho^{(2)}(x,y)h_\itapers^2(y)h_\itapersalt(x)h_\itapersalt(y)\phi_{\freq_2}(x-y)\;dxdy
 + \int_{B^3} \rho^{(3)}(x,y,v)h_\itapersalt^2(v)h_\itapers(x)h_\itapers(y)\phi_{\freq_1}(x-y)\;dxdydv\\
 &
 +\iint_{B^2}\rho^{(2)}(x,y)h_\itapersalt^2(x)h_\itapers(x)h_\itapers(y)
 \phi_{\freq_1}(x-y)\;dxdy
 +\iint_{B^2}\rho^{(2)}(x,y)h_\itapersalt^2(y)h_\itapers(x)h_\itapers(y)
 \phi_{\freq_1}(x-y)\;dxdy.
\end{align*}
The final term is in turn 
using~\eqref{eqn:Campbell}
\begin{align}
\nonumber
V_3&=\E \left[\sum_{x}
 h_\itapers^2(x) \sum_{y}
 h_\itapersalt^2(y)\right]=\E \sum_{x\neq y}
 h_\itapers^2(x)h_\itapersalt^2(y)+\E\sum_{x}
 h_\itapers^2(x)h_\itapersalt^2(x)\\
 &=\int_{B^2}\rho^{(2)}(x,y)h_\itapers^2(x)h_\itapersalt^2(y)\,dx\,dy+\lambda 
 \int_{B} h_\itapers^2(x)h_\itapersalt^2(x)\,dx.
\end{align}
We can put these terms together to determine
the covariance between the periodogram at two different wavenumbers and using two different tapers as follows:
\begin{align*}
\cov&\{I^t_\itapers(\freq_1), I^t_\itapersalt(\freq_2)\}
=\int_{B^4}
 \rho^{(4)}(x,y,z,v)
 h_\itapers(x) h_\itapers(y)h_\itapersalt(z) h_\itapersalt(v)f_{\freq_1}(x-y) f_{\freq_2}(z-v)dxdydvdz\\
 \nonumber
 &+\int_{B^3}\rho^{(3)}(x,y,v)
 h_\itapers(x) h_\itapers(y)h_\itapersalt(x) h_\itapersalt(v)f_{\freq_1}(x-y) f_{\freq_2}(x-v)dxdydv\\
 \nonumber
 &+\int_{B^3}\rho^{(3)}(x,y,z)
 h_\itapers(x) h_\itapers(y)h_\itapersalt(z) h_\itapersalt(x)f_{\freq_1}(x-y) f_{\freq_2}(z-x)dxdydz\\
 \nonumber
 &+\int_{B^3}\rho^{(3)}(x,y,v)
 h_\itapers(x) h_\itapers(y)h_\itapersalt(y) h_\itapersalt(v)f_{\freq_1}(x-y) f_{\freq_2}(y-v)dxdydv\\
 \nonumber
 &+\int_{B^3}\rho^{(3)}(x,y,z)
 h_\itapers(x) h_\itapers(y)h_\itapersalt(z) h_\itapersalt(y)f_{\k_1}(x-y) f_{\k_2}(z-y)dxdydz\\
 \nonumber
 &+\int_{B^2}\rho^{(2)}(x,y)
 h_\itapers(x) h_\itapers(y)h_\itapersalt(x) h_\itapersalt(y)f_{\k_1}(x-y) f_{\k_2}(x-y)dxdy\\
 \nonumber
 &+\int_{B^2}\rho^{(2)}(x,y)
 h_\itapers(x) h_\itapers(y)h_\itapersalt(y) h_\itapersalt(x)f_{\k_1}(x-y) f_{\k_2}(y-x)dxdy\\
 &+\int_{B^3} \rho^{(3)}(x,y,v)h_\itapers^2(v)h_\itapersalt(x)h_\itapersalt(y)f_{\k_2}(x-y)dxdydv
 +\iint_{B^2}\rho^{(2)}(x,y)h_\itapers^2(x)h_\itapersalt(x)h_\itapersalt(y)f_{\k_2}(x-y)
 dxdy\\
 &+\iint_{B^2}\rho^{(2)}(x,y)h_\itapers^2(y)h_\itapersalt(x)h_\itapersalt(y)f_{\k_2}(x-y)dxdy
 +\int_{B^3} \rho^{(3)}(x,y,v)h_\itapersalt^2(v)h_\itapers(x)h_\itapers(y)f_{\k_1}(x-y)dxdy\\
& +\iint_{B^2}\rho^{(2)}(x,y)h_\itapersalt^2(x)h_\itapers(x)h_\itapers(y)f_{\k_1}(x-y)
dxdy +\iint_{B^2}\rho^{(2)}(x,y)h_\itapersalt^2(y)h_\itapers(x)h_\itapers(y)f_{\k_1}(x-y)dxdy\\
 &+\int_{B^2}\rho^{(2)}(x,y)h_\itapers^2(x)h_\itapersalt^2(y)\,dx\,dy+\lambda 
 \int_{B} h_\itapers^2(x)h_\itapersalt^2(x)\,dx\\
 &-\left\{ \lambda + \int_{B^2} h_\itapers(x)h_\itapers(y)e^{-i{\k_1}^T(x-y)}\rtwo(x-y)dxdy \right\}\left\{ \lambda + \int_{B^2} h_\itapersalt(x)h_\itapersalt(y)e^{-i{\k_2}^T(x-y)}\rtwo(x-y)dxdy \right\}\\
 &=\int_{B^4}\rho^{(4)}(x,y,z,v)m^{(4)}(x,y,z,v)\,dxdydzdv+\int_{B^3}\rho^{(3)}(x,y,z)m^{(3)}
 (x,y,z)\,dxdydz\\
 &+\int_{B^2}\rho^{(2)}(x,y)m^{(2)}
 (x,y)\,dxdy\\
 &-\left\{ \lambda + \int_{B^2} h_\itapers(x)h_\itapers(y)e^{-i{\k_1}^T(x-y)}\rtwo(x-y)dxdy \right\}\left\{ \lambda + \int_{B^2} h_\itapersalt(x)h_\itapersalt(y)e^{-i{\k_2}^T(x-y)}\rtwo(x-y)dxdy \right\},
\end{align*}
this defining $m^{(4)}(x,y,z,v)$, $m^{(3)}(x,y,z)$
and $m^{(2)}(x,y)$, respectively.

We start by
noting that the multiplier of $\rho^{(3)}$
takes the form of
\begin{align*}
m^{(3)}(x,y,z)&= h_\itapers(x) h_\itapers(y)h_\itapersalt(x) h_\itapersalt(z)\phi_{\k_1}(x-y) \phi_{\k_2}(x-z)+h_\itapers(x) h_\itapers(y)h_\itapersalt(z) h_\itapersalt(x)\phi_{\k_1}(x-y) \phi_{-\k_2}(x-z)\\
&+h_\itapers(x) h_\itapers(y)h_\itapersalt(y) h_\itapersalt(z)\phi_{\k_1}(x-y) \phi_{\k_2}(y-z)+
h_\itapers(x) h_\itapers(y)h_\itapersalt(z) h_\itapersalt(y)\phi_{\k_1}(x-y) \phi_{-{\k_2}}(y-z)\\
&+h_\itapers^2(z)h_\itapersalt(x)h_\itapersalt(y)\phi_{\k_2}(x-y)+h_\itapersalt^2(z)h_\itapers(x)h_\itapers(y)\phi_{\k_1}(x-y)\\
&= h_\itapers(x) h_\itapers(y)h_\itapersalt(x) h_\itapersalt(z)\phi_{\k_1}(x-y)\left\{ \phi_{\k_2}(x-z)+  \phi_{-{\k_2}}(x-z)\right\}\\
&+h_\itapers(x) h_\itapers(y)h_\itapersalt(y) h_\itapersalt(z)\phi_{\k_1}(x-y)\left\{ \phi_{\k_2}(y-z)+  \phi_{-{\k_2}}(y-z)\right\}\\
&+\phi_{\k_2}(x-y)h_\itapers^2(z)h_\itapersalt(x)h_\itapersalt(y)+\phi_{\k_1}(x-y) h_\itapersalt^2(z)h_\itapers(x)h_\itapers(y)
\end{align*}
Looking at the positive term multiplying $
\rho^{(2)}(x,y)$ we can simplify to
\begin{align}
\nonumber
m^{(2)}(x,y)&=h_\itapers(x) h_\itapers(y)h_\itapersalt(x) h_\itapersalt(y)\phi_{\k_1}(x-y) \phi_{\k_2}(x-y)
+h_\itapers(x) h_\itapers(y)h_\itapersalt(y) h_\itapersalt(x)\phi_{\k_1}(x-y) \phi_{\k_2}(y-x)\\
\nonumber
&+h_\itapers^2(x)h_\itapersalt(x)h_\itapersalt(y)\phi_{\k_2}(x-y)+h_\itapers^2(y)h_\itapersalt(x)h_\itapersalt(y)
\phi_{\k_1}(x-y)+h_\itapersalt^2(x)h_\itapers(x)h_\itapers(y)\phi_{\k_1}(x-y)\\
&+\nonumber
h_\itapersalt^2(x)h_\itapers(x)h_\itapers(y)\phi_{\k_2}(x-y)+h_\itapers^2(x)h_\itapersalt^2(y)\\
\nonumber
&=h_\itapers(x) h_\itapers(y)h_\itapersalt(x) h_\itapersalt(y)f_{\k_1}(x-y)\left\{ 
f_{\k_2}(x-y)+f_{-{\k_2}}(x-y)\right\}+h_\itapers^2(x)h_\itapersalt(x)h_\itapersalt(y)\\
\nonumber
&\times \left\{f_{\k_2}(x-y)+f_{\k_1}(x-y)\right\}+h_\itapersalt^2(x)h_\itapers(x)h_\itapers(y)\left\{f_{\k_2}(x-y)+f_{\k_1}(x-y)\right\}+h_\itapers^2(x)h_\itapersalt^2(y)\\
\nonumber
&=h_\itapers(x) h_\itapers(y)h_\itapersalt(x) h_\itapersalt(y)\phi_{\k_1}(x-y)\left\{ 
\phi_{\k_2}(x-y)+\phi_{-{k_2}}(x-y)\right\}
\\ \nonumber
& + \left\{h_\itapers^2(x)h_\itapersalt(x)h_\itapersalt(y) + 
h_\itapersalt^2(x)h_\itapers(x)h_\itapers(y)\right\}  \cdot\left\{\phi_{\k_2}(x-y)+\phi_{\k_1}(x-y)\right\}+h_\itapers^2(x)h_\itapersalt^2(y).
\end{align}
Then it follows that we can simplify the final term:
\begin{align*}
\nonumber
&\left\{ \lambda + \int_{B^2} h_\itapers(x)h_\itapers(y)e^{-i{\k_1}^T(x-y)}\rtwo(x-y)dxdy \right\}\left\{ \lambda + \int_{B^2} h_\itapersalt(x)h_\itapersalt(y)e^{-i{\k_2}^T(x-y)}\rtwo(x-y)dxdy \right\}\\
&=\lambda^2+\lambda \int_{B^2} (h_\itapers(x)h_\itapers(y)\phi_{\k_1}(x-y)+h_\itapersalt(x)h_\itapersalt(y)\phi_{\k_2}(x-y))\rtwo(x-y)dxdy\\
&+ \left\{  \int_{B^2} h_\itapers(x)h_\itapers(y)\phi_{\k_1}(x-y)\rtwo(x-y)dxdy \right\}\left\{  \int_{B^2} h_\itapersalt(x)h_\itapersalt(y)\phi_{\k_2}(x-y)\rtwo(x-y)dxdy \right\}
\end{align*}
We then get for the tapered periodogram:
\begin{align*}
\cov& \{I^t_\itapers({\k_1}) , I^t_\itapersalt({\k_2})\}
=\int_{B^4}
 \rho^{(4)}(x,y,z,v)
 h_\itapers(x) h_\itapers(y)h_\itapersalt(z) h_\itapersalt(v)\phi_{\k_1}(x-y) \phi_{\k_2}(z-v)\;dxdydvdz\\
 \nonumber
 &+\int_{B^3} \rho^{(3)}(x,y,v)\left[h_\itapers(x) h_\itapers(y)h_\itapersalt(x) h_\itapersalt(v)\phi_{\k_1}(x-y)\left\{ \phi_{\k_2}(x-v)+  \phi_{-{\k_2}}(x-v)\right\}\right.\\
&+h_\itapers(x) h_\itapers(y)h_\itapersalt(y) h_\itapersalt(v)\phi_{\k_1}(x-y)\left\{ \phi_{\k_2}(y-v)+  \phi_{-{\k_2}}(y-v)\right\}\\
&\left.+\phi_{\k_2}(x-y)h_\itapers^2(v)h_\itapersalt(x)h_\itapersalt(y)+\phi_{\k_1}(x-y) h_\itapersalt^2(v)h_\itapers(x)h_\itapers(y)\right]\;
dxdydv\\
&+\int_{B^2}\rho^{(2)}(x,y)\left[ h_\itapers(x) h_\itapers(y)h_\itapersalt(x) h_\itapersalt(y)\phi_{\k_1}(x-y)\left\{ 
\phi_{\k_2}(x-y)+\phi_{-{\k_2}}(x-y)\right\}\right.\\
&+\left\{h_\itapers^2(x)h_\itapersalt(x)h_\itapersalt(y)+ h_\itapersalt^2(x)h_\itapers(x)h_\itapers(y)\right\}
\left.\left\{\phi_{\k_2}(x-y)+\phi_{\k_1}(x-y)\right\}+h_\itapers^2(x)h_\itapersalt^2(y)\right]\,dxdy\\
 &+\lambda 
 \int_{B} h_\itapers^2(x)h_\itapersalt^2(x)\,dx-\lambda^2-\lambda \int_{B^2} h_\itapers(x)h_\itapers(y)h_\itapersalt(x)h_\itapersalt(y)(\phi_{\k_1}(x-y)+\phi_{\k_2}(x-y))\rtwo(x,y)dxdy\\
 &-\left\{  \int_{B^2} h_\itapers(x)h_\itapers(y)\phi_{\k_1}(x-y)\rtwo(x,y)dxdy \right\}\cdot \left\{  \int_{B^2} h_\itapersalt(x)h_\itapersalt(y)\phi_{\k_2}(x-y)\rtwo(x,y)dxdy \right\}.
\end{align*}
This gives the covariance between the tapered periodogram with different tapers, at different wavenumbers, and it quite a useful general expression, as the following expression will show.

\end{proof}

\section{Proof of Corollary~\ref{Propcovar2} Part I}
\label{sec:Pf4}
\begin{proof}
This proposition both determines the variance of a single taper periodogram, and the cross-covariance between the tapers at a fixed wavenumber $\freq$ for a Poisson Process.
If we start from the assumption of Poissonian 
(namely~\eqref{assumpPoisson}) then we find
\begin{align*}
\cov& \{I^t_\itapers({\k}) , I^t_\itapersalt({\k})\}
=\int_{B^4}
 \lambda^4
 h_\itapers(x) h_\itapers(y)h_\itapersalt(z) h_\itapersalt(v)\phi_{\k}(x-y) \phi_{\k}(z-v)\;dxdydvdz\\
 \nonumber
 &+\int_{B^3} \lambda^3\left[h_\itapers(x) h_\itapers(y)h_\itapersalt(x) h_\itapersalt(v)\phi_{\k}(x-y)\left\{ \phi_{\k}(x-v)+  \phi_{-{\k}}(x-v)\right\}\right.\\
&+h_\itapers(x) h_\itapers(y)h_\itapersalt(y) h_\itapersalt(v)\phi_{\k}(x-y)\left\{ \phi_{\k}(y-v)+  \phi_{-{\k}}(y-v)\right\}\\
&\left.+\phi_{\k}(x-y)h_\itapers^2(v)h_\itapersalt(x)h_\itapersalt(y)+\phi_{\k}(x-y) h_\itapersalt^2(v)h_\itapers(x)h_\itapers(y)\right]\;
dxdydv\\
&+\int_{B^2}\lambda^2\left[ h_\itapers(x) h_\itapers(y)h_\itapersalt(x) h_\itapersalt(y)\phi_{\k}(x-y)\left\{ 
\phi_{\k}(x-y)+\phi_{-{\k}}(x-y)\right\}\right.\\
&+\left\{h_\itapers^2(x)h_\itapersalt(x)h_\itapersalt(y)+ h_\itapersalt^2(x)h_\itapers(x)h_\itapers(y)\right\}
\left.\left\{\phi_{\k}(x-y)+\phi_{\k}(x-y)\right\}+h_\itapers^2(x)h_\itapersalt^2(y)\right]\,dxdy\\
 &+\lambda 
 \int_{B} h_\itapers^2(x)h_\itapersalt^2(x)\,dx-\lambda^2-\lambda \int_{B^2} h_\itapers(x)h_\itapers(y)h_\itapersalt(x)h_\itapersalt(y)(\phi_{\k}(x-y)+\phi_{\k}(x-y))\lambda^2dxdy\\
 &-\left\{  \int_{B^2} h_\itapers(x)h_\itapers(y)\phi_{\freq}(x-y)\lambda^2dxdy \right\}\cdot \left\{  \int_{B^2} h_\itapersalt(x)h_\itapersalt(y)\phi_{\k}(x-y)\lambda^2dxdy \right\}.
\end{align*}
We now need to massage this expression in order to understand the correlation. We first note that the last and first terms cancel exactly, leaving us with
\begin{align*}
\cov& \{I^t_\itapers({\k}) , I^t_\itapersalt({\k})\}
=\lambda^3\int_{B^3} \left[h_\itapers(x) h_\itapers(y)h_\itapersalt(x) h_\itapersalt(v)\phi_{\k}(x-y)\left\{ \phi_{\k}(x-v)+  \phi_{-{\k}}(x-v)\right\}\right.\\
&\left.+h_\itapers(x) h_\itapers(y)h_\itapersalt(y) h_\itapersalt(v)\phi_{\k}(x-y)\left\{ \phi_{\k}(y-v)+  \phi_{-{\k}}(y-v)\right\}\right]\;
dxdydv\\
&+\lambda^3\int_{B^2} \phi_{\k}(x-y)h_\itapersalt(x)h_\itapersalt(y)+\phi_{\k}(x-y) h_\itapers(x)h_\itapers(y)\;dxdy\\
&+\lambda^2\int_{B^2}\left[ h_\itapers(x) h_\itapers(y)h_\itapersalt(x) h_\itapersalt(y)\phi_{\k}(x-y)\left\{ 
\phi_{\k}(x-y)+\phi_{-{\k}}(x-y)\right\}\right.\\
&+\left\{h_\itapers^2(x)h_\itapersalt(x)h_\itapersalt(y)+ h_\itapersalt^2(x)h_\itapers(x)h_\itapers(y)\right\}
\left.2\phi_{\k}(x-y)\right]\,dxdy\\
 &+\lambda 
 \int_{B} h_\itapers^2(x)h_\itapersalt^2(x)\,dx-2\lambda^3 \int_{B^2} h_\itapers(x)h_\itapers(y)h_\itapersalt(x)h_\itapersalt(y)\phi_{\k}(x-y)\;dxdy\\
 &=\sum_{j=1}^7 T_j(\k),
\end{align*}
this defining the sequence $\{T_j(\k)\}$.
We need some extra relationships to simplify these expressions. We note that
\begin{align}
\label{rel1}
&\int_B  h_\itapers(x)h_\itapersalt(x)\;dx=\delta_{\itapers\itapersalt}\\
&\int_B h_\itapers(x) \phi_{2\k}(x)\;dx\approx 0,\quad \k>b_h>0,
\label{rel2}\\
&\int H_\itapers(\k'-2\k)H_\itapersalt(\k')\;d\k'\approx 0\quad \k>b_h>0,
\end{align}
where $b_h$ is the bandwidth of the window defined in~\eqref{eqn:defbandwidth}. We find
\begin{align}
\nonumber
T_1(\k)&=\lambda^3\int_{B^3} h_\itapers(x) h_\itapers(y)h_\itapersalt(x) h_\itapersalt(v)\phi_{\k}(x-y)\left\{ \phi_{\k}(x-v)+  \phi_{-{\k}}(x-v)\right\}\;dxdydv\\
&=\lambda^3H_\itapers(-\k)H_\itapersalt(-\k)\int H_\itapers(\k'-2\k)H_\itapersalt(\k')\;d\k' +\lambda^3\delta_{\itapers\itapersalt}H_\itapers(\k)H_\itapersalt(-\k).
\end{align}
The next term is
\begin{align}
\nonumber
T_2(\k)&=\lambda^3\int_{B^3}h_\itapers(x) h_\itapers(y)h_\itapersalt(y) h_\itapersalt(v)\phi_{\k}(x-y)\left\{ \phi_{\k}(y-v)+  \phi_{-{\k}}(y-v)\right\}\;
dxdydv\\
&=\lambda^3\delta_{\itapers\itapersalt}H_\itapers(\k)H_\itapersalt(-\k)+\lambda^3H_\itapers(-\k)H_\itapersalt(-\k)\int H_\itapers(\k'-2\k)H_\itapersalt(\k')\;d\k'. 
\end{align}
The following term is
{\small \begin{align*}
T_3(\k)&=\lambda^3\int_{B^2} \phi_{\k}(x-y)h_\itapersalt(x)h_\itapersalt(y)+\phi_{\k}(x-y) h_\itapers(x)h_\itapers(y)\;dxdy\\
&=\lambda^3 H_\itapersalt(\k)H_\itapersalt(-\k)+\lambda^3 H_\itapers(\k)H_\itapers(-\k).
\end{align*}
The next term is
\begin{align*}
T_4(\k)&=\lambda^2\int_{B^2}\left[ h_\itapers(x) h_\itapers(y)h_\itapersalt(x) h_\itapersalt(y)\phi_{\k}(x-y)\left\{ 
\phi_{\k}(x-y)+\phi_{-{\k}}(x-y)\right\}\right]dxdy\\
&=\lambda^2\left|\int H_\itapers(\k'-2\k)H_\itapersalt(\k')\;d\k'\right|^2+
\lambda^2\delta_{\itapers\itapersalt}.
\end{align*}
The next term then takes the form
\begin{align*}
&T_5(\k)=2\lambda^2\int_{B^2}\left\{h_\itapers^2(x)h_\itapersalt(x)h_\itapersalt(y)+ h_\itapersalt^2(x)h_\itapers(x)h_\itapers(y)\right\}
\phi_{\k}(x-y)\,dxdy\\
&=2\lambda^2\left\{H_\itapersalt(-\k) \int_{{\mathbb{R}}^{2d}}H_\itapers(\k'')H_\itapers(\k'-\k'') H_\itapersalt(\k-\k')\,d\k'd\k''+H_\itapers(-\k) \int_{{\mathbb{R}}^{2d}}H_\itapersalt(\k'')H_\itapersalt(\k'-\k'') H_\itapers(\k-\k')\,d\k'd\k''\right\}.
\end{align*}}
The next term takes the form of
\begin{align*}
T_6(\k)&=\lambda 
 \int_{B} h_\itapers^2(x)h_\itapersalt^2(x)\,dx\sim \frac{\lambda}{|B|}.
\end{align*}
And the final term is negative:
\begin{align*}
T_7(\k)&=-2\lambda^3 \int_{B^2} h_\itapers(x)h_\itapers(y)h_\itapersalt(x)h_\itapersalt(y)\phi_{\k}(x-y)\;dxdy\\
&=-2\lambda^3\left|\int H_\itapers(\k-\k') H_\itapersalt(\k') \;d\k'\right|^2.
\end{align*}
As we have assumed the wavenumber $\k$ is sufficiently large
both $|H_\itapers(\k)|\approx |H_\itapersalt(\k)|\approx 0$. This means the only surviving contributions are
\[\cov\{I^t_\itapers({\k}) , I^t_\itapersalt({\k})\}\approx \lambda 
 \int_{B} h_\itapers^2(x)h_\itapersalt^2(x)\,dx+\lambda^2\delta_{\itapers\itapersalt}.\]
This yields the stated result. We can derive the $o(1)$ terms formally should we wish by quantifying the leakage in $\bside$, the length of a side.
\end{proof}
\section{Proof of Corollary~\ref{Propcovar3} Part II}
\label{sec:Pf5}
\begin{proof}
If we start from the assumption of 
Poissonian (namely~\eqref{assumpPoisson}) then
\begin{align*}
\cov& \{I^t_\itapers({\k_1}) , I^t_\itapers({\k_2})\}
=\int_{B^4}
 \lambda^4
 h_\itapers(x) h_\itapers(y)h_\itapers(z) h_\itapers(v)\phi_{\k_1}(x-y) \phi_{\k_2}(z-v)\;dxdydvdz\\
 \nonumber
 &+\int_{B^3} \lambda^3\left[h_\itapers^2(x) h_\itapers(y)h_\itapers(v)\phi_{\k_1}(x-y)\left\{ \phi_{\k_2}(x-v)+  \phi_{-{\k_2}}(x-v)\right\}\right.\\
&+h_\itapers(x) h_\itapers^2(y) h_\itapers(v)\phi_{\k_1}(x-y)\left\{ \phi_{\k_2}(y-v)+  \phi_{-{\k_2}}(y-v)\right\}\\
&\left.+\phi_{\k_2}(x-y)h_\itapers^2(v)h_\itapers(x)h_\itapers(y)+
\phi_{\k_1}(x-y) h_\itapers^2(v)h_\itapers(x)h_\itapers(y)\right]\;
dxdydv\\
&+\int_{B^2}\lambda^2\left[ h_\itapers(x) h_\itapers(y)h_\itapers(x) h_\itapers(y)\phi_{\k_1}(x-y)\left\{ 
\phi_{\k_2}(x-y)+\phi_{-{\k_2}}(x-y)\right\}\right.\\
&+\left\{h_\itapers^2(x)h_\itapers(x)h_\itapers(y)+ h_\itapers^2(x)h_\itapers(x)h_\itapers(y)\right\}
\left.\left\{\phi_{\k_2}(x-y)+\phi_{\k_1}(x-y)\right\}+h_\itapers^2(x)h_\itapers^2(y)\right]\,dxdy\\
 &+\lambda 
 \int_{B} h_\itapers^4(x)\,dx-\lambda^2-\lambda \int_{B^2} h_\itapers^2(x)h_\itapers^2(y)(\phi_{\k_1}(x-y)+
 \phi_{\k_2}(x-y))\lambda^2 dxdy\\
 &-\left\{  \int_{B^2} h_\itapers(x)h_\itapers(y)\phi_{\k_1}(x-y)\lambda^2dxdy \right\}\cdot \left\{  \int_{B^2} h_\itapers(x)h_\itapers(y)\phi_{\k_2}(x-y)\lambda^2dxdy \right\}
 \end{align*}
 Continuing on with the calculations we find:
 \begin{align*}
\cov& \{I^t_\itapers({\k_1}) , I^t_\itapers({\k_2})\} =\int_{B^3} \lambda^3\left[h_\itapers^2(x) h_\itapers(y)h_\itapers(v)\phi_{\k_1}(x-y)\left\{ \phi_{\k_2}(x-v)+  \phi_{-{\k_2}}(x-v)\right\}\right.\\
&+h_\itapers(x) h_\itapers^2(y) h_\itapers(v)\phi_{\k_1}(x-y)\left\{ \phi_{\k_2}(y-v)+  \phi_{-{\k_2}}(y-v)\right\}\\
&\left.+\phi_{\k_2}(x-y)h_\itapers^2(v)h_\itapers(x)h_\itapers(y)+
\phi_{\k_1}(x-y) h_\itapers^2(v)h_\itapers(x)h_\itapers(y)\right]\;
dxdydv\\
&+\int_{B^2}\lambda^2\left[ h_\itapers^2(x) h_\itapers^2(y) \phi_{\k_1}(x-y)\left\{ 
\phi_{\k_2}(x-y)+\phi_{-{\k_2}}(x-y)\right\}\right.\\
&+\left\{h_\itapers^3(x)h_\itapers(y)+ h_\itapers^3(x)h_\itapers(y)\right\}
\left.\left\{\phi_{\k_2}(x-y)+\phi_{\k_1}(x-y)\right\}\right]\,dxdy\\
 &+\lambda 
 \int_{B} h_\itapers^4(x)\,dx-\lambda^3 \int_{B^2} h_\itapers^2(x)h_\itapers^2(y)(\phi_{\k_1}(x-y)+\phi_{\k_2}(x-y)) dxdy.
\end{align*}
We like in the previous case split this into seven parts
$\{\widetilde{T}_j(\k)\}$. We find that
\begin{align*}
\widetilde{T}_1(\k)&=\lambda^3\int_{B^3} h_\itapers^2(x) h_\itapers(y)h_\itapers(v)\phi_{\k_1}(x-y)\left\{ \phi_{\k_2}(x-v)+  \phi_{-{\k_2}}(x-v)\right\}\;dxdydv\\
&=\lambda^3 H_\itapers(-\k_1)H_\itapers(-\k_2)\int_{{\mathbb{R}}^d}H_\itapers(\k')H_\itapers(\k_1+\k_2-\k')\;d\k'\\
&+
\lambda^3 H_\itapers(-\k_1)H_\itapers(\k_2)\int_{{\mathbb{R}}^d}H_\itapers(\k')H_\itapers(\k_1-\k_2-\k')\;d\k'.
\end{align*}
If $\k_1-\k_2,\k_1+\k_2\neq 0$ and is some fixed number, then this becomes negligible. The next term is
\begin{align*}
\widetilde{T}_2(\k)&=\lambda^3\int_{B^3}
h_\itapers(x) h_\itapers^2(y) h_\itapers(v)\phi_{\k_1}(x-y)\left\{ \phi_{\k_2}(y-v)+  \phi_{-{\k_2}}(y-v)\right\}dxdydv\\
&=\lambda^3 H_\itapers(\k_1)H_\itapers(-\k_2)\int_{{\mathbb{R}}^d}H_\itapers(\k')H_\itapers(\k_2-\k_1-\k')\;d\k'
\\
&+\lambda^3 H_\itapers(\k_1)H_\itapers(-\k_2)\int_{{\mathbb{R}}^d}H_\itapers(\k')H_\itapers(\k_2+\k_1-\k')\;d\k'.
\end{align*}
If $\k_1-\k_2,\k_1+\k_2\neq 0$ and is some fixed number, then this becomes negligible. The next term is
\begin{align*}
\widetilde{T}_3(\k)&=\lambda^3\int_{B^3} \left[\phi_{\k_2}(x-y)h_\itapers^2(v)h_\itapers(x)h_\itapers(y)+\phi_{\k_1}(x-y) h_\itapers^2(v)h_\itapers(x)h_\itapers(y)\right]\;
dxdydv\\
&=\lambda^3\left|H_{\itapers}(\k_2)\right|^2+\lambda^3\left|H_{\itapers}(\k_1)\right|^2.
\end{align*}
As $\k_1,\k_2>b_h$ the bandwidth this contribution becomes negligible. The next contribution in the expression takes the form
\begin{align*}
\widetilde{T}_4(\k)&=\int_{B^2}\lambda^2\left[ h_\itapers^2(x) h_\itapers^2(y) \phi_{\k_1}(x-y)\left\{ 
\phi_{\k_2}(x-y)+\phi_{-{\k_2}}(x-y)\right\}\right]dxdy\\
&=\lambda^2\left|\int_{{\mathbb{R}}^d}H_\itapers(\k')H_\itapers(\k_1+\k_2-\k')\;d\k'\right|^2+\lambda^2\left|\int_{{\mathbb{R}}^d}H_\itapers(\k')H_\itapers(\k_1-\k_2-\k')\;d\k'\right|^2.
\end{align*}
Then the next term is
\begin{align*}
\widetilde{T}_5(\k)&=2\lambda^2\int_{B^2}h_\itapers^3(x)h_\itapers(y)
\left.\left\{\phi_{\k_2}(x-y)+\phi_{\k_1}(x-y)\right\}\right]\,dxdy.
\end{align*}
Figuring out the size of this contribution is a little bit more complex. 
The integral will do a Fourier transform in $y$ of $h_\itapers(y)$ in $\k_1$ and $\k_2$, and also the conjugate will be taken. The Fourier transform in $y$ will be supported when $|\k_1|<b_h$ and $|\k_2|<b_h$, but will not be supported when either magnitude gets too large.  The second Fourier transform in $x$ will be somewhat more spread--this is because we are Fourier transforming $h_\itapers^3(x)$, which will be a third order convolution once Fourier transformed. But as long as we can use the joint concentration in $|\k_1|<b_h$ and $|k_2|<b_h$ this will still be negligible. 

The next term is given by
\begin{align*}
\widetilde{T}_6(\k)&=\lambda 
 \int_{B} h_\itapers^4(x)\,dx=\lambda \|h_\itapers\|_4^4=o(1).
\end{align*}
To see why this is true, note that if $h_\itapers(x)$ is constant then we can easily calculate the higher order norms. We find by using the Cauchy--Schwarz inequality that if $p\neq q$ that
\begin{align}
\nonumber
T_6^2(\k)&=\left[\int_B h_\itapers^2(x) h_\itapersalt^2(x)\;dx\right]^2\\
\nonumber
&\leq \int_B h_\itapers^4(x) \int_B h_\itapersalt^4(x)\\
\Rightarrow 0\leq \int_B h_\itapers^2(x) h_\itapersalt^2(x)\;dx&\leq \max\{\int_B h_\itapers^4(x), \int_B h_\itapersalt^4(x)\}.
\end{align}
We also note that
\begin{align}
1^2&=\left[\int_B 1\cdot h^2(x)\;dx\right]^2 \leq \int_B 1^2\;dx \cdot \int_B h^4(x)\;dx\\
\Rightarrow &\frac{1}{|B|}\leq \int_B h^4(x)\;dx.
\end{align}
We now need to determine an upper bound. We then look at
\begin{align}
\left[\int_B h^4(x)\;dx\right]^2&\leq \int_B 1^2\;dx \cdot \int_B h^8(x)\;dx=|B|\cdot \int_B h^8(x)\;dx.
\end{align}
Then we have
\begin{equation}
\frac{1}{|B|}\leq \int_B h^4(x)\;dx\leq \sqrt{|B|\cdot \int_B h^8(x)\;dx}.
\end{equation}
We can for any taper $h(x)$ calculate $\int_B h^4(x)\;dx$ explicitly. For the constant function we get
$\frac{1}{|B|}$. For tapers well--concentrated we would expect a similar decrease, but for any choice of taper we can calculate the value of the 4th norm explicitly.

Thus we understand a bit more about this term. Moving on to the next aspect of the computation,
finally, 
\begin{align*}
\widetilde{T}_7(\k)&=-\lambda^3 \int_{B^2} h_\itapers^2(x)h_\itapers^2(y)(\phi_{\k_1}(x-y)+\phi_{\k_2}(x-y)) dxdy\\
&=-\lambda^3\left|\int_{{\mathbb{R}}^d}H_\itapers(\k')H_\itapers(\k_1-\k')\;d\k'\right|^2-\lambda^3\left|\int_{{\mathbb{R}}^d}H_\itapers(\k')H_\itapers(\k_2-\k')\;d\k'\right|^2.
\end{align*}
This shows each individual contribution as long as 
$|\k_1-\k_2|,|\k_1+\k_2|>b_h$. This completes the proof. 
\end{proof}

\section{Proof of Proposition~\ref{Propcovar3b}}\label{proofcovar3b}
\begin{proof}
Assume $X$ satisfies the assumptions given for~\eqref{CLT} and~\eqref{CLT2}. Note from Proposition~\ref{DFTprop} that 
\begin{align}
    \nonumber
 0\leq   \var \left\{ J_h(\freq)\right\}     
        &=\lambda+\lambda^2\int_{R^d}\widetilde{f}^{(2)}(\freq') \left|H(\freq-\freq')\right|^2dw' \\
        \nonumber
        &\leq \lambda+\lambda^2 \|\widetilde{f}^{(2)}\|_0
        \int_{R^d} \left|H(\freq-\freq')\right|^2dw'\\
        &=\lambda+\lambda^2 \|\widetilde{f}^{(2)}\|_0<\infty.
    \end{align}
We can then deduce from~\cite[Theorem 6.1]{dasgupta2008asymptotic}
that $\tilde{J}_h(\freq)$ is uniformly integrable. 
However to be able to compute the covariance of the periodogram from the convergence of law to the Gaussian, then 
we need to show that $|\tilde{J}_\itapers(\freq)|^4$, or even 
$|\tilde{J}_\itapers(\freq_1)|^2|\tilde{J}_\itapersalt(\freq_2)|^2$
are uniformly integrable. We now apply 
~\cite[Theorem 6.2]{dasgupta2008asymptotic} and assume $\|\widetilde{f}^{(2)}\|_0$,  $\|\widetilde{f}^{(3)}\|_0$, $\|\widetilde{f}^{(4)}\|_0$,  $\|\widetilde{f}^{(5)}\|_0$ and
$\|\widetilde{f}^{(6)}\|_0$ are all finite which assures $|\tilde{J}_\itapers(\freq)|^4$ and 
$|\tilde{J}_\itapers(\freq_1)|^2|\tilde{J}_\itapersalt(\freq_2)|^2$ are uniformly integrable. We can then deduce that as $\tilde{J}_\itapers(\freq)$ has converged in law to a Gaussian random variable, the moments of $\tilde{J}_\itapers(\freq)$ can be computed from the Gaussian law.

It follows that Isserlis'~\cite{isserlis1918ona} theorem can be applied by using~\cite[Theorem 6.2]{dasgupta2008asymptotic} and so  
\begin{align*}
\cov& \{\widetilde{I}^t_\itapers(\k_1) , \widetilde{I}^t_\itapersalt(\k_2)\}=\E\{
\widetilde{J}_\itapers(\k_1)\widetilde{J}^\ast_\itapers(\k_1)\widetilde{J}_\itapersalt(\k_2)\widetilde{J}^\ast_\itapersalt(\k_2)\}+o(1)-\E\{
\widetilde{J}_\itapers(\k_1)\widetilde{J}^\ast_\itapers(\k_1)\}\E\{\widetilde{J}_\itapersalt(\k_2)\widetilde{J}^\ast_\itapersalt(\k_2)\}\\
&=\E\{
\widetilde{J}_\itapers(\k_1)\widetilde{J}_\itapersalt(\k_2)\}\E\{\widetilde{J}^\ast_\itapers(\k_1)\widetilde{J}^\ast_\itapersalt(\k_2)\}
+\E\{
\widetilde{J}_\itapers(\k_1)\widetilde{J}^\ast_\itapersalt(\k_2)\}\E\{\widetilde{J}^\ast_\itapers(\k_1)\widetilde{J}_\itapersalt(\k_2)\}+o(1)\\
&=o(1)+\left|\E\{
\widetilde{J}_\itapers(\k_1)\widetilde{J}^\ast_\itapersalt(\k_2)\}\right|^2.
\end{align*}
We note that the same sort of calculations as 
Proposition~\ref{DFTprop} can be applied and so for $\freq_1\neq \freq_2$
\begin{align*}
\E\{
\widetilde{J}_\itapers(\k_1)\widetilde{J}^\ast_\itapersalt(\k_2)\}  &=\E\{
({J}_\itapers(\k_1)-\lambda H_\itapers(\k_1))({J}^\ast_\itapersalt(\k_2)-\lambda H^\ast_\itapersalt(\k_1) )\}\\
&=\E\{
{J}_\itapers(\k_1){J}^\ast_\itapersalt(\k_2)\}-\lambda^2 H_\itapers(\k_1)H^\ast_\itapersalt(\k_2).
\end{align*}
We now calculate the covariance as
\begin{align}
\nonumber
    \cov \left\{ {J}_\itapers(\k_1),{J}_\itapersalt(\k_2)\right\}
    =& \E\{
{J}_\itapers(\k_1){J}^\ast_\itapersalt(\k_2)\}-\lambda^2 H_\itapers(\k_1)H^\ast_\itapersalt(\k_2)\\
    =&\lambda\delta_{\itapers\itapersalt}\delta_{\freq_1 \freq_2} +\iint_{R^d\times R^d}{\rho}^{(2)}(x-y)h_\itapers(x) e^{-2\pi i \freq_1\cdot  x} h^\ast_\itapersalt(y)e^{2\pi i \freq_2\cdot  y}\,dx\,dy\nonumber \\ 
    &-\lambda^2 H_\itapers(\k_1)H^\ast_\itapersalt(\k_2).
    \label{cov:expr}
\end{align}
We now define the renormalised quantity
\begin{equation}
    \widetilde{\rho}^{(2)}(z)=\frac{ {\rho}^{(2)}(z)-\lambda^2}{\lambda^2},\quad z\in{\mathbb{R}}^d.
\end{equation}
The expression in~\eqref{cov:expr} then can be simplified to
\begin{align}
\nonumber
    \cov &\left\{  {J}_\itapers(\k_1),{J}_\itapersalt(\k_2)\right\}\\
    &= \nonumber
    \lambda \delta_{\itapers\itapersalt}\delta_{\freq_1 \freq_2}+\lambda^2\iint_{R^d\times R^d}\left(\widetilde{\rho}^{(2)}(x-y)+1\right)h_\itapers(x)e^{-2\pi i \freq_1\cdot  x}h^\ast_\itapersalt(y)e^{2\pi i \freq_2\cdot  y}\,dx\,dy-\lambda^2 H_\itapers(\k_1)H^\ast_\itapersalt(\k_2)\\
    &=\lambda\delta_{\itapers\itapersalt}\delta_{\freq_1 \freq_2}+\lambda^2\iint_{R^d\times R^d}\widetilde{\rho}^{(2)}(x-y)h_\itapers(x)e^{-2\pi i \freq_1\cdot  x}h^\ast_\itapersalt(y)e^{2\pi i \freq_2\cdot  y}\,dx\,dy\nonumber \\
    &=\lambda\delta_{\itapers\itapersalt}\delta_{\freq_1 \freq_2}+ \lambda^2\cov_1 \left\{ {J}_\itapers(\k_1),{J}_\itapersalt(\k_2)\right\},
    \end{align}
    where we define
    \begin{equation}
        \cov_1 \left\{ {J}_\itapers(\k_1),{J}_\itapersalt(\k_2)\right\}
        \equiv \iint_{R^d\times R^d}\widetilde{\rho}^{(2)}(x-y)h_\itapers(x)e^{-2\pi i \freq_1\cdot  x}h^\ast_\itapersalt(y)e^{2\pi i \freq_2\cdot  y}\,dx\,dy.
    \end{equation}
    To simplify this expression we note that
    \begin{align}
    \nonumber
   \cov_1 \left\{  {J}_\itapers(\k_1),{J}_\itapersalt(\k_2)\right\}     &=\iint_{R^d\times R^d} 
        \widetilde{\rho}^{(2)}(x-y)h_\itapers(x)e^{-2\pi i \freq_1\cdot  x}h^\ast_\itapersalt(y)e^{2\pi i \freq_2\cdot  y}\,dx\,dy\\
        \nonumber
        &=\iint_{R^d\times R^d} 
        \int_{R^d}\widetilde{f}^{(2)}(\k')e^{2\pi i (x-y)\cdot \freq'}d\freq' \cdot h_\itapers(x)e^{-2\pi i \freq_1\cdot  x}h^\ast_\itapersalt(y)e^{2\pi i \freq_2\cdot  y}\,dx\,dy\\
        \nonumber
        &=\iint_{R^d\times R^d} 
        \int_{R^d}\widetilde{f}^{(2)}(\k')e^{-2\pi i x\cdot(\freq_1-\freq')}e^{2\pi i y\cdot(\freq_2-\freq')}d\freq' \cdot h_\itapers(x)h^\ast_\itapersalt(y)\,dx\,dy\\
        &=\iint_{R^d}\widetilde{f}^{(2)}(\freq') H_\itapers(\freq_1-\freq') H^\ast_\itapersalt(\freq_2-\freq')d\freq' .
    \end{align}
    Note that it is not dimensionally contradictory to Theorem~\ref{Propcovar} as the periodogram is the modulus square of the Fourier transform.
    We note that as $H_\itapers(\freq)$ has concentrated support we can apply similar arguments to those of Appendix D, as will follow.
    \end{proof}
    
\section{Proof of Lemma~\ref{Lemmasimplify}}\label{proofcovar4b}

\begin{Lemma}\label{Lemmasimplify}
Assume $X$ satisfies the assumptions given for~\eqref{CLT} and that $\|\widetilde{f}^{(j)}\|_0<\infty$ for $j=2,3,4,5,6$. 
Assume the two multitapers $h_\itapers(x)$ and $h_\itapersalt(x)$ are orthogonal and are well concentrated on the compact set ${\cal W}\subset\Rd$ with {{length scale $\bside$ so that for some chosen $\epsilon_\bside=o(1/\bside)$
\[\int_{\cal W}|H_\itapers(\freq)|^2\;d\freq=1-\epsilon_\bside.\]}}
Then assuming $\freq_1,\freq_2\in {\cal K}_\ntapers(B_\square(\bm{l}))$ 
\begin{align}
    \nonumber
        \cov_1 \left\{ {J}_\itapers(\k_1),{J}_\itapersalt(\k_2)\right\}
        &\equiv\int_{R^d}\widetilde{f}^{(2)}(\freq') H_\itapers(\freq_1-\freq') H^\ast_\itapersalt(\freq_2-\freq')d\freq'+o(1) \\
        &=\widetilde{f}^{(2)}(\freq_1)\delta_{\itapersalt\itapers}
   \delta_{\freq_1 \freq_2}+{\cal O}(1/\bside).
    \end{align}
\end{Lemma}

\begin{proof}
We assume that for a choice of $\varepsilon_l$ we can define a wavenumber region ${\cal W}$
\begin{equation}
\int_{{\cal W}}\left| H_\itapers(\freq)\right|^2\;d\freq=1-\varepsilon_{\bside},\quad {\mathrm{where}}\quad \varepsilon_{\bside}=o(1).
\end{equation}
We also assume that $\tilde{f}(\freq)$ is upper bounded by $\|\widetilde{f}\|_0$. We then have
\begin{align}
\cov_1 \left\{  {J}_\itapers(\k_1),{J}_\itapersalt(\k_2)\right\}   &= \int_{\mathbb{R}^d\backslash {\cal W}}
H_\itapers(\freq_1-\freq') H^\ast_\itapersalt(\freq_2-\freq')\widetilde{f}(\freq')\;d\freq'
+\int_{ {\cal W}}
H_\itapers(\freq_1-\freq') H^\ast_\itapersalt(\freq_2-\freq')\widetilde{f}(\freq')\;d\freq'.
\end{align}
We note that
\[\left|\int_{\mathbb{R}^d\backslash {\cal W}}
H_\itapers(\freq_1-\freq') H^\ast_\itapersalt(\freq_2-\freq')\widetilde{f}(\freq')\;d\freq'\right|
\leq \|\widetilde{f}\|_0\{1-\{ 1-\varepsilon_{\bside}\}\}.\]
We can yet again utilise the Taylor expansion of
\eqref{tildetaylor} inside ${\cal W}$.
We find
\begin{align}
\nonumber
&\cov_1 \left\{  {J}_\itapers(\k_1),{J}_\itapersalt(\k_2)\right\} =\int_{ {\cal W}}
H_\itapers(\freq_1-\freq') H^\ast_\itapersalt(\freq_2-\freq')\widetilde{f}(\freq')\;d\freq'
\left\{1+o(1)\right\}\\
\nonumber
&=\int_{ {\cal W}}
H_\itapers(\freq_1-\freq') H^\ast_\itapersalt(\freq_2-\freq')\left\{\widetilde{f}(\freq_1)+\nabla \widetilde{f}(\freq_1)^T\left(\freq'-\freq_1\right)+
\frac{1}{2}\left(\freq'-\freq_1\right)^T\tilde{\mathbf{H}}_f(\freq'')\left(\freq'-\freq_1\right)\right\}\;d\freq'
\left\{1+o(1)\right\}\\
\nonumber&=\delta_{\freq_1 \freq_2}\int_{ {\cal W}}
H_\itapers(\freq_1-\freq') H^\ast_\itapersalt(\freq_1-\freq')\widetilde{f}(\freq_1)
\;d\freq'+
\int_{ {\cal W}}H_\itapers(\freq_1-\freq') H^\ast_\itapersalt(\freq_2-\freq')
\\
&\nonumber \times\left\{\nabla \widetilde{f}(\freq_1)^T\left(\freq'-\freq_1\right)+
\frac{1}{2}\left(\freq'-\freq_1\right)^T\tilde{\mathbf{H}}_f(\freq'')\left(\freq'-\freq_1\right)\right\}\;d\freq'
\left\{1+o(1)\right\}
\\
&=\widetilde{f}(\freq_1)\delta_{\itapersalt\itapers}\delta_{\freq_1 \freq_2}+{\cal O}(1/\sqrt{\bside})+
\int_{ {\cal W}}H_\itapers(\freq_1-\freq') H^\ast_\itapersalt(\freq_2-\freq')\frac{1}{2}\left(\freq'-\freq_1\right)^T\tilde{\mathbf{H}}_f(\freq'')
\left(\freq'-\freq_1\right)\;d\freq'.
\end{align}
We note that
\begin{align*}
&\left\|\int_{ {\cal W}}H_\itapers(\freq_1-\freq') H^\ast_\itapersalt(\freq_2-\freq')\frac{1}{2}\left(\freq'-\freq_1\right)^T\tilde{\mathbf{H}}_f(\freq'')
\left(\freq'-\freq_1\right)\;d\freq' \right\|^2\\
&\leq |\tilde{\mathbf{H}}_f(\freq'')|
\int_{ {\cal W}}H_\itapers(\freq_1-\freq') H^\ast_\itapersalt(\freq_2-\freq')\frac{1}{2}\left(\freq'-\freq_1\right)^T
\left(\freq'-\freq_1\right)\;d\freq'\\
&=|\tilde{\mathbf{H}}_f(\freq'')|
\int_{ {\cal W}}H_\itapers(\freq_1-\freq') H^\ast_\itapersalt(\freq_2-\freq') w^2 d\freq=
{\cal O}(1/\bside).
\end{align*}
Then
\begin{align*}
\cov \left\{  {J}_\itapers(\k_1),{J}_\itapersalt(\k_2)\right\}
&=\lambda\delta_{\itapers\itapersalt}\delta_{\freq_1 \freq_2}  +\lambda^2\cov_1 \left\{  {J}_\itapers(\k_1),{J}_\itapersalt(\k_2)\right\}\\
&=\lambda\delta_{\itapers\itapersalt} \delta_{\freq_1 \freq_2} +\lambda^2\widetilde{f}(\freq)\delta_{\itapersalt\itapers}\delta_{\freq_1 \freq_2}+{\cal O}(1/\sqrt{\bside})+
{\cal O}(1/\bside).
\end{align*}
\end{proof}



\section{Proof of Proposition~\ref{BesselE}}\label{proofBesselE}
\begin{proof}
An interesting question is what if we use Diggle's estimator even if the point process is not isotropic. We recall that the estimator takes in 2D the form 
\begin{equation}
\bar{I}_0(|\freq|)=\hat{\lambda}+|B|^{-1}\sum_{x,y\in X\cap B}^{\neq}J_0\left(2\pi |\freq| \cdot \|x-y\|\right).
\end{equation}
We can still compute the estimator for any observed point-process $X$, even if $X$ was not an isotropic process.
The estimator $\bar{I}_0(|\freq|)$ has expectation
\begin{align}
\label{1steq}
\E \{\bar{I}_0(|\freq|)\}&={\lambda}+|B|^{-1}\sum_{x,y\in X\cap B}^{\neq} \E J_0\left(2\pi |\freq|\cdot \|x-y\|\right)\\
\nonumber
&={\lambda}+|B|^{-1}\int_B\int_{B_{-x}}  J_0\left(2\pi |\freq|\cdot |z|\right)\rho^{(2)}(z)\,dz\,dx\\
&={\lambda}+|B|^{-1}\int_B\int_{B_{-x}}  J_0\left(2\pi |\freq|\cdot |z|\right)\left\{\rho^{(2)}(z)-\lambda^2\right\}\,dz\,dx+\lambda^2|B|^{-1}\int_B\int_{B_{-x}}  J_0\left(2\pi |\freq|\cdot |z|\right)\,dz\,dx
\nonumber\\
&={\lambda}+|B|^{-1}\int_{{\mathbb{R}}^d}  
|B\times B_{-z}|\cdot
J_0\left(2\pi |\freq|\cdot |z|\right)\left\{\rho^{(2)}(z)-\lambda^2\right\}\,dz+\lambda^2|B|^{-1}\int_B|B\times B_{-z}| \cdot  J_0\left(2\pi |\freq|\cdot |z|\right)\,dz
\nonumber\\
&=\E \bar{I}_0^{(1)}(|\freq|)+\E\bar{I}_0^{(2)}(|\freq|)+\E\bar{I}_0^{(3)}(|\freq|),
\label{expect-split}
\end{align}
the latter defining the form of these three contributions.

Just like before we shall explicitly demonstrate the effects of this convolution. We have that the Fourier transform of the Bessel function is
\begin{align}
\nonumber
{\cal F}\left\{J_0\left(2\pi |\freq|\cdot |z|\right) \right\}(u)
&=\iint J_0\left(2\pi |\freq|\cdot |z|\right) e^{-2\pi i z\cdot u}dz\\
\nonumber
&=\int_0^{\infty}\int_0^{2\pi} J_0\left(2\pi |\freq| r\right) e^{-2\pi i r |u| \cos(\phi-\phi_u)}2\pi |\freq|d|\freq| d\phi\\
&=\int_0^{\infty} J_0\left(2\pi |\freq|r\right)
2\pi |\freq| d|\freq| \int_0^{2\pi}e^{-2\pi i |\freq| |u| \cos(\phi-\phi_u)}
d\phi.
\end{align}
We now note that
\[J_0(x)=\frac{1}{2\pi}\int_{-\pi}^{\pi}e^{-ix\sin t}dt .\]
Thus
\begin{align}
\nonumber
{\cal F}\left\{J_0\left(2\pi |\freq|\cdot |z|\right) \right\}
(u)=\int_0^{\infty} J_0\left(2\pi |\freq|r\right)
(2\pi)^2  J_0\left(2\pi r |u| \right)\;rdr
=(2\pi)^2\frac{\delta(|\freq|-|u|)}{|\freq|}.
\end{align}
We now use the convolution theorem to deduce that:
\begin{align}
\nonumber
\E\bar{I}_0^{(3)}(|\freq|)&=|B|^{-1}\int_B|B\times B_{-z}| \cdot  J_0\left(2\pi |\freq||z|\right)\,dz\\&=\int_{{\mathbb{R}}^d}
\frac{\delta(|\freq|-|u|)}{|\freq|}|B|^{-1}T(B,u)\;du.
\end{align}
Thus the low magnitude wavenumber bias is determined by this term. The reason this is a low wavenumber term is the form of $|T(B,u)|$: this is concentrated near $|u|=0$, and on top the convolution is aggregating over all wavenumbers with the same modulus. We have assumed rectangular sampling domain.

The second term in the expectation of~\eqref{expect-split} takes the form of 
\[\E\bar{I}_0^{(2)}(|\freq|)=
|B|^{-1}\int_{{\mathbb{R}}^d}  
|B\times B_{-z}|\cdot
J_0\left(2\pi |\freq||z|\right)\left\{\rho^{(2)}(z)-\lambda^2\right\}\,dz.\]
As multiplications in wavenumber are convolutions in space, we need to compute
\begin{align}
\nonumber
{\cal F}\left\{ |B|^{-1}\cdot |B\times B_{-z}|\cdot
J_0\left(2\pi |\freq||z|\right)\right\}(u)&=\int_{{\mathbb{R}}^d}|B|^{-1}\cdot |B\times B_{-z}|\cdot
J_0\left(2\pi |\freq||z|\right) e^{-2\pi i u^Tz }\;dz\\
\nonumber
&=\int_{{\mathbb{R}}^d}
\frac{\delta(|\freq|-|u'|)}{|\freq|}|B|^{-1}T(B,u'-u)\;du'
\\
{\cal F}\left\{\rho^{(2)}(z)-\lambda^2 \right\}(u)
&=\lambda^2\widetilde{f}(u).
\end{align}
With these pieces we have that
\begin{equation}
\label{EI2b}
\E\bar{I}_0^{(2)}(|\freq|)=\int_{{\mathbb{R}}^d} \int_{{\mathbb{R}}^d} \lambda^2\widetilde{f}(u') 
\frac{\delta(|\freq|-|u''|)}{|\freq|}|B|^{-1}T(B,u''-u')\;du''
du'.
\end{equation}
We now see that further blurring is present in~\eqref{EI2b} from averaging out the direction. In the standard non-isotropic case this was just a convolution of $\widetilde{f}(\freq)$ with
$|B|^{-1}T(B,\freq)$. To get a feeling for its behaviour we note that as $|B|^{-1}T(B,\freq)\rightarrow \delta(\freq)/(2\pi)$. 
In this case we get for $d=2$
\begin{align}
\nonumber
\E\bar{I}_0^{(2)}(|\freq|)&\rightarrow\int_{{\mathbb{R}}^2} \int_{{\mathbb{R}}^2} \lambda^2\widetilde{f}(u') 
\frac{\delta(|\freq|-|u''|)}{2\pi |\freq|}\delta(u''-u')\;du''
du'.\end{align}
So this expression is what follows; an orientationally averaged spectral density. Now assume additionally that the spectrum is isotropic, namely  $\widetilde{f}(u')=\widetilde{f}_0(|u'|)$, and then we get as $|B|\rightarrow\infty$,
\begin{align}
\E\bar{I}_0^{(2)}(|\freq|)\nonumber
&\rightarrow\int_{{\mathbb{R}}^2}  \lambda^2\widetilde{f}_0(|u''|)
\frac{\delta(|\freq|-|u''|)}{2\pi |\freq|}\;du''\\
\nonumber
&= \lambda^2 \int_{{\mathbb{R}}^+} \widetilde{f}_0(u'') 
\frac{\delta(|\freq|-|u''|)}{|\freq|}|u''|\;d|u''|\\
&= \widetilde{f}_0(|\freq|).
\end{align}
This shows that asymptotically we would recover the isotropic spectral density from this component. 
Finally we can write
\[\E\bar{I}_0(|\freq|)=\lambda+\widetilde{f}_0(|\freq|)+o(1)+\int_{{\mathbb{R}}^d}
\frac{\delta(|\freq|-|u|)}{|\freq|}|B|^{-1}T(B,u)\;du.\]
\end{proof}


\section{Additional Figures}

\begin{figure}[htp]
	\centering
	\includegraphics[width=.8\linewidth]{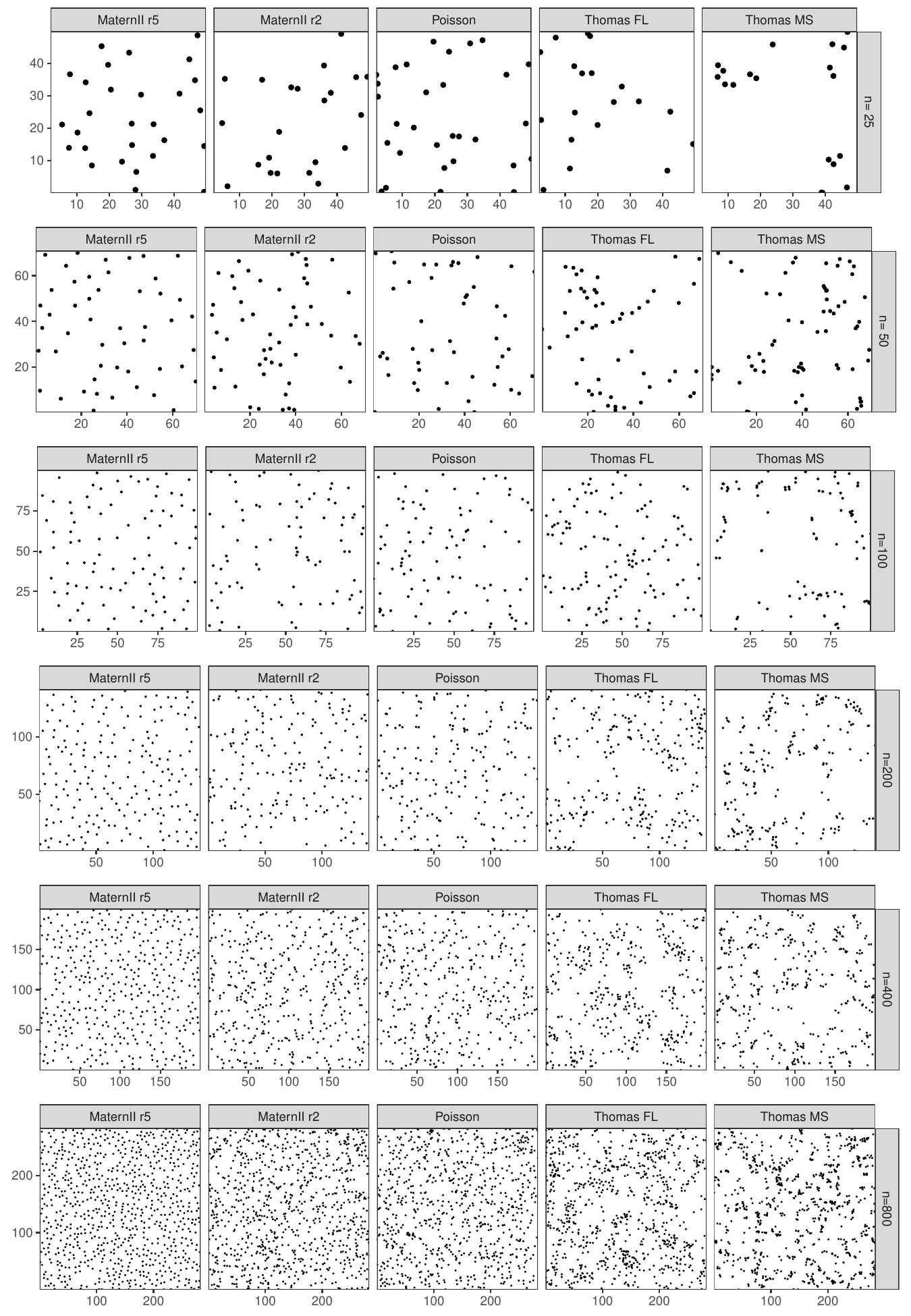}
	\caption{Examples of the simulated patterns used in the simulation trials for studying the quality of the estimators. Details of each model are given in Section \ref{sec:simulations}. Note that the exact number of points in each realisation varies slightly.}
	\label{figS:patterns}
\end{figure}

\begin{figure}[htp]
	\centering
	\includegraphics[width=0.8\linewidth]{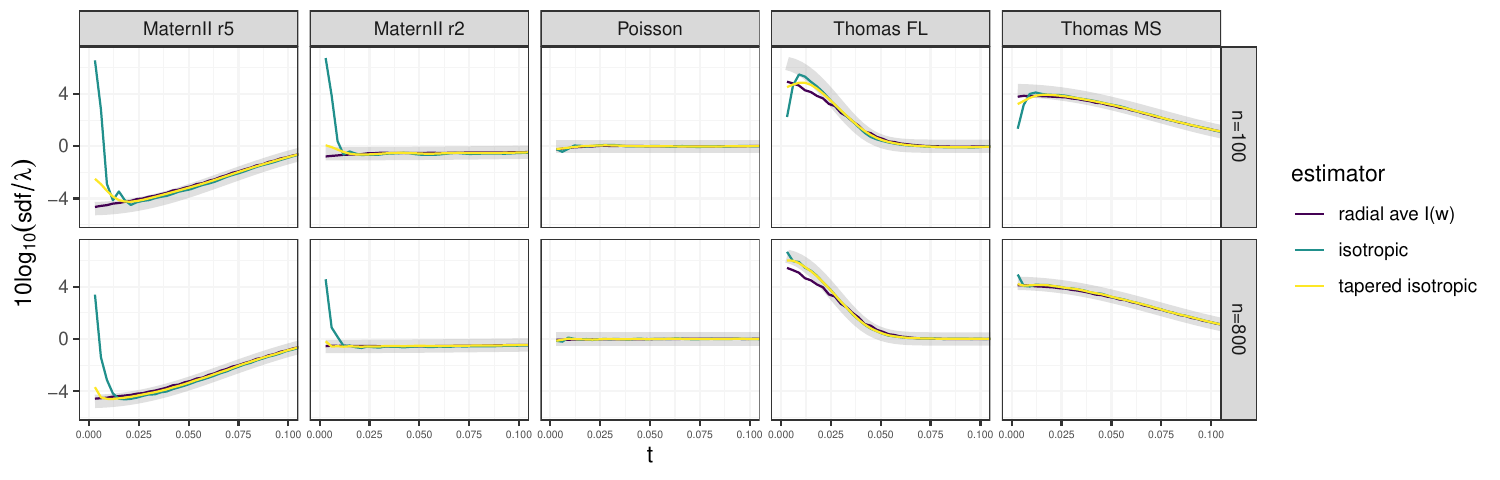}
	\caption{Isotropic $sdf/\lambda$, mean estimates versus true curve (thick gray line).}
	\label{figS:rotave-v-isotropic}
\end{figure}

\newpage

\end{document}